\newcommand{\ignore}[1]{}
\newtheorem{theorem}{Theorem}
\newtheorem{corollary}[theorem]{Corollary}
\newtheorem{lemma}[theorem]{Lemma}
\newtheorem{property}[theorem]{Property}
\newtheorem*{remark2}{Estimating $x_{1,1}$ and $x_{2,1}$}
\newtheorem{definition}[theorem]{Definition}
\newtheorem{fact}[theorem]{Fact}
\newtheorem{mygame}{Game}
\def\calG{\mathcal{G}} \def\xx{\mathbf{x}} \def\kk{\mathbf{k}} \def\ss{\mathbf{s}} \def\k{{K}} \def\pay{\textsf{payoff}}  \def\bxx{{\mathcal{X}}} \def\yy{\mathbf{y}}
\def\Pr{\text{Pr}} \def\tt{\mathbf{t}} \def\calX{\mathcal{X}} \def\AA{\mathbf{A}}
\def\yy{\mathbf{y}} \def\calL{\mathcal{L}} \def\calS{\mathcal{S}} \def\calT{\mathcal{T}}
\def\colorful{1} \def\byy{\mathcal{Y}} \def\BB{\mathbf{B}} \def\CC{\mathbf{C}}
\begin{document}

\title{On the Complexity of Nash Equilibria in Anonymous Games}
\author{Xi Chen \\ Columbia University 
\and David Durfee \\ Georgia Institute of Technology  
\and Anthi Orfanou \\ Columbia University} 
\date{}
\maketitle 

\begin{abstract}
We show that the problem of finding an $\epsilon$-approximate Nash equilibrium
  in an {anonymous} game with seven pure strategies is complete in PPAD, when  
  the approximation parameter $\epsilon$ is exponentially small in the number of players.
\end{abstract}

\section{Introduction}

The celebrated theorem of Nash \cite{NASH50, NASH51} states that every game has an equilibrium point. 
The concept of Nash equilibrium has been tremendously influential 
  in economics and social sciences ever since (e.g., see \cite{NashSocial}),
  and its computation has been one the 
  most well-studied problems in the area of Algorithmic Game Theory. 
For {normal form games} with a bounded number of players,
  much progress has~been made during the past decade in 
  understanding both the complexity of Nash equilibrium 
  \cite{AbbottKaneValiant,ChenDengTengSparse, ChenTengValiant,DGPJournal, 2Nash,  Mehta14}
as well as its efficient approximation \cite{Lipton,BaranyVempalaVetta,KontogiannisPana,DaskalakisMehtaPapa,  
   DaskalakisMehtaPapa2, BosseMarkakis, KannanTheobald,TSepsilon, DederNazerzadehSaberi,
  KontogiannisSpirakis,PlanarP,TsaknakisS}.
  
In this paper we study a large and important 
  class of \emph{succinct multiplayer} games  called \emph{anonymous games} (see   
  \cite{sch73, Mil96, Blo99, Blo05, Kal05} for studies of such games in the economics~literature). 
These are special multiplayer games in that the payoff of each player depends only on 
  (1)~the pure strategy of the player herself, and 
  (2) the \emph{number} of other players playing each pure strategy,
  instead of the full pure strategy profile.
In such a game, the (expected) payoff of a player is \emph{highly symmetric}
  over (pure or mixed) strategies of other players.
For instance, two players switching~their strategies would not affect the payoff of any other player.
A consequence of this very special payoff structure is that 
  $O(\alpha n^{\alpha-1})$ numbers suffice to completely
  describe the payoff function of a player, when 
  there are $\alpha$ pure strategies shared by $n$ players. 
Notably this is polynomial in the number of players
  when $\alpha$ is bounded, and hence the game is {succinctly representable}.
\emph{Throughout the paper, we focus on succinct anonymous games with a bounded number 
  of pure strategies.}

Other well-studied multiplayer games with a succinct representation include 
  graphical, symme\-tric, and congestion games (for more details see \cite{PapRough08}).   
While graphical and congestion games are both known to be hard to solve 
  \cite{fabrikant04,AckermannPLS,Skopalik},
  there is indeed a polynomial-time algorithm for computing an exact Nash equilibrium
  in a symmetric game \cite{PapRough08}.  
Because anonymous games generalize symmetric games 
  by allowing player-dependent payoff functions,
it is a natural question to ask whether there is an efficient algorithm 
  for finding an (exact or approximate) Nash equilibrium in an anonymous game as well.

Culminating in a sequence of beautiful papers \cite{PTAS07, 2PTAS08,cPTAS08,obliv, dask14}
  Daskalakis and Papadimitriou obtained a polynomial-time approximation scheme
  (PTAS) for $\epsilon$-approximate Nash equilibria in anonymous games with a bounded number of strategies
  (see more discussion on related work in Section \ref{sec:work}).
However, the complexity of finding an exact Nash equilibrium in such games
  remains open, and was conjectured to be hard for PPAD 
in \cite{obliv,dask14}.\hspace{0.06cm}\footnote{When the 
number of pure strategies is a sufficiently large constant, an anonymous 
  game with rational payoffs may not have any rational equilibrium (e.g., 
    by embedding in it a rational three-player game with no rational equilibrium).
But for the case of two strategies, it remains unclear as whether every 
  rational anonymous game has a rational Nash equilibrium, which was posed as
  an open problem in \cite{dask14}.}

\def\anonymous{\textsc{Anonymous}}

In this paper we give an affirmative answer to the conjecture of Daskalakis and Papadimitriou,
  by showing that it is PPAD-complete to find an $\epsilon$-approximate Nash equilibrium
  in an anonymous game, when the approximation parameter $\epsilon$ is exponentially small in $n$.
To formally state our main result, let $(\alpha,c)$-\anonymous\ denote the problem of finding
  a $(2^{-n^c})$-approximate Nash equilibrium in an anonymous game
  with $\alpha$ pure strategies and payoffs from $[0,1]$.\hspace{0.06cm}\footnote{Since 
  we are interested in the additive approximation,
  all payoffs are normalized to take values in $[0,1]$.}

Here is our main theorem: 

\begin{theorem}\label{main-theorem}
For any $\alpha\ge 7$ and $c>0$, the problem
  $(\alpha,c)$-\anonymous\ is PPAD-complete.
\end{theorem}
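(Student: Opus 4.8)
The plan is to prove the two directions separately. Membership in PPAD is routine: an anonymous game with $\alpha$ strategies and $n$ players is a succinct game whose expected payoffs are polynomial-time computable --- each player's payoff table has $O(\alpha n^{\alpha-1})$ entries, and the distribution of the count vector $\kk$ of the other $n-1$ players is a product of multinomials that one computes in time $\mathrm{poly}(n)$ by convolution --- so $(\alpha,c)$-\anonymous{} lies in PPAD by the standard reduction to Brouwer/\textsc{End-of-Line} (cf.\ \cite{DGPJournal}). The entire difficulty is PPAD-hardness, which I would establish by a reduction from the \emph{generalized circuit problem} (PPAD-complete by \cite{2Nash}): from a generalized circuit $\calG$ over wires $W$, with the usual gate set (assignment of a rational constant, scaling, addition, subtraction, comparison, and the Boolean gates) and error parameter $\delta$, I build an anonymous game with $N=\mathrm{poly}(|W|)$ players and exactly seven pure strategies whose Nash equilibria project onto the $\delta$-approximate solutions of $\calG$.

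The skeleton is the familiar ``game gadget'' template: each wire $w$ of $\calG$ is carried by a designated block of players whose mixed strategies encode the value $x_w\in[0,1]$, each gate becomes a gadget --- a small set of auxiliary players whose equilibrium conditions force the correct output given (approximately) correct inputs --- and the cyclic wiring of $\calG$ closes the loop, so that a globally consistent assignment is exactly a Nash equilibrium. Values are held in place by generalized-matching-pennies gadgets, which equate the expected payoffs of a player across two of her strategies and hence can pin an exact target probability; comparison gates are realized by threshold gadgets that read a realized count $\kk_s$ and are therefore ``brittle'' near the threshold, which is precisely the behaviour the comparison gate in $\calG$ permits. The one genuinely new ingredient, forced by anonymity, is that a player's payoff sees only the aggregate count vector $\kk\in\{0,\dots,N\}^{7}$ and never the identities of the other players, so there is no native way to deliver wire $w$ to the specific gadget that consumes it: at any moment only a bounded number of wires can be made ``visible'' by parking their blocks on otherwise-idle strategies, so routing all $\mathrm{poly}(|W|)$ wires to their consumers requires a copying/permutation network of $\mathrm{poly}(|W|)$ stages layered into the game.

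Tracking the quantitative effect of approximation through this network is what fixes the regime of the theorem. Each gadget --- a matching-pennies pin, a threshold, a Boolean gate run on near-pure ``bit players'', or a routing switch --- can be verified to favour its intended behaviour by a payoff gap that is at least an inverse-polynomial fraction of the gap enforcing its inputs; composing $\mathrm{poly}(|W|)$ of them attenuates the guaranteed gap to $2^{-\mathrm{poly}(N)}$, so only an $\epsilon$-approximate equilibrium with $\epsilon\le 2^{-\mathrm{poly}(N)}$ is forced to respect every gadget --- which is why the problem becomes tractable for larger $\epsilon$ (consistent with the Daskalakis--Papadimitriou PTAS) and hard precisely in the exponentially-small regime. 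Since $2^{-N^{c}}$ decays faster than any inverse polynomial in $N$, a single construction --- with $N$ padded as a function of $c$ so that $N^{c}$ dominates the attenuation exponent --- handles every fixed $c>0$; and since the seven strategies are engineered to host the routing labels, the threshold gadgets, and the Boolean gadgets simultaneously, ignoring any surplus strategies handles every $\alpha\ge 7$. From any $2^{-N^{c}}$-equilibrium of the resulting game one reads the bit blocks, recovers an assignment to $W$, and checks that it is a $\delta$-solution of $\calG$; combined with PPAD-hardness of the generalized circuit problem this yields Theorem~\ref{main-theorem}.

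The step I expect to be the main obstacle is exactly the tension between the anonymity barrier and the precision bookkeeping. Because a realized count fluctuates by $\Theta(\sqrt N)$ about its mean, anything a gadget reads off a block of $\Theta(N)$ players is only $\Theta(1/\sqrt N)$-accurate, so the genuinely logical part of the computation must be confined to a robust Boolean layer on bit players while the analog counting layer is used only where $1/\sqrt N$ slack is harmless; simultaneously one must certify that, despite all this slack and despite routing every wire through a long switching network, every matching-pennies pin is still pinned to within $2^{-N^{c}}$ in any $\epsilon$-approximate equilibrium. Designing the gadget library and its quantitative slack bounds so that all of this coexists on only seven strategies --- rather than the larger constant a naive layering would spend --- is the delicate heart of the proof; once that library is in hand, the reduction itself is a bookkeeping exercise.
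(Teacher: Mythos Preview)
Your membership argument is fine and matches the paper's in spirit. The hardness plan, however, is the generalized-circuit gadget template, and that is precisely the route the paper tried first and abandoned: the authors report (Section~1.3 and its footnote) that they could not build even an equality gadget $G_{=}$ as an anonymous game, and they give an impossibility sketch for the natural $4$-player, $2$-strategy version. Your proposal acknowledges the anonymity barrier but offers only a verbal ``routing/permutation network of $\mathrm{poly}(|W|)$ stages'' as the fix, with no concrete gadget in hand. With seven strategies total and no notion of time in a one-shot game, ``parking blocks on otherwise-idle strategies'' cannot separate $\mathrm{poly}(|W|)$ wires: any reader sees only the single count vector $\kk$, and two wire-blocks using the same strategy are indistinguishable regardless of what payoff functions you give them. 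So the proposal as written is missing the central idea---a mechanism that lets one player extract, from the histogram alone, the mixed strategy of a \emph{specific} other player---and your own last paragraph concedes that this gadget library is ``the delicate heart of the proof'' yet to be designed.

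The paper's actual mechanism is quite different and worth knowing. Rather than route wires, it \emph{breaks anonymity by scale}: a base ``radix game'' $\calG_{n,N}^*$ forces, in every well-supported equilibrium of any nearby game, player $P_i$ to put total mass $x_{i,1}+x_{i,2}\approx \delta^i$ on the two duplicated strategies $\{s_1,s_2\}$, with $\delta=1/N$ and $N=2^n$ (Lemma~\ref{lem:perturb}). Because the $n$ main players now sit at exponentially separated scales, an \emph{estimation lemma} (Lemma~\ref{lem:estimation-lemma}) shows how to compute, for each pair $(\ell,j)$, explicit coefficients $B^{[\ell,j]}_\kk$ so that the linear form $\sum_\kk B^{[\ell,j]}_\kk\Pr_\calX[P_\ell,\kk]$ recovers $x_{j,1}$ to additive error $O(j^2\delta^{j+1})$. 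These linear forms are then used to perturb $P_\ell$'s payoffs on $s_1,s_2$ so that $u_\ell(s_1,\calX)-u_\ell(s_2,\calX)$ matches the payoff difference $\AA_{2\ell-1}\cdot\yy-\AA_{2\ell}\cdot\yy$ of a given polymatrix game $\AA$; hardness then comes from polymatrix games rather than generalized circuits. The exponential precision requirement arises not from error accumulation through a long network of gadgets but from the need to keep the perturbations smaller than $\kappa=\prod_i\delta^i$ so that the scaling property survives. If you want to salvage your outline, the piece you must supply is an analogue of this scaling-plus-estimation idea; the gate-gadget template by itself does not get there.
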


\emph{The greatest challenge to establishing the PPAD-completeness result
  stated above
  is 
  posed by the rather complex but also highly symmetric 
  payoff structure of anonymous games.}
Before discussing our approach and techniques in Section \ref{sec:ours},
  we first review related work in Section \ref{sec:work}, then
  define anonymous games formally and introduce some useful notation in Section \ref{sec:pre}.

\subsection{Related Work}\label{sec:work}

Anonymous games have been studied extensively in the economics literature 
  \cite{sch73, Rashid83, Mil96, Blo99, Blo05, Kal05, Sandholm05}, 
where the game being considered is usually nonatomic and consists~of~a 
  continuum of players but a finite number of strategies.
For the discrete setting, two~special families of 
  anonymous games are symmetric games \cite{PapRough08,Bra09}
  and congestion games \cite{ros}.
\cite{PapRough08} gave a polynomial-time for finding an exact Nash equilibrium in a symmetric
  game.
For congestion games, PLS-completeness of 
  pure equilibria was established in 
  \cite{fabrikant04,AckermannPLS,Skopalik}\hspace{0.03cm}\footnote{These
  PLS-hardness results have no implication to the setup of this paper 
  since the number of pure strategies in the congestion
  games considered there are unbounded.},
and efficient approximation algorithms for various 
  latency functions were obtained in \cite{CFGS1,CFGS2,Chien07}.

While an anonymous game does not possess a pure Nash equilibrium in general,
  it was shown in \cite{PTAS07, Azrieli, dask14} that
  when the payoff functions are $\lambda$-Lipschitz,
there exists an $\epsilon$-approximate pure Nash equilibrium
  and it can be found in polynomial time, where
  $\epsilon$ has a linear dependency on $\lambda$.
Furthermore, in \cite{BRE12} Babichenko presented a best-reply dynamic 
  for $\lambda$-Lipschitz anonymous games with two strategies 
  which reaches an approximate pure equilibrium in $O(n\log n)$ steps.

Regarding our specific point of interest, i.e., (mixed) Nash equilibria
  in anonymous games with 
  a scaling number of players but a non-scaling number of strategies,  
  there have been a sequence of positive and negative results obtained by
  Daskalakis and Papadimitriou \cite{PTAS07, cPTAS08,2PTAS08, obliv} (summarized
  in the journal version \cite{dask14}). 
We briefly review these results below.

In \cite{PTAS07}, Daskalakis and Papadimitriou 
  presented a PTAS for finding an $\epsilon$-approximate Nash equilibrium
  in an anonymous game with two pure strategies,
  with running time $n^{O(1/\epsilon^2)}\cdot U$, where $U$ denotes
  the number of bits required to describe the payoffs. 
The running time was subsequently improved in \cite{2PTAS08} to 
  $\text{poly}(n)\cdot (1/\epsilon)^{O(1/\epsilon^2)}\cdot U$. 
The first PTAS in \cite{PTAS07} is based on the existence of an $\epsilon$-approximate
  Nash equilibrium consisting of integer multiples of $\epsilon^2$,
  while the second PTAS in \cite{2PTAS08} is
  based on the existence of an $\epsilon$-approximate Nash equilibrium
  satisfying the following special property: either at most $O(1/\epsilon^3)$ players 
  play mixed strategies, or all players who mix~play the same mixed strategy.
Later \cite{cPTAS08} extended the result 
  of \cite{PTAS07}, giving the only known PTAS for anonymous games with 
  any bounded number of pure strategies with time 
  $n^{g(\alpha,1/\epsilon)}\cdot U$ for some function $g$ of $\alpha$,
  number of pure strategies, and $1/\epsilon$.

All three PTAS obtained in \cite{PTAS07,2PTAS08,cPTAS08} are
  so-called \emph{oblivious} algorithms \cite{obliv},~i.e.,
  algorithms that enumerate a set of mixed strategy profiles that is independent of the 
  input game as candidates for approximate Nash equilibria
  (hence, the game is used only to verify if a given mixed strategy profile
  is an $\epsilon$-approximate Nash equilibrium).
In \cite{obliv}, Daskalakis and Papadimitriou showed that 
  any oblivious algorithm for anonymous games must have running
  time exponential in $1/\epsilon$.
In contrast \vspace{-0.036cm}to this negative result, they 
  also presented a {\it non-oblivious} PTAS for two-strategy anonymous games  
  with running time $\text{poly}(n)\cdot (1/\epsilon)^{O(\log^2 (1/\epsilon))} \cdot U$. \vspace{-0.04cm}

\subsection{Anonymous Games and Polymatrix Games\vspace{-0.06cm}}\label{sec:pre}

Before giving a high-level description of our approach
  and techniques in Section \ref{sec:ours}, we first give a formal definition of
  anonymous games and introduce some useful notation. 
Consider a multiplayer game with $n$ players  $[n]=\{1,\ldots,n\}$ 
  and $\alpha$ pure strategies $[\alpha]=\{1,\ldots,\alpha\}$ with $\alpha$ being a constant.
For each pure strategy $b\in [\alpha]$, let
  $\psi_b(\tt)$ denote the number of $b$'s in a tuple $\tt\in [\alpha]^{n-1}$,
and define $\Psi(\tt) = (\psi_{1}(\tt),\ldots, \psi_{\alpha}(\tt)),$
  which we will refer to as the \emph{histogram} of pure strategies in $\tt$.

In an anonymous game, the payoff of each player $p\in [n]$~depends only on 
  $\Psi(\ss_{-p})$ and her own strategy $s_p$, given a pure strategy profile $\ss\in [\alpha]^n$. 
(We follow the convention
  and use $\ss_{-p}\in [\alpha]^{n-1}$ to denote the 
  pure strategy profile of the $n-1$ players other than player $p$ in $\ss$.)
Informally, $\Psi(\ss_{-p})$ can be described as 
  what player $p$ ``sees''   in the game when $\ss$ is played. 

We now formally define {anonymous games}.

\begin{definition}
An {anonymous game} $\calG = (n,\alpha,\{\textsf{\emph{payoff}}_p\})$ consists of a set $[n]$ of $n$ players, a set~$[\alpha]$ of $\alpha$ pure strategies, and a payoff function $\emph{\pay}_p : [\alpha]\times
  \k \rightarrow \mathbb{R}$ for each player $p\in [n]$, where 
\begin{eqnarray*}
&\k =  \big\{(k_1,\ldots,k_\alpha): k_j \in \mathbb{Z}_{\ge 0}\text{ for all $j$ 
  and }\sum_{j=1}^\alpha k_j = n-1 \big\}&
\end{eqnarray*} 
is the set of all histograms of pure strategies played by $n-1$ players.
Specifically, when $\ss\in [\alpha]^{n}$ is played,
  the payoff of player $p$ is given by $\textsf{\emph{payoff}}_p(s_p,\Psi(\ss_{-p}))$.
\end{definition}

As usual, a {mixed strategy} is a probability distribution 
  $\xx=(x_{1},\ldots,x_{\alpha})$, and
  a {mixed strategy profile} $\bxx$ is an ordered tuple of 
  $n$ mixed strategies $(\xx_p:p\in [n])$, one for each player $p$.
Given $\bxx$, 
  let $u_p(b,\bxx)$ denote the \emph{expected payoff} of 
  $p$ playing $b\in [\alpha]$, which has the 
  following explicit expression:
$$
u_p(b,\bxx) = \sum_{\kk \in \k}{\pay}_p(b,\kk) \cdot \Pr_{\bxx}[p,\kk],
$$
where $\Pr_{\bxx}[p,\kk]$ denotes the probability of player $p$ 
  seeing histogram $\kk$ under $\bxx$:
$$
\Pr_{\bxx}[p,\kk] = \sum_{\ss_{-p} \in \Psi^{-1}(\kk)} \left(
\hspace{0.05cm}\prod_{q\ne p} x_{q,s_q}\right). 
$$
Note that $s_q$ denotes the pure strategy of player $q$
  from a profile $\ss_{-p}\in \Psi^{-1}(\kk)$.
We also use $u_p(\bxx)$ to
  denote the expected payoff of player $p$ from playing $\xx_p$:
$$u_p(\bxx)=\sum_{b\in [\alpha]} x_{p,b}\cdot u_p(b,\calX).$$
It is worth pointing out that, while $u_p(b,\bxx)$ contains exponentially many terms, it can be 
  computed in polynomial time using dynamic programming \cite{PTAS07,dask14} when 
  $\alpha$ is a constant.
For a detailed presentation of the algorithm
  for $2$-strategy anonymous games, see \cite{dask14}.
This then implies that checking whether a given profile $\calX$
  is a (approximate) Nash equilibrium is in polynomial time.

Next we define (approximate) Nash equilibria of an anonymous game.

\begin{definition}
Given an anonymous game $\calG=(n,\alpha,\{\textsf{\emph{payoff}}_p\})$,
  we say a mixed strategy profile $\calX$ is a 
  \emph{Nash equilibrium} of $\calG$ if $u_p(\calX)\ge u_p(b,\calX)$
  for all players $p\in [n]$ and strategies $b\in [\alpha]$.

For $\epsilon\ge 0$, we say $\calX$ is an \emph{$\epsilon$-approximate Nash equilibrium} if
  $u_p(\calX)+\epsilon\ge u_p(b,\calX)$ for all $p\in [n]$ and $b\in [\alpha]$.
For $\epsilon\ge 0$, we say $\calX$ is an \emph{$\epsilon$-well-supported Nash equilibrium} if
  $u_p(a,\calX)+\epsilon<u_p(b,\calX)$ implies that 
  $x_{p,a}=0$, for all $p\in [n]$ and $a,b\in [\alpha]$.
\end{definition}

As discussed in Section \ref{sec:ours}, the hardness part of Theorem \ref{main-theorem} 
  is proved using a polynomial-time reduction from the problem of finding
  a well-supported Nash equilibrium in a \emph{polymatrix game} (e.g. see \cite{Cai11}).
For our purposes, such a game (with $n$ players and two strategies each player)
  can be described by a payoff matrix $\AA \in [0,1]^{2n\times 2n}$ 
  with $A_{k,\ell}=0$ for all $k ,\ell \in \{2i-1,2i\}$ and $i \in [n]$. 

Each player $i\in [n]$ has two pure strategies that correspond to rows $2i-1$ and $2i$ of $\AA$. 
Let~$\AA_{j}$ denote the $j$th row of $\AA$.
Given a vector $\yy\in \mathbb{R}_{\ge 0}^{2n}$,
  where $(y_{2i-1},y_{2i})$ is the mixed strategy of player $i$, expected payoffs
  of player $i$ for playing rows $2i-1$ and $2i$ are $\AA_{2i-1}\cdot \yy$
  and $\AA_{2i}\cdot \yy$ respectively.

An \emph{$\epsilon$-well-supported Nash equilibrium} of $\AA$ is a vector
  $\yy\in \mathbb{R}_{\ge 0}^{2n}$ such that $y_{2i-1}+y_{2i}=1$ and  
\begin{align*}
\AA_{2i-1} \cdot \yy  > \AA_{2i} \cdot {\yy} +\epsilon
\hspace{0.06cm} \Rightarrow\hspace{0.06cm}  y_{2i}=0
\ \ \  \text{and}\ \ \ 
\AA_{2i}\cdot  \yy  > \AA_{2i-1}\cdot \yy +\epsilon
\hspace{0.06cm} \Rightarrow\hspace{0.06cm}  y_{2i-1}=0,
\end{align*}
for all players $i\in [n]$.
We need the following result on such games:

\begin{theorem}[\hspace{0.0001cm}\cite{market}]
The problem of computing a $(1/n)$-well-supported Nash equilibrium
  in a\\ polymatrix game is PPAD-complete. \vspace{-0.05cm}
\end{theorem}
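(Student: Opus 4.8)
The statement asserts both membership in PPAD and PPAD-hardness, and the plan is to treat these separately. For membership, I would use the standard observation that a $(1/n)$-well-supported Nash equilibrium of $\AA$ is an (exact) fixed point of the Nash--Brouwer map on the product of $1$-simplices $\prod_{i\in[n]}\Delta_i$: for each player $i$ this map shifts the mixed strategy $(y_{2i-1},y_{2i})$ toward whichever of the two rows currently has the larger payoff $\AA_{2i-1}\cdot\yy$ or $\AA_{2i}\cdot\yy$, then renormalizes back into $\Delta_i$. This map is computable by a polynomial-size arithmetic circuit and is polynomially Lipschitz, so the problem of finding one of its approximate fixed points is in PPAD; and the routine precision bookkeeping shows that an approximate fixed point of sufficiently small (but still $1/\text{poly}(n)$) accuracy induces a $(1/n)$-well-supported equilibrium, because all payoffs lie in $[0,1]$ and each player controls only a $1$-simplex. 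Hence the problem is in PPAD, and the remaining work is the hardness direction.

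For hardness, I would reduce from the \emph{generalized circuit problem}, which is PPAD-complete even when the required accuracy is inverse polynomial: one is given a circuit over variables valued in $[0,1]$ built from a fixed family of gates --- copy, scaling by a constant, addition, subtraction, comparison, and Boolean gates --- and must find an assignment satisfying every gate up to the error parameter. The key step is to build, for each gate type, a constant-size polymatrix gadget in which the probability $y_{2i-1}$ placed on the first strategy of a designated player encodes the value of a circuit variable. Exploiting the bilinear payoff structure of polymatrix games, each gadget is designed so that in \emph{every} $(1/n)$-well-supported equilibrium the encoded output variable equals the intended function of the encoded input variables up to an additive $O(1/n)$ error: the linear gates (copy, addition, subtraction, and scaling by a dyadic constant) are direct, the comparison gate uses a high-gain amplifier that drives any non-negligible input gap to a value near $0$ or $1$, and a genuine product (needed for general scaling and for any multiplicative gate) is assembled from a cascade of such gadgets by a bit-shift/halving construction. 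Wiring the gadgets together along the circuit topology produces a polymatrix game with $n=\text{poly}(\text{circuit size})$ players, respecting the required $A_{k,\ell}=0$ block structure since a player's payoff never depends on her own strategy anyway. One then checks the two directions: (i) every $(1/n)$-well-supported equilibrium of this game induces an assignment solving the generalized circuit up to the needed $1/\text{poly}$ error, and (ii) conversely any solution of the circuit extends to such an equilibrium. Together with membership, this gives PPAD-completeness.

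The hard part will be the error analysis in step (i). Every gadget contributes $\Theta(1/n)$ of slack in the well-supported condition, and these errors are amplified and accumulated as one propagates through the circuit --- a single high-gain comparison gadget can convert an $O(1/n)$ input perturbation into a constant-size output perturbation whenever its input straddles the threshold. One must therefore choose the internal gains, the value-encoding scale, and the gadget sizes so that the aggregate error over all gates is still small enough to certify a genuine $1/\text{poly}$-solution of the generalized circuit (so that its hardness can be invoked), while simultaneously keeping the well-supported parameter of the constructed game as large as $1/n$. Two secondary subtleties are confining the ``brittleness'' of the comparison and Boolean gadgets to a controlled window of input values so that the construction is still guaranteed to have a solution, and carrying out direction (ii) carefully: a circuit solution must be lifted to a profile that is $(1/n)$-\emph{well-supported} at every player, which is strictly stronger than merely being an approximate Nash equilibrium.
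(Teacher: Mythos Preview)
The paper does not prove this theorem at all; it is quoted verbatim from \cite{market} and used as a black box to seed the main reduction. There is therefore no ``paper's own proof'' to compare your proposal against.

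That said, your sketch is broadly the route taken in the cited literature: PPAD-hardness of $(1/\text{poly})$-well-supported equilibria in polymatrix games is indeed obtained by a gadget-per-gate reduction from the generalized circuit problem, exactly as you outline. Two minor corrections to your plan. First, direction (ii) --- lifting a circuit solution to a well-supported equilibrium of the constructed game --- is not needed for a PPAD reduction between search problems; you only need that every solution of the target instance yields a solution of the source instance, which is your direction (i). Second, on the membership side, approximate fixed points of the Nash map yield \emph{approximate} Nash equilibria, not well-supported ones; you still need a post-processing step that moves each player's probability mass off strategies that are noticeably suboptimal (the analogue of the paper's Lemma~\ref{poly:equiv}) before you can claim a $(1/n)$-well-supported equilibrium. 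Neither of these is a real obstacle, and your identification of the cascading error through comparison gadgets as the delicate point is accurate.
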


\subsection{Our Approach and Techniques\vspace{-0.06cm}}\label{sec:ours}

A commonly used approach to establishing the PPAD-hardness of approximate 
  equilibria is to~design gadget games that can perform 
  certain arithmetic operations on entries of mixed strategies of players 
  (e.g. see \cite{DGPJournal,2Nash}). 
Such gadgets would then yield a reduction from 
  the problem of solving a generalized circuit \cite{DGPJournal,2Nash}, a
  problem  complete in PPAD. 
However, we realized that this approach may not work well with anonymous games; 
  we found that it was impossible to design an anonymous game $G_{=}$ that enforces 
  equality constraints.\hspace{0.03cm}\footnote{For example, 
  we can rule out the existence of an anonymous game $G_{=}$
  with 4 players and 2 pure strategies such that $\xx$ is a Nash equilibrium of $G_{=}$ if and only if $x_{1}=x_{2} \in [\mu,\nu] \subseteq [0,1]$ and $x_{3}=x_{4} \in [\mu',\nu'] \subseteq [0,1]$, where 
  we use $x_i$ to denote the probability that player $i$ plays the first pure strategy.}

Instead we show the PPAD-hardness of anonymous games via 
  a reduction from the problem~of finding a $(1/n)$-well-supported
  equilibrium in a two-strategy polymatrix game (see
  Section \ref{sec:pre}).
Given a $2n\times 2n$ polymatrix game $\AA$,
  our reduction constructs an anonymous game $\calG_\AA$ with $n$ ``main'' players $\{P_1,\ldots,P_n\}$
  (and two auxiliary players).
We have each main player $P_i$ simulate in a way a player $i$ in the polymatrix game,
  as discussed below, such that   
  any $\epsilon$-well-supported Nash equilibrium of $\calG_\AA$ with an exponentially 
  small $\epsilon$ can be used to recover
  a $(1/n)$-well-supported Nash equilibrium of the polymatrix game $\AA$ efficiently. We then prove a connection between approximate Nash equilibria and well-supported Nash
  equilibria of anonymous games to finish the proof of Theorem \ref{main-theorem}.

\emph{The greatest challenge to establishing such a reduction
  is posed by the complex but highly structured, symmetric expression of
  expected payoffs in an anonymous game.}
As discussed previously~in Section \ref{sec:pre}, the expected payoff $u_p(b,\calX)$ 
  of player~$p$ is a linear form of probabilities $\Pr_{\cal{X}}[p,\kk]$,
  each of which 
  is function over mixed strategies of all players other than $p$.
This rather complex function makes it difficult to reason about  
  the set of well-supported Nash equilibria of an anonymous game,
  not to mention our goal is to embed a polymatrix game in it.
To overcome this obstacle, we need to find a special (but hard enough) family of 
  anonymous games with certain payoff structures which allow us to 
  perform a careful analysis and understand their well-supported equilibria.
The bigger~obstacle for our reduction, however,
  is to in some sense \emph{remove the anonymity of the players and break
  the inherent symmetry underlying an anonymous game}.

To see this, a natural approach to obtain a reduction from polymatrix games
  is to directly encode the $2n$ variables of $\yy$ in mixed strategies of the $n$ ``main'' players $\{P_1,\ldots,P_n\}$. More~specifically, let $\{s_1, s_2\}$ denote two special pure strategies of $\calG_\AA$, and we attempt to encode $(y_{2i-1}, y_{2i})$ in $(x_{i,s_1} , x_{i,s_2} )$, probabilities of $P_i$ playing $s_1,s_2$, respectively. The reduction would work if expected payoffs of $P_i$ from $s_1$ and $s_2$ in $\calG_\AA$ can always match closely expected payoffs of player $i$ from~rows $2i - 1$ and $2i$ in $\AA$, given by two linear forms $\AA_{2i-1}\cdot \yy$ and 
   $\AA_{2i}\cdot \yy$ of $\yy$. However, it seems difficult, if not impossible, to construct $\calG_\AA$ with this property, since anonymous games are highly symmetric: the expected payoff of $P_i$ is a symmetric function over mixed strategies of all other players. This~is not the case for polymatrix games: a linear form such as $\AA_{2i}\cdot \yy$ in general has different coefficients for different variables, so different players contribute with different weights to the expected payoff of a player (and the problem of finding a well-supported equilibrium
   in $\AA$ clearly becomes trivial if we require that every row of 
   $\AA$ has the same entry).
 
An alternative approach is to encode the $2n$ variables of $\yy$
   in probabilities $\Pr_{\cal{X}}[p,\kk]$. This may look appealing because expected payoffs
   $u_p(b,\calX)$ are linear forms of these probabilities
so one can set the coefficients $\pay_p(b,\kk)$ to match 
  them easily with those linear forms
  $\AA_{j}\cdot \yy$ that appear in the polymatrix game $\AA$. However,  
  the histogram $\kk$ seen by a player $p$ (as a vector-valued random variable)
  is~the sum of $n-1$ vector-valued random variables,
  each distributed according to the~mixed strategy of a player other than $p$.
The way these probabilities $\Pr_{\cal{X}}[p,\kk]$ are derived in turn imposes 
  strong restrictions on them,\hspace{0.02cm}\footnote{For example, as it is pointed out in \cite{PTAS07,cPTAS08} for anonymous games with two strategies, players can always be partitioned into a few sets such that the probabilities $\Pr_{\cal{X}}[p,\kk]$ over $\kk$ must follow approximately a Poisson or a discretized Normal distribution on  each set respectively.}
  which makes it a difficult task to obtain a correspondence between the $2n$ free variables in $\yy$ and the probabilities $\Pr_{\cal{X}}[p,\kk]$.

Our reduction indeed follows the first approach of encoding $(y_{2i-1},y_{2i})$
  in $(x_{i,s_1},x_{i,s_2})$ of player $P_i$.
More exactly, the former is the normalization of the latter
  into a probability distribution.
Now to overcome the difficulty posed by symmetry, we \emph{enforce} the following
  \emph{``scaling'' property} in every well-supported Nash equilibrium $\calX$
  of $\calG_\AA$: probabilities of $P_i$ playing $\{s_1,s_2\}$ satisfy 
\begin{equation}\label{scales}
x_{i,s_1}+x_{i,s_2}\approx 1/N^i,
\end{equation} where $N$ is exponentially large in $n$.
This property is established
  by designing an anonymous game called \emph{generalized radix game} $\calG_{n,N}^*$,
  and then using it as the base game in the construction of $\calG_\AA$.
We show that (\ref{scales}) 
  holds approximately for every anonymous game that is payoff-wise
  \emph{close} to $\calG_{n,N}^*$.
In particular, (\ref{scales}) holds for any well-supported
  equilibrium of $\calG_\AA$, as long as we make sure $\calG_\AA$ is close to $\calG_{n,N}^*$.
The ``scaling'' property plays a crucial role in our 
  reduction because, as the base game for $\calG_\AA$,
  it helps us reason about well-supported Nash equilibria of $\calG_\AA$;
  it also removes anonymity of the $n$ ``main'' players $P_i$
  (since they must play the two special pure strategies $\{s_1,s_2\}$ 
  with probabilities of different scales) and
  overcome the symmetry barrier.

Equipped with the ``scaling'' property (\ref{scales}), we prove 
  a key technical lemma called the \emph{estimation lemma}.
It shows that one can compute efficiently coefficients of a linear form over probabilities~of 
  histograms $\Pr_\calX[P_i,\kk]$ seen by player $P_i$, which guarantees to approximate additively $x_{j,s_1}$ (or $x_{j,s_2}$) i.e. probability of another player $P_j$ plays $s_1$ (or $s_2$), whenever the profile $\calX$ satisfies the ``scaling'' property
  (this holds when $\calG_\AA$ is close to $\calG_{n,N}^*$
  and $\calX$ is a well-supported equilibrium of $\calG_\AA$).
As $$(y_{2j-1},y_{2j})\approx N^j(x_{j,s_1},x_{j,s_2})$$ given (\ref{scales}),
  these linear forms for $x_{j,s_1},x_{j,s_2}$ 
  can be combined to derive a linear form of
  $\Pr_\calX[P_i,\kk]$~to approximate additively any 
  linear form of $\yy$, particularly $\AA_{2i-1}\cdot \yy$ or 
  $\AA_{2i}\cdot \yy$ that appear~as expected payoffs of player $i$ in the polymatrix game $\AA$.
The proof of the estimation lemma is the technically most 
  involved part of the paper.
We indeed derive explicit expressions for coefficients
  of the desired linear form where substantial cancellations
  yield an additive approximation of $x_{j,s_1}$ or $x_{j,s_2}$.  

Finally we combine all ingredients highlighted above to construct an anonymous
  game $\calG_\AA$ from polymatrix game $\AA$.
This is done by first using the estimation lemma to compute,
  for each~main~$P_i$
  coefficients of linear forms of probabilities 
  $\Pr_\calX[P_i,\kk]$ seen by $P_i$ that yield
  additive approximations of $x_{j,s_1}$ and $x_{j,s_2}$.
\hspace{-0.05cm}We then perturb payoff functions of players $P_i$ 
  in the generalized radix~game $\calG_{n,N}^*$
  using these coefficients  
  so that 1) the resulting game $\calG_\AA$ is close to $\calG_{n,N}^*$
  and thus, any well-supported equilibrium $\calX$ of $\calG_\AA$ 
  automatically satisfies the ``scaling'' property;
  2) expected payoffs of $P_i$ playing $s_1,s_2$ in 
  a well-supported equilibrium $\calX$ of $\calG_\AA$
  match additively expected payoffs of player $i$ playing
  rows $2i-1,2i$ in $\AA$, given $\yy$ derived from $\calX$ by normalizing
  $(x_{j,s_1},x_{j,s_2})$ for each $j$.
The correctness of the reduction, i.e., $\yy$ is a 
  $(1/n)$-well-supported equilibrium of $\AA$ whenever
  $\calX$ is an $\epsilon$-well-supported equilibrium of $\calG_\AA$
  with an exponentially small $\epsilon$, follows from these properties of $\calG_\AA$.

\subsection{Organization\vspace{-0.06cm}}

In Section 2, we define the radix game, and show that it has a unique Nash equilibrium
  as a warm-up.
We also use it to
  define the generalized radix game which serves as the base of our reduction. 
In section 3, we characterize well-supported Nash equilibria of 
   anonymous games that are close to the generalized radix game (i.e., those
   that can be obtained by adding small perturbations to~payoffs of 
  the generalized radix game).
In section 4, we prove the PPAD-hardness part of the main theorem.
Our reduction relies on a crucial technical lemma, called the estimation lemma, which we prove in Section 5.
We prove the membership in Section 6, and conclude with open problems in Section 7.

\section{Warm-up: Radix Game}

In this section, we first define a 
  $(n+2)$-player anonymous game $\calG_{n,N}$, called the \emph{radix game}.
As~a warmup for the next section, we show that it has a unique Nash equilibrium.
We then use the radix game to define the \emph{generalized radix game} $\calG_{n,N}^*$,
  by making a duplicate of a pure strategy in $\calG_{n,N}$.
The latter will serve as the base game for 
  our polynomial-time reduction from polymatrix games.
  
\subsection{Radix Game}
  
The radix game $\calG_{n,N}$ to be defined has a unique Nash equilibrium of a specific form:
given $N\ge 2$ as an integer parameter of the game, each of the $n$ ``main'' 
  players mixes over the first two strategies with probabilities $1/N^i$ 
  and $1-1/N^i$, respectively, for each $i\in [n]$, in the unique
  Nash equilibrium. 
The remaining two ``special'' players are created to achieve the 
  aforementioned property. 

\begin{mygame}[Radix Game $\calG_{n,N}$]\label{gadget1}
Let $n\ge 1$ and $N\ge2$ denote two integer parameters. Let $\delta=1/N$.

Let $\calG_{n,N}$ denote the following anonymous game 
  with $n+2$ players $\{\hspace{-0.03cm}P_1,\ldots,P_n,Q,R\}$ and $6$
  pure strategies $\{s,t,q_1,q_2,r_1,r_2\}$.
We refer to $\{P_1,\ldots,P_n\}$ as the \emph{main} players.
Each main player $P_i$ is only interested in strategies $s$ and $t$ \emph{(}e.g., by setting
  her payoff of playing any other four actions to be $-1$ no matter what other players play\emph{)}.  
Player $Q$ is only interested in strategies $\{q_1,q_2\}$, and
player $R$ is only interested in strategies $\{r_1,r_2\}$.

Next we define the payoff function of each player.
When describing the payoff of a player below we 
  always use $\kk=(k_s,k_t,k_{q_1},k_{q_2},k_{r_1},k_{r_2})$ to denote the histogram
  of strategies this player sees.\vspace{0.03cm}
\begin{flushleft}\begin{enumerate}
\item For each $i\in [n]$, 
the payoff of player $P_i$ when she plays $s$ only depends on $k_s$: \vspace{0.04cm}
$$
\text{\emph{\textsf{payoff}}}_{P_i}(s,\kk)= 
\begin{cases}
    \hspace{0.06cm} \delta^i+ \prod_{j\in [n]} \delta^j&  
    \text{if $k_s=n-1$}\\[0.9ex]
    \hspace{0.06cm} \prod_{j\in [n]} \delta^j & \text{otherwise.}
\end{cases}\vspace{0.04cm}
$$
The payoff of player $P_i$ when she plays $t$ only depends on $k_{r_1}$:
$$
\text{\emph{\textsf{payoff}}}_{P_i}(t,\kk)=\begin{cases}
    \hspace{0.06cm}2& \text{if $k_{r_1}= 1$}\\[0.4ex]
    \hspace{0.06cm}0 & \text{otherwise.}
\end{cases}
$$
\item The payoff of player $Q$ when she plays $q_1$ or $q_2$ is given by
$$
\text{\emph{\textsf{payoff}}}_Q(q_1,\kk)=\begin{cases}
    \hspace{0.06cm}1& \text{if $k_{s}= n$}\\[0.4ex]
    \hspace{0.06cm}0 & \text{otherwise}
\end{cases}
\ \ \ \text{and}\ \ \ \
\text{\emph{\textsf{payoff}}}_Q(q_2,\kk)=\begin{cases}
    \hspace{0.06cm}1& \text{if $k_{r_1}= 1$}\\[0.4ex]
    \hspace{0.06cm}0 & \text{otherwise.}
\end{cases}
$$

\item The payoff of player $R$ when she plays $r_1$ or $r_2$ is given by  
$$
\text{\emph{\textsf{payoff}}}_R(r_1,\kk)=\begin{cases}
    \hspace{0.06cm}1& \text{if $k_{q_1}=1$}\\[0.4ex]
    \hspace{0.06cm}0 & \text{otherwise}
\end{cases}
\ \ \ \text{and}\ \ \ \
\text{\emph{\textsf{payoff}}}_R(r_2,\kk)=\begin{cases}
    \hspace{0.06cm}1& \text{if $k_{q_2}=1$}\\[0.4ex]
    \hspace{0.06cm}0 & \text{otherwise.}
\end{cases}
$$
\end{enumerate}
This finishes the definition of the radix game $\calG_{n,N}$.
\end{flushleft}
\end{mygame}

\begin{fact} 
$\calG_{n,N}$ is an anonymous game
  with payoff functions taking values from $[-1,2]$.
\end{fact}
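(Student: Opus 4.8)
The plan is a direct case check: enumerate the finitely many values each payoff function $\pay_p(b,\kk)$ can take and verify each lies in $[-1,2]$.

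First I would dispose of the main players $P_i$. By the convention stated in the definition, $P_i$ receives payoff $-1\in[-1,2]$ whenever she plays one of the four strategies $q_1,q_2,r_1,r_2$. When $P_i$ plays $t$, her payoff is $2$ or $0$, both inside $[-1,2]$. When $P_i$ plays $s$, her payoff is either $\prod_{j\in[n]}\delta^j$ or $\delta^i+\prod_{j\in[n]}\delta^j$; since $N\ge 2$ we have $\delta=1/N\in(0,1/2]$, so each power $\delta^j\in(0,1/2]$, hence $\prod_{j\in[n]}\delta^j\in(0,1/2]$ and $\delta^i\in(0,1/2]$, and therefore both possible payoffs lie in $(0,1]\subseteq[-1,2]$.

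Next I would handle the special players $Q$ and $R$. Following the same convention that a player's payoff is $-1$ on the strategies she is not interested in, $Q$ (resp.\ $R$) gets $-1$ on $\{s,t,r_1,r_2\}$ (resp.\ $\{s,t,q_1,q_2\}$); and on each of her two relevant strategies her payoff is defined to be $0$ or $1$ in every branch of the case split. All of these values lie in $[-1,2]$.

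Since the cases above exhaust all players $p$ and all pairs $(b,\kk)\in[\alpha]\times\k$, the fact follows. There is no genuine obstacle here; the only point worth noting is that the hypothesis $N\ge 2$ (equivalently $\delta\le 1/2$) is exactly what keeps the payoff of playing $s$ bounded by $1$, and hence within the stated range.
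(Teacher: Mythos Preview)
Your proof is correct; the paper itself states this fact without proof, treating it as immediate from the definition of $\calG_{n,N}$. Your case-by-case verification is exactly the natural argument, and your observation that $N\ge 2$ is what bounds the $s$-payoff by $1$ is the only nontrivial point.
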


Since the main players $P_i$ are only interested in $\{s,t\}$,
  $Q$ is only interested in $\{q_1,q_2\}$, and $R$~is only interested in $\{r_1,r_2\}$,
  each Nash equilibrium $\calX$ of   
  $\calG_{n,N}$ can be fully specified by a $(n+2)$-tuple 
  $\calX=(x_1,\ldots,x_n,y,z)\in [0,1]^{n+2}$, where
  $x_i$ denotes the probability of $P_i$ playing strategy $s$ for
  each $i\in [n]$, $y$ denotes the probability of $Q$ playing $q_1$,
  and $z$ denotes the probability of $R$ playing $r_1$.

Given $\calX=(x_1,\ldots,x_n,y,z)$
  we calculate the expected payoff of each player as follows
  (we skip $\calX$ in the expected payoffs $u_p(b,\calX)$, when 
  $\calX$ is clear from the context, and we use $u_i$ to denote 
  the expected payoff of $P_i$ instead of $u_{P_i}$ for convenience):

\begin{fact}\label{fact1}
Given $\calX=(x_1,\ldots,x_n,y,z)$, the expected payoff of player 
  $P_i$ for playing $s$ is \vspace{0.03cm}
$$u_{i}(s)=\delta^i\cdot \emph{\Pr}\big[k_s=n-1\big] + \prod_{j \in [n]} \delta^j = \delta^i \prod_{j\ne  i \in [n]} x_j + \prod_{j \in [n]} \delta^j.$$ 
The expected payoff of $P_i$ for playing $t$ is $u_{i}(t)=2z$. 

The expected payoff of player $Q$ for playing $q_1$ is\vspace{0.03cm}
$$u_{Q}(q_1)= \emph{\Pr}\big[k_s=n\big]=\prod_{i\in [n]} x_i.\vspace{-0.03cm}$$
The expected payoff of $Q$ for playing $q_2$ is 
$u_Q(q_2)=z$. 

The expected payoff of $R$ for playing $r_1$ is $u_R(r_1)=y$
  and that for $r_2$ is $u_R(r_2)=1-y$.  
\end{fact}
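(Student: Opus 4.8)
The plan is simply to unfold the definition $u_p(b,\calX)=\sum_{\kk\in\k}\pay_p(b,\kk)\cdot\Pr_\calX[p,\kk]$ for each player and each of her two relevant pure strategies, using the fact that every payoff function in $\calG_{n,N}$ depends on exactly one coordinate of the histogram and, moreover, is a $0/1$ indicator (up to the additive constant $\prod_{j\in[n]}\delta^j$ in the case of $\pay_{P_i}(s,\cdot)$). The one preliminary observation I would record is that in a profile of the form $\calX=(x_1,\dots,x_n,y,z)$ every main player plays only $s$ or $t$, player $Q$ plays only $q_1$ or $q_2$, and player $R$ plays only $r_1$ or $r_2$; hence, from the viewpoint of any fixed player, the coordinate $k_s$ of the histogram she sees receives contributions only from the other main players, the coordinates $k_{r_1},k_{r_2}$ only from $R$, and $k_{q_1},k_{q_2}$ only from $Q$, with the players' strategies drawn independently.

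For $u_i(s)$ I would split the sum over $\kk$ according to whether $k_s=n-1$: since $\pay_{P_i}(s,\kk)=\prod_{j\in[n]}\delta^j$ in all cases and carries an extra $\delta^i$ exactly when $k_s=n-1$, and since $\sum_{\kk}\Pr_\calX[P_i,\kk]=1$, the sum collapses to $\delta^i\cdot\Pr[k_s=n-1]+\prod_{j\in[n]}\delta^j$. It then remains to note that, as $Q$ and $R$ never play $s$ while the $n-1$ other main players each play $s$ independently with probability $x_j$, the event $k_s=n-1$ occurs exactly when all of them play $s$, which has probability $\prod_{j\ne i}x_j$. The remaining quantities are computed the same way but are even simpler because the payoffs are plain indicators: $u_i(t)=2\cdot\Pr[k_{r_1}=1]=2z$ (only $R$ can contribute to $k_{r_1}$); $u_Q(q_1)=\Pr[k_s=n]=\prod_{i\in[n]}x_i$ (this forces all $n$ main players to play $s$, while $R$'s choice sums out); $u_Q(q_2)=\Pr[k_{r_1}=1]=z$; $u_R(r_1)=\Pr[k_{q_1}=1]=y$; and $u_R(r_2)=\Pr[k_{q_2}=1]=1-y$.

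There is no real obstacle here — the statement is a routine consequence of the definitions. The only point needing a little care is the additive constant $\prod_{j\in[n]}\delta^j$ in $u_i(s)$, where one must remember that the histogram probabilities seen by $P_i$ sum to $1$ rather than to $\Pr[k_s=n-1]$; everything else is bookkeeping about which players feed which histogram coordinate, which the preliminary observation settles once and for all.
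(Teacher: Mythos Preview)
Your proposal is correct and is exactly the natural approach: unfold the definition of expected payoff and use that each payoff function in $\calG_{n,N}$ depends on a single histogram coordinate controlled by a single ``block'' of players. The paper states this as a fact without proof, since it is an immediate computation from the definitions in Game~\ref{gadget1}; your write-up supplies precisely that computation, including the one mildly nontrivial point (splitting off the additive constant $\prod_{j\in[n]}\delta^j$ in $u_i(s)$ via $\sum_\kk \Pr_\calX[P_i,\kk]=1$).
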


We show that $x_i=\delta^i$ in a Nash equilibrium $\calX$ of $\calG_{n,N}$.
We start with the following lemma.

\begin{lemma} \label{notuseful1}
In a Nash equilibrium $\calX=(x_1,\ldots,x_n,y,z)$ of $\calG_{n,N}$, we 
  have that $z=\prod_{i\in [n]}x_i$.
\label{util_s5}
\end{lemma}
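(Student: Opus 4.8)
The plan is to use only the payoffs of the two special players $Q$ and $R$, which by Fact~\ref{fact1} interact in a cyclic, matching-pennies-like fashion. Write $\pi=\prod_{i\in[n]}x_i$; then $u_Q(q_1)=\pi$, $u_Q(q_2)=z$, $u_R(r_1)=y$, and $u_R(r_2)=1-y$. Since $Q$ is only interested in $\{q_1,q_2\}$ and $R$ only in $\{r_1,r_2\}$, and both pairs of expected payoffs are nonnegative, in any Nash equilibrium player $Q$ places all her weight on whichever of $q_1,q_2$ has the strictly larger expected payoff (and may mix only when the two are equal), and similarly for $R$. I would record this best-response dichotomy first, then derive the claim by contradiction.

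Assume $z\ne\pi$. If $z>\pi$, then $q_2$ is the unique best response for $Q$, so $y=0$; but then $u_R(r_2)=1-y=1>0=y=u_R(r_1)$, so $r_2$ is the unique best response for $R$, forcing $z=0$, which contradicts $z>\pi\ge 0$. If instead $z<\pi$, then $q_1$ is the unique best response for $Q$, so $y=1$; but then $u_R(r_1)=y=1>0=u_R(r_2)$, so $z=1$, which contradicts $z<\pi\le 1$, where we use that $\pi$ is a product of probabilities and hence at most $1$. Therefore $z=\pi=\prod_{i\in[n]}x_i$.

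I do not expect a genuine obstacle here, since this serves as a warm-up; the only points requiring care are (i) invoking the Nash condition in its best-response form rather than the weaker approximate inequality, and (ii) using the trivial bounds $0\le\pi\le 1$ to close off both cases. The same cyclic $Q$--$R$ gadget will reappear in a perturbed form in Section~3, where the analysis of well-supported equilibria of anonymous games close to $\calG_{n,N}$ is substantially more delicate; this lemma isolates the clean, unperturbed core of that argument and will be the first step toward showing $x_i=\delta^i$ in the unique Nash equilibrium.
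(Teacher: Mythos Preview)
Your proof is correct and essentially identical to the paper's: both argue by contradiction on the two cases $z>\prod_i x_i$ and $z<\prod_i x_i$, using the $Q$--$R$ best-response cycle from Fact~\ref{fact1} to force $y\in\{0,1\}$ and then $z\in\{0,1\}$, closing each case with the trivial bound $0\le\prod_i x_i\le 1$.
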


\begin{proof}
Assume for contradiction that $z>\prod_{i} x_i$. 
As $u_{Q}(q_2) > u_{Q}(q_1)$ and $\calX$ is a Nash equilibrium,
  player $Q$ never plays $q_1$ and thus, $y=0$.  
This in turn implies $u_{R}(r_2)=1 >0= u_{R}(r_1)$ and $z=0$,
which contradicts with the assumption that $z>\prod_i x_i\ge 0$.

Next, assume for contradiction that $z<\prod_i x_i$, giving
  us that $u_{Q}(q_2) < u_{Q}(q_1)$. Player $Q$ never plays $q_2$ and $y=1$. 
This implies that $u_{R}(r_1)>u_{R}(r_2)$ and thus
  $z=1$, which contradicts with the assumption that $z<\prod_i x_i\le 1$ (as $x_i\in [0,1]$).
This finishes the proof of the lemma.
\end{proof}


We now show that the radix game $\calG_{n,N}$ has a unique
  Nash equilibrium $\calX$ with $x_i=\delta^i$. 
\begin{lemma}\label{notuseful2}
In a Nash equilibrium $\calX=(x_1,\ldots,x_n,y,z)$ of $\calG_{n,N}$, 
  we have $x_i = \delta^i$ for all $i\in [n]$.
\label{powers}
\end{lemma}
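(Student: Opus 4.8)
The plan is to pin down the equilibrium values $x_i$ one at a time, working from $i=n$ down to $i=1$, using the two already-established facts: Lemma~\ref{notuseful1} (so that $z=\prod_{j\in[n]}x_j$ in any Nash equilibrium) together with the explicit payoff formulas in Fact~\ref{fact1}. First I would record the chain of equalities forced by the gadget: since $z=\prod_j x_j$, we have $u_Q(q_1)=\prod_j x_j = z = u_Q(q_2)$, so $Q$ is indifferent and $y$ is unconstrained by $Q$; but then $R$'s payoffs $u_R(r_1)=y$, $u_R(r_2)=1-y$ being tied would force $y=1/2$, and conversely $R$ playing $z=1/2$... — more carefully, the useful consequence to extract is simply a clean formula relating $z$ and the $x_j$'s, namely $z=\prod_{j\in[n]} x_j$, which I will feed into $P_i$'s indifference condition.

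The core of the argument is the indifference (or best-response) condition for each main player $P_i$ between $s$ and $t$. From Fact~\ref{fact1}, $u_i(s)=\delta^i\prod_{j\ne i}x_j + \prod_{j\in[n]}\delta^j$ and $u_i(t)=2z=2\prod_{j\in[n]}x_j$. I would argue that in equilibrium neither pure strategy can be a strict best response for all $P_i$ simultaneously, and in fact $0<x_i<1$ for every $i$, so $P_i$ must be indifferent: $u_i(s)=u_i(t)$, i.e.
\begin{equation}\label{eq:indiff}
\delta^i\prod_{j\ne i}x_j + \prod_{j\in[n]}\delta^j = 2\prod_{j\in[n]}x_j .
\end{equation}
To get $x_i\in(0,1)$: if some $x_j=0$ then $\prod x_j=0$ so $z=0$, making $u_i(t)=0$; but $u_i(s)\ge \prod_j\delta^j>0$, so every $P_i$ strictly prefers $s$, forcing all $x_i=1$, contradiction. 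If some $x_j=1$, I would derive a similar contradiction by plugging into \eqref{eq:indiff} and comparing magnitudes (the left side becomes too large relative to $2\prod x_j$ once $\delta<1$). Hence $0<x_j<1$ for all $j$ and \eqref{eq:indiff} holds for every $i$.

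Now I would solve the system \eqref{eq:indiff}. Dividing \eqref{eq:indiff} by $\prod_{j\in[n]}x_j$ (legitimate since all $x_j>0$) gives $\delta^i/x_i + \prod_j(\delta^j/x_j) = 2$ for each $i$. Writing $r_i:=\delta^i/x_i>0$ and $P:=\prod_j r_j$, this is $r_i + P = 2$ for all $i$, so all $r_i$ are equal, say $r_i=r$, and $r+r^n=2$. The function $r\mapsto r+r^n$ is strictly increasing on $(0,\infty)$ and equals $2$ at $r=1$, so $r=1$ is the unique positive solution; hence $x_i=\delta^i$ for all $i$, which is exactly the claim. The main obstacle I anticipate is not the algebra but the case analysis establishing $0<x_i<1$ and ruling out mixed-strategy configurations where some $P_i$ strictly best-responds — I would need to check carefully that the ``otherwise'' branches of the payoff tables (which make $u_i(s)$ constant in all the non-extreme cases) don't permit some exotic equilibrium, and that Lemma~\ref{notuseful1}'s conclusion $z=\prod x_j$ is genuinely available (it is, since it was proved for every Nash equilibrium). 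A secondary subtlety is confirming existence/consistency of the rest of the profile ($y$, $z$, and $Q,R$'s indifference) at $x_i=\delta^i$, though for \emph{this} lemma only the ``only if'' direction — every NE has $x_i=\delta^i$ — is needed.
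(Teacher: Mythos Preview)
Your algebraic endgame is correct and quite clean: once you know every $P_i$ is genuinely mixing, indifference gives $\delta^i/x_i + \prod_j(\delta^j/x_j)=2$, whence all the ratios $r_i=\delta^i/x_i$ coincide and $r+r^n=2$ forces $r=1$. This is a different route from the paper, which never writes down or solves the indifference system; instead it argues by two rounds of contradiction: first that $\prod_i x_i=\prod_i\delta^i$ (if the product were smaller, some $x_i<\delta^i$ and then $u_i(s)>u_i(t)$ forces $x_i=1$; if larger, symmetrically $x_i=0$), and then that each individual $x_i=\delta^i$ by the same comparison. Your approach is arguably more informative, since it exhibits why the equilibrium is unique rather than verifying it case by case.

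The one real gap is the case $x_j=1$. Your parenthetical ``the left side becomes too large relative to $2\prod x_j$'' does not yield a contradiction: if $x_j=1$ then $P_j$ is \emph{allowed} to strictly prefer $s$, so $u_j(s)>u_j(t)$ is perfectly consistent with equilibrium, and you cannot invoke \eqref{eq:indiff} for that player. What does work is this: if $x_j=1$ then $u_j(s)\ge u_j(t)$ reads $(\delta^j-2)\prod_{i\ne j}x_i+\prod_i\delta^i\ge 0$, so $\prod_i x_i=\prod_{i\ne j}x_i\le \prod_i\delta^i/(2-\delta^j)<\prod_i\delta^i$. Hence some $x_k<\delta^k$, and for that $k$ one gets $u_k(s)>2\prod_i x_i=u_k(t)$ strictly (exactly the paper's Case~1 computation), forcing $x_k=1$, a contradiction with $x_k<\delta^k<1$. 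With this patch your argument goes through; note that filling the gap essentially borrows the paper's inequality comparison, so in the end the two proofs share the same ``if $x_k<\delta^k$ then $x_k=1$'' lever, with your version adding the tidy $r+r^n=2$ uniqueness step on top.
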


\begin{proof}
First we show that $\prod_{i\in [n]} x_i=\prod_{i\in [n]} \delta^i$.
Consider for contradiction the following two cases:
\begin{flushleft}\begin{enumerate}\vspace{0.1cm}
\item[] {Case 1}: $\prod_{i\in [n]} x_i<\prod_{i\in [n]} \delta^i$. Then 
  there is an $i\in [n]$ such that $x_i<\delta^i$. 
For $P_i$, we have 
\begin{equation}\label{eq:ha1}
u_i(s)=\delta^i\prod_{j\ne i} x_j + \prod_{j\in [n]} \delta^j
> \prod_{j\in [n]} x_j + \prod_{j\in [n]} x_j=
2\prod_{j\in [n]} x_j=2z=u_i(t).
\end{equation}
This implies that $x_i=1$, contradicting with the assumption 
  that $x_i<\delta^i<1$ as $N\ge 2$.\vspace{-0.06cm}

\item[] {Case 2}: $\prod_{i\in [n]} x_i>\prod_{i\in [n]} \delta^i$. Then
  there is an $i\in [n]$ such that $x_i>\delta^i$.
For $P_i$, we have
\begin{equation}\label{eq:ha2}
u_i(s)=\delta^i \prod_{j\ne i} x_j + \prod_{j\in [n]} \delta^j
< \prod_{j\in [n]} x_j + \prod_{j\in [n]} x_j=
2\prod_{j\in [n]} x_j=2z=u_i(t).
\end{equation}
This implies that $x_i=0$, contradicting with the assumption
  that $x_i>\delta^i>0$.\vspace{0.1cm}
\end{enumerate}\end{flushleft}
As a result, we must have $\prod_i x_i=\prod_i \delta^i$,
  which also implies that $x_i>0$ for all $i\in [n]$.\vspace{0.01cm}
  
Now we show that $x_i=\delta^i$ for all $i$.
Assume for contradiction that $x_i\ne \delta^i$ for some $i\in [n]$.\vspace{0.1cm}
\begin{flushleft}\begin{enumerate}
\item[] {Case 1}: $x_i<\delta^i$. Then 
the same strict inequality (\ref{eq:ha1}) holds for $P_i$, which implies that 
  \\$x_i=1$, contradicting with the assumption that $x_i<\delta^i<1$ as $N\ge 2$. 
\item[] {Case 2}: $x_i>\delta^i$. Then
the same strict inequality (\ref{eq:ha2}) holds for $P_i$, which implies that
  \\$x_i=0$, contradicting with the assumption that $x_i>\delta^i>0$.\vspace{0.1cm}
\end{enumerate}\end{flushleft}
This finishes the proof of the lemma.
\end{proof}

Notice that Lemma \ref{notuseful1} and \ref{notuseful2} together imply
  that $\calG_{n,N}$ has a unique Nash equilibrium because
  of Lemma \ref{notuseful1} as well as the fact that $0<z<1$ implies
  $u_R(r_1)=y=1-y=u_R(r_2)$ and thus $y=1/2$.

\subsection{Generalized Radix Game}

We use $\calG_{n,N}$ to define an anonymous 
  game $\calG^*_{n,N}$, called the \emph{generalized radix game}, with
  the same set of $n+2$ players $\{P_1,\ldots,P_n,Q,R\}$ but seven strategies 
  $\{s_1,s_2,t,q_1,q_2,r_1,r_2\}$.
To this end, we replace strategy $s$ in $\calG_{n,N}$ with two of its
  duplicate strategies $s_1,s_2$ in $\calG^*_{n,N}$ and make sure that 
  the players in $\calG^*_{n,N}$ treat both $s_1$ and $s_2$ the same as 
  the old strategy $s$, and have their payoff 
  functions derived from those of players in $\calG_{n,N}$ in this fashion.
We will show in the next section that in any Nash equilibrium of
  $\calG^*_{n,N}$, player $P_i$ must have probability exactly $\delta^i$
  distributed among $s_1,s_2$.
   
For readers who are familiar with previous PPAD-hardness results of Nash
  equilibria in normal form games \cite{DGPJournal, 2Nash},
  this is the same trick used to derive the game \emph{generalized matching pennies}
  from \emph{matching pennies}.
We define $\calG^*_{n,N}$ formally as follows.

\begin{mygame}[\hspace{0.02cm}Generalized Radix Game $\calG_{n,N}^*$\hspace{0.02cm}]
Let $n\ge 1$ and $N\ge 2$ be two parameters. Let $\delta=1/N$. 
We use $\calG^*_{n,N}$ to denote an anonymous game 
  with the same $n+2$ players $\{P_1,\ldots,P_n,Q,R\}\hspace{-0.03cm}$ as 
  $\calG_{n,N}$ but now $7$ pure strategies $\{s_1,s_2,t,q_1,q_2,r_1,r_2\}$.
The payoff function $\emph{\pay}_T^*$ of a player $T$ in $\calG^*_{n.N}$ is 
  defined using $\emph{\pay}_T$ of the same player $T$ in $\calG_{n,N}$ as follows:
$$
\emph{\pay}_T^*\Big(b,\big(k_{s_1},k_{s_2},k_t, k_{q_1},k_{q_2},k_{r_1},k_{r_2}\big)\Big)=
\emph{\pay}_T\Big(\phi(b),\big(k_{s_1}+k_{s_2},k_t, k_{q_1},k_{q_2},k_{r_1},k_{r_2}\big)\Big),
$$
where $\phi(s_1)=\phi(s_2)=s$ and $\phi(b)=b$ for every other pure strategy.
\end{mygame}

Since the payoff of player $P_i$ is always $-1$ when playing $q_1,q_2,r_1$ or $r_2$,
  she is only interested in $s_1,s_2$ and $t$.
Similarly $Q$ is only interested in $q_1,q_2$ and $R$ is only interested in $r_1,r_2$.
As a result, a Nash equilibrium $\calX$ of $\calG^*_{n,N}$ can be
  fully specified by $2n+2$ numbers $(x_{i,1},x_{i,2},y,z:i\in [n])$,
  where $x_{i,1}$ (or $x_{i,2}$) denotes the probability of
  $P_i$ playing strategy $s_{1}$ (or strategy $s_2$, respectively), so the probability of
  $P_i$ playing $t$ is $1-x_{i,1}-x_{i,2}$.
We also let $x_i=x_{i,1}+x_{i,2}$ for each $i\in [n]$.

Given the definition of $\calG^*_{n,N}$ from $\calG_{n,N}$, 
  Lemma \ref{notuseful2} 
  suggests $x_i=x_{i,1}+x_{i,2}=\delta^i$, for all $i\in[n]$,
  in every Nash equilibrium $\calX$ of $\calG^*_{n,N}$.
This indeed follows from the main lemma of the next section
  concerning $\epsilon$-well-supported Nash equilibria of not only 
  the generalized radix game $\calG_{n,N}^*$ itself, but also 
  anonymous games obtained by
  perturbing payoff functions of $\calG_{n,N}^*$.

\section{Generalized Radix Game after Perturbation}

In this section, we analyze $\epsilon$-well-supported Nash equilibria of 
  anonymous games obtained by perturbing payoff functions of
  the generalized radix game $\calG^*_{n,N}$.
Recall that $n\ge 1$ and $N\ge 2$, and 
  we use $\pay_T^*$ to denote the payoff function of a player $T$ in $\calG^*_{n,N}$.
Given $x,y\in \mathbb{R}$ and $\xi\ge 0$, we write $x=y\pm \xi$ to denote $|x-y|\le \xi$. 
We first define anonymous games that are close to $\calG_{n,N}^*$.

\begin{definition}
For $\xi\ge 0$, we say an anonymous game 
  $\calG$ is \emph{$\xi$-close} to $\calG_{n,N}^*$ if
\begin{flushleft}\begin{enumerate}
\item $\calG$ has the same set $\{P_1,\ldots,P_n,Q,R\}$ of players and same set 
  of $7$ strategies as $\calG_{n,N}^*$.\vspace{-0.08cm}
 
\item For each player $T\in \{P_1,\ldots,P_n,Q,R\}$, her payoff function $\emph{\pay}_T$  
  in $\calG$ satisfies  
$$
\textsf{\emph{payoff}}_T(b,\kk)=\textsf{\emph{payoff}}_T^*(b,\kk)\pm \xi,
$$
for all $b\in \{s_1,s_2,t,q_1,q_2,r_1,r_2\}$ and all 
  histograms $\kk$ of strategies played by $n+1$ players.
\end{enumerate}\end{flushleft}
\end{definition}

To characterize $\epsilon$-well-supported Nash equilibria of a game $\calG$ 
  $\xi$-close to $\calG_{n,N}^*$
we first show that when $\epsilon$, $\xi$ are small enough,
  each player in $\calG$ remains only interested in a subset 
  of strategies, i.e., $\{s_1,s_2,t\}$ for $P_i$, $\{q_1,q_2\}$ for $Q$, and $\{r_1,r_2\}$ for $R$,
  in any $\epsilon$-well-supported Nash equilibrium of $\calG$.

\begin{lemma}\label{lem:interested}
Let $\calG$ be an anonymous game $\xi$-close to $\calG_{n,N}^*$
  for some $\xi\ge 0$.
When $2\hspace{0.02cm}\xi+\epsilon<1$, every $\epsilon$-well-supported Nash equilibrium
  of $\calG$ satisfies: player $P_i$ only plays $\{s_1,s_2,t\}$; player
  $Q$ only plays $\{q_1,q_2\}$; player $R$ only plays $\{r_1,r_2\}$.
\end{lemma}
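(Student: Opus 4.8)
The plan is to show, one player at a time, that every pure strategy outside a player's designated set is beaten by more than $\epsilon$ by one fixed strategy inside that set, so that the $\epsilon$-well-supported condition forces it to carry probability zero.

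First I would extract two facts from the construction of $\calG^*_{n,N}$ (and hence of the radix game $\calG_{n,N}$ of Game~\ref{gadget1}). (i)~The ``uninteresting'' strategies have payoff exactly $-1$ for \emph{every} histogram: $\pay^*_{P_i}(b,\kk)=-1$ for all $\kk$ when $b\in\{q_1,q_2,r_1,r_2\}$, and likewise $\pay^*_Q(b,\kk)=-1$ for $b\in\{s_1,s_2,t,r_1,r_2\}$ and $\pay^*_R(b,\kk)=-1$ for $b\in\{s_1,s_2,t,q_1,q_2\}$. This is immediate from the definition of $\pay^*_T$ via $\phi$, together with the convention that the corresponding payoffs in $\calG_{n,N}$ were set to $-1$ no matter what the others play. (ii)~Each player has a fixed ``witness'' strategy inside her designated set whose payoff is nonnegative for every histogram: $s_1$ for $P_i$ (its payoff is $\prod_{j\in[n]}\delta^j$ or $\delta^i+\prod_{j\in[n]}\delta^j$, both positive), $q_1$ for $Q$, and $r_1$ for $R$ (payoffs $0$ or $1$).

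Next, I would use $\xi$-closeness. Since every payoff entry of $\calG$ lies within $\xi$ of the corresponding entry of $\calG^*_{n,N}$, and since an expected payoff is a convex combination of payoff entries (the probabilities $\Pr_\calX[T,\kk]$ over $\kk$ sum to $1$), the expected payoff of any strategy of any player in $\calG$ lies within $\xi$ of its expected payoff in $\calG^*_{n,N}$. Hence, in any mixed profile $\calX$ and for any $b\in\{q_1,q_2,r_1,r_2\}$, $u_i(b,\calX)\le -1+\xi$ while $u_i(s_1,\calX)\ge -\xi$, so
$$u_i(b,\calX)+\epsilon \le -1+\xi+\epsilon < -\xi \le u_i(s_1,\calX),$$
the strict inequality being precisely the hypothesis $2\xi+\epsilon<1$. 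If in addition $\calX$ is an $\epsilon$-well-supported Nash equilibrium, the definition forces $x_{P_i,b}=0$, i.e. $P_i$ plays only $\{s_1,s_2,t\}$. The identical computation with $q_1$ in place of $s_1$ shows $Q$ plays only $\{q_1,q_2\}$, and with $r_1$ shows $R$ plays only $\{r_1,r_2\}$.

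I do not expect a real obstacle here; this is a warm-up normalization step. The two points that need a little care are (a)~checking from the construction that the uninteresting payoffs are \emph{exactly} $-1$ in $\calG^*_{n,N}$, so that the gap survives both the $\xi$-perturbation and the $\epsilon$-slack, and (b)~passing from the entrywise $\xi$-closeness of the payoff \emph{functions} to an additive $\xi$-closeness of the \emph{expected} payoffs, which is exactly where the normalization $\sum_{\kk}\Pr_\calX[T,\kk]=1$ enters.
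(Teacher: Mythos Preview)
Your proof is correct and follows essentially the same approach as the paper: compare the expected payoff of an uninteresting strategy (at most $-1+\xi$) against that of a designated strategy (at least $-\xi$) and invoke $2\xi+\epsilon<1$. The only cosmetic difference is that you single out one witness strategy ($s_1$, $q_1$, $r_1$) while the paper notes that \emph{all} designated strategies have nonnegative payoff in $\calG^*_{n,N}$; either version suffices.
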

\begin{proof}
We only prove (1) since the proof of (2) and (3) is similar.

Given an $\epsilon$-well-supported Nash equilibrium $\calX$, as the payoff of $P_i$ when playing
  $b\notin \{s_1,s_2,t\}$ is always $-1$ in $\calG_{n,N}^*$,
  her expected payoff when playing $b$ in $\calG$ is at most $-1+\xi$;
as the payoff of $P_i$ when playing $b\in \{s_1,s_2,t\}$ is always nonnegative
  in $\calG_{n,N}^*$, her expected payoff in $\calG$ is at least $-\xi$.
It follows from $2\hspace{0.02cm}\xi+\epsilon<1$ and the assumption of $\calX$ being an 
  $\epsilon$-well-supported equilibrium that $P_i$ only plays strategies in $\{s_1,s_2,t\}$
  with positive probability.
\end{proof}

It follows from Lemma \ref{lem:interested} that an $\epsilon$-well-supported Nash
  equilibrium of $\calG$ can be fully described by~a tuple of $2n+2$ numbers
  $(x_{i,1},x_{i,2},y,z:i\in [n])$, when $\xi,\epsilon$ satisfy $2\hspace{0.02cm}\xi+\epsilon<1$:
  $x_{i,1}$ denotes the probability of $P_i$ playing $s_1$, $x_{i,2}$ denotes 
  the probability of $P_i$ playing $s_2$, $y$ denotes the probability of $Q$
  playing $q_1$, and $z$ denotes the probability of $R$ playing $r_1$.

Recall that $\delta=1/N\le 1/2$. Let $\kappa=\prod_{i\in [n]} \delta^i.$
We prove the main lemma of this section.

\begin{lemma}\label{lem:perturb}
Let $\calG$ denote an anonymous game that is $\xi$-close 
  to $\calG_{n,N}^*$.
Suppose that $\xi,\epsilon\ge 0$ satisfy
\begin{equation}\label{condition}
\tau=\frac{36\hspace{0.02cm}\xi+18\hspace{0.02cm}\epsilon}{\kappa}\le 1/2.
\end{equation}
Then every $\epsilon$-well-supported Nash equilibrium of $\calG$ satisfies
$x_{i,1}+x_{i,2}=\delta^i\pm \tau\delta^i$ for all $i\in [n]$.
\end{lemma}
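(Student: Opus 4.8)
The plan is to mimic the structure of the proofs of Lemmas~\ref{notuseful1} and~\ref{notuseful2}, but now carry the perturbation parameters $\xi$ and $\epsilon$ through every inequality and quantify the slack. By Lemma~\ref{lem:interested} (whose hypothesis $2\xi+\epsilon<1$ is implied by~(\ref{condition}) since $\kappa\le 1$), an $\epsilon$-well-supported equilibrium $\calX$ of $\calG$ is described by $(x_{i,1},x_{i,2},y,z)$ with $x_i=x_{i,1}+x_{i,2}$. First I would record the approximate payoff expressions: in $\calG$, the expected payoff of $P_i$ for playing $s_1$ or $s_2$ is within $\xi$ of $\delta^i\prod_{j\ne i}x_j+\kappa$, the payoff for playing $t$ is within $\xi$ of $2z$, the payoffs of $Q$ for $q_1,q_2$ are within $\xi$ of $\prod_j x_j$ and $z$ respectively, and those of $R$ for $r_1,r_2$ are within $\xi$ of $y$ and $1-y$. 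These follow by linearity of expectation from $\xi$-closeness, exactly as in Fact~\ref{fact1}.

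The first substantive step is a perturbed version of Lemma~\ref{notuseful1}: I would show $z=\prod_j x_j\pm O(\xi+\epsilon)$. If $z$ were much larger than $\prod_j x_j$ (by more than, say, $2\xi+\epsilon$), then $u_Q(q_2)>u_Q(q_1)+\epsilon$ in $\calG$, forcing $y=0$ by the well-supported condition; then $u_R(r_2)>u_R(r_1)+\epsilon$ (again using closeness and a large enough gap), forcing $z=0$, contradicting $z$ being large. The symmetric argument handles $z$ much smaller than $\prod_j x_j$. This pins $z$ to $\prod_j x_j$ up to an additive error that is a small constant times $\xi+\epsilon$ — note, crucially, an \emph{absolute} error, not yet scaled by $\kappa$. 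Then I would feed this into a perturbed version of Lemma~\ref{notuseful2} applied to the product $\prod_i x_i$: comparing $u_i(s_1)\approx \delta^i\prod_{j\ne i}x_j+\kappa$ against $u_i(t)\approx 2z\approx 2\prod_j x_j$, if $\prod_i x_i<\kappa$ then some $x_i<\delta^i$, so $\delta^i\prod_{j\ne i}x_j<\prod_j x_j<\kappa$, whence $u_i(s_1)$ exceeds $u_i(t)$ by roughly $\kappa-\prod_j x_j$ minus error terms; when this gap exceeds $\epsilon$, the well-supported condition forces $x_{i,1},x_{i,2}>0$ to be played fully, i.e. $x_i=1$, contradiction. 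Symmetrically for $\prod_i x_i>\kappa$. The upshot is $\prod_i x_i=\kappa\pm O(\xi+\epsilon)$, hence also each $x_i$ is bounded away from $0$.

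The final and most delicate step is to upgrade ``$\prod_i x_i\approx\kappa$'' to ``each $x_i\approx\delta^i$ with \emph{relative} error $\tau$'', which is where the division by $\kappa$ in~(\ref{condition}) enters and where I expect the main obstacle. Suppose some $x_i>\delta^i(1+\tau)$. I want a contradiction via the chain: $u_i(s_1)-u_i(t)\approx \kappa - 2\prod_j x_j + \delta^i\prod_{j\ne i}x_j - \prod_j x_j$; I need to show that, given the constraint $\prod_j x_j\approx\kappa$, pushing $x_i$ above $\delta^i$ forces $\delta^i\prod_{j\ne i}x_j$ to differ from $\kappa$ by enough (proportionally to $\tau\kappa$) that the payoff gap exceeds $\epsilon$, forcing $x_i\in\{0,1\}$ and a contradiction. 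The care needed is: (i) $\prod_{j\ne i}x_j = (\prod_j x_j)/x_i$, and since $\prod_j x_j$ is within a small absolute error of $\kappa$ while $x_i\le 1$, a multiplicative deviation of $x_i$ from $\delta^i$ translates into $\delta^i\prod_{j\ne i}x_j$ deviating from $\kappa$ by about $\tau\kappa/(1+\tau)$, i.e.\ order $\tau\kappa$; (ii) the accumulated additive errors from $\xi$-closeness (three of them: in $u_i(s_1)$, in $u_i(t)=2z$, and in the $z\approx\prod x_j$ and $\prod x_j\approx\kappa$ estimates) must be dominated by this $\tau\kappa$ gap, which is exactly what the constant $36\xi+18\epsilon$ over $\kappa$ in the definition of $\tau$ is engineered to guarantee — one traces that $z$ carries error $\sim(2\xi+\epsilon)$, $\prod x_j$ carries error $\sim(2\xi+\epsilon)$ more, doubling through $u_i(t)=2z$, plus the $\xi$ directly on $u_i(s_1)$, plus the $\epsilon$ slack in the well-supported inequality, and checks that $\tau\kappa/3$ (say) beats the total. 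I would organize this as: assume $x_i\ge\delta^i(1+\tau)$ for contradiction, derive $u_i(s_1)>u_i(t)+\epsilon$ strictly, conclude $x_i=1$, contradict $\delta^i(1+\tau)\le 1$ (which holds since $\delta\le 1/2$ and $\tau\le 1/2$); symmetrically assume $x_i\le\delta^i(1-\tau)$, derive $u_i(t)>u_i(s_1)+\epsilon$, conclude $x_i=0$, contradiction. The bookkeeping of constants is routine but must be done honestly to land at the stated $\tau$.
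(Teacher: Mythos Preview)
Your overall architecture is exactly the paper's: first pin $z$ to $\prod_j x_j$ within $O(\xi+\epsilon)$, then pin $\prod_i x_i$ to $\kappa$ within $O(\xi+\epsilon)$, then leverage $\prod_{j\ne i}x_j=(\prod_j x_j)/x_i$ to convert the product estimate into a relative estimate on each $x_i$. The constant-tracking you describe is also what the paper does (it gets $z=\prod_j x_j\pm(2\xi+\epsilon)$, then $\prod_i x_i=\kappa\pm(6\xi+3\epsilon)$, hence $z=\kappa\pm(8\xi+4\epsilon)$, and finally compares $\tau\kappa$ against errors of size roughly $30\xi+14\epsilon$ and $24\xi+11\epsilon$).

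However, your final case split is reversed, and as written the contradictions do not fire. If $x_i>(1+\tau)\delta^i$, then $\delta^i\prod_{j\ne i}x_j=\delta^i\cdot(\prod_j x_j)/x_i<(\prod_j x_j)/(1+\tau)$, so $u_i(s_1)\approx \delta^i\prod_{j\ne i}x_j+\kappa$ is \emph{smaller} than $u_i(t)\approx 2z\approx 2\kappa$ by roughly $\tau\kappa/(1+\tau)$; this forces $x_{i,1}=x_{i,2}=0$, i.e.\ $x_i=0$, contradicting $x_i>(1+\tau)\delta^i>0$. Symmetrically, $x_i<(1-\tau)\delta^i$ makes $u_i(s_1)>u_i(t)+\epsilon$ and forces $x_i=1$, contradicting $x_i<\delta^i<1$. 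Your version (``$x_i\ge\delta^i(1+\tau)\Rightarrow u_i(s_1)>u_i(t)+\epsilon\Rightarrow x_i=1$, contradict $\delta^i(1+\tau)\le 1$'') draws the inequality in the wrong direction, and the sentence ``contradict $\delta^i(1+\tau)\le 1$'' is not a contradiction at all, since $x_i=1\ge\delta^i(1+\tau)$ is perfectly consistent with the assumption. (There is a related slip earlier: when $x_i<\delta^i$ you wrote $\delta^i\prod_{j\ne i}x_j<\prod_j x_j$; it should be $>$, though your subsequent conclusion there was correct.) Once you swap the two cases and state the contradictions as above, the argument goes through and matches the paper.
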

\begin{proof}
Let $\hspace{-0.03cm}\calX\hspace{-0.03cm}=(x_{i,1},x_{i,2},y,z:i\in [n])$ be an 
  $\epsilon$-well-supported Nash equilibrium of $\calG$.
For each $i\in $ $[n]$ we let $x_i=x_{i,1}+x_{i,2}$.
Since $\calG$ is $\xi$-close to $\calG_{n,N}^*$, we have the following estimates:
\begin{flushleft}\begin{enumerate}
\item The expected payoff of $P_i$ for playing strategy $s_1$ or $s_2$ is 
\begin{eqnarray*}
&u_i(s_1),\hspace{0.04cm}u_{i}(s_2)=\left(\delta^i\cdot \Pr\big[k_{s_1}+k_{s_2}=n-1\big] + \prod_{j \in [n]} \delta^j\right)\pm\xi 
= \left(\delta^i \prod_{j\ne  i} x_j +\kappa\right) \pm\xi,&\end{eqnarray*}
where we write $k_{s_1},k_{s_2}$ to denote the numbers of players that play $s_1,s_2$ respectively,
  as seen by player $P_i$ (same below). 
The expected payoff of $P_i$ for playing $t$ is $u_{i}(t)=2z\pm \xi$.

\item The expected payoff of $Q$ for playing $q_1$ is 
\begin{eqnarray*}
&u_{Q}(q_1)= \Pr\big[k_{s_1}+k_{s_2}=n\big]\pm \xi=\prod_{j\in [n]} x_j\pm \xi.&
\end{eqnarray*}
The expected payoff of $Q$ for playing $q_2$ is 
$u_Q(q_2)=z\pm \xi$.

\item
The expected payoff of $R$ for playing $r_1$ is $u_R(r_1)=y\pm \xi$
  and for $r_2$ is $u_R(r_2)=(1-y)\pm \xi$.  
\end{enumerate}\end{flushleft}
To rest of the proof follows those of Lemma \ref{util_s5} and Lemma \ref{powers}.
First we show that $z$ must satisfy
\begin{equation}\label{haha1}
 z=\prod_{j\in [n]} x_j\pm (2\xi+\epsilon).
\end{equation}
The proof is the same as that of Lemma \ref{util_s5}, 
  using the assumption that $\calX$ is $\epsilon$-well-supported.

Given (\ref{haha1}), next we show that the $x_i$'s satisfy
\begin{equation}\label{haha2}
\prod_{i\in [n]}x_i=\prod_{i\in [n]} \delta^i\pm 
  (6 \xi+3\epsilon)=\kappa\pm (6 \xi+3\epsilon).
\end{equation}
To this end we follow the proof of the first part of Lemma \ref{powers} and
  consider the following two cases:\vspace{0.06cm}
\begin{flushleft}\begin{enumerate}
\item[] {Case 1}: $\prod_{i\in [n]} x_i<\kappa-(6\xi+3\epsilon)$. Then 
  there exists an $i\in [n]$ such that $x_i<\delta^i$. 
For $P_i$: 
$$
\hspace{-0.1cm}u_i(s_1)\ge \delta^i\prod_{j\ne i} x_j + \kappa-\xi
> 2\prod_{j\in [n]} x_j +5\xi+3\epsilon\ \ \ \text{and}\ \ \ 
u_i(t)\le 2z+\xi\le 2\prod_{j\in [n]}x_j+5\xi+2\epsilon.
$$
This implies that $P_i$ does not play $t$ in $\calX$, an $\epsilon$-well-supported 
  Nash equilibrium of $\calG$, and thus, $x_i=x_{i,1}+x_{i,2}=1$, contradicting with  
  $x_i<\delta^i<1$ as $N\ge 2$. 

\item[] {Case 2}: $\prod_{i\in [n]} x_i>\kappa+(6\xi+3\epsilon)$. Then
  there exists an $i\in [n]$ such that $x_i>\delta^i$.
For $P_i$:
$$
\hspace{-0.1cm}u_i(s_1),\hspace{0.03cm}u_i(s_2)\le \delta^i \prod_{j\ne i} x_j + \kappa+\xi
< 2\prod_{j\in [n]} x_j -5\xi-3\epsilon\ \ \ \text{and}\ \ \ 
u_i(t)\ge 2\prod_{j\in [n]} x_j-5\xi-2\epsilon.
$$
This implies that $P_i$ plays neither $s_1$ nor $s_2$ and thus,
  we have $x_{i,1}=x_{i,2}=0$ and $x_i=0$ as well, contradicting with 
  $x_i>\delta^i>0$.\vspace{0.06cm}
\end{enumerate}\end{flushleft}
By (\ref{haha1}) and (\ref{haha2}), $z=\kappa\pm (8\xi+4\epsilon)$.
(\ref{haha2}) also implies that $x_i>0$ since $\kappa>0$
  and $\kappa\ge 72\xi+36\epsilon$ by (\ref{condition}).

Finally, assume for contradiction that either $x_i<(1-\tau)\delta^i$
  or $x_i>(1+\tau)\delta^i$ for some $i\in [n]$.\vspace{0.08cm}
\begin{flushleft}\begin{enumerate}
\item[] Case 1: {$x_i<(1-\tau)\delta^i$}. Then using $\tau\le 1/2$ and
  $1\le 1/(1-\tau)\le 2$, we have
\begin{align*}
u_i(s_1)-u_i(t)\ge \delta^i\prod_{j\ne i} x_j+
\kappa-2z-2\xi>  \frac{\kappa-6\xi-3\epsilon}{1-\tau} 
 +\kappa -2z-2\xi\ge\tau\kappa-30\xi-14\epsilon.
\end{align*}
Plugging in the definition of $\tau$ in (\ref{condition}), we have
  $u_i(s_1)-u_i(t)>\epsilon$ and thus, $x_i=1$, which contradicts
  with the assumption that $x_i<(1-\tau)\delta^i<1$.
\item[] Case 2: $x_i>(1+\tau)\delta^i$. Then using $\tau\le 1/2$ and
  $2/3\le 1/(1+\tau)\le 1$, we have
\begin{align*}
u_i(s_1)-u_i(t)\le \delta^i\prod_{j\ne i} 
  x_j+\kappa-2z+2\xi < \frac{\kappa+6\xi+3\epsilon}{1+\tau}
  +\kappa -2z+2\xi\le -\frac{2\tau\kappa}{3}+24\xi+11\epsilon.
\end{align*}
The same inequality holds for $u_i(s_2)-u_i(t)$.
Plugging in (\ref{condition}), we have $u_i(s_1)-u_i(t)<-\epsilon$
  as well as $u_i(s_2)-u_i(t)<-\epsilon$.
This in turn implies that $x_{i,1}=x_{i,2}=0$ and thus, $x_i=0$, which 
  contradicts with the assumption that $x_i>(1+\tau)\delta^i>0$.\vspace{0.08cm}
\end{enumerate}\end{flushleft}
This finishes the proof of the lemma.
\end{proof}

\section{Reduction from Polymatrix Games to Anonymous Games}\label{main reduction}

In this section we prove the hardness part of Theorem \ref{main-theorem}.
For this purpose we present a polynomial time reduction from the problem
  of finding a $1/n$-well-supported Nash equilibrium in a polymatrix game
  to the problem of finding an $\epsilon$-well-supported Nash equilibrium
  in an anonymous game with $7$ strategies, for some exponentially small $\epsilon$.
We first give some intuition behind this quite involved reduction
  in Section \ref{hahasec}.
Details of the reduction and the proof of its correctness 
  are then presented in Section \ref{sec:reduction} and \ref{sec:correctness}, respectively,
  with a key technical lemma proved in Section \ref{sec:estimation}.
We finish the proof of the hardness part  
  in Section \ref{sec:hardness} by showing that any approximate Nash equilibrium of~an anonymous game
  can be converted into a well-supported equilibrium efficiently
  (since Theorem \ref{main-theorem} is concerned with approximate Nash equilibria).
  


\subsection{Overview of the Reduction}\label{hahasec}

Given as input a polymatrix game specified by a
  matrix $\AA \in [0,1]^{2n\times 2n}$, our goal is to construct~in polynomial
  time an anonymous game $\calG_\AA$, and 
  show that every $\epsilon$-well-supported Nash equilibrium
  of $\calG_\AA$, where $\epsilon=1/2^{n^6}$, can be used to recover 
  a $(1/n)$-well-supported equilibrium of $\AA$ in polynomial time.
Note that this is not exactly the PPAD-hardness result as claimed 
  in Theorem \ref{main-theorem} but we will fill in the gap
  in Section \ref{sec:hardness} with some standard arguments.

Given $\AA$, we construct $\calG_\AA$ by perturbing payoff functions of the 
  Generalized Radix game ${\cal{G}}^*_{n,N}$ with $N=2^n$,
  so that $\calG_\AA$ is $\xi$-close to $\calG^*_{n,N}$ for some 
  exponentially small $\xi>0$ to be specified later.
(Thus, $\calG_\AA$ has the same set of $n+2$ players 
  $\{P_1,\ldots,P_n,Q,R\}$ as well as the same set of $7$ strategies  
  $\{s_1,s_2,t,q_1,q_2,r_1,r_2\}$  as $\calG^*_{n,N}$.) 
By Lemma \ref{lem:interested} and Lemma \ref{lem:perturb} we know that
  every $\epsilon$-well-supported equilibrium of $\calG_\AA$
  can be fully described by a tuple $\calX=(x_{i,1},x_{i,2},y,z:i\in [n])$
  that satisfies
\begin{equation}\label{eq:approximate}
x_{i,1}+x_{i,2}\approx \delta^i
\end{equation}
  for each $i\in [n]$, where $\delta=1/N=1/2^n$.

Our construction of $\calG_\AA$  
  has player $P_\ell$ simulate 
  row $2\ell-1$ and $2\ell$ of the polymatrix game $\AA$ for each $\ell\in [n]$. 
The goal is to show at the end that,
  after normalizing $(x_{\ell,1},x_{\ell,2})$, i.e., probabilities of $P_\ell$ playing $s_1,s_2$
  in an $\epsilon$-well-supported equilibrium $\calX$ of $\calG_\AA$,
  into a distribution $(y_{2\ell-1},y_{2\ell})$:
\begin{equation}\label{eq:back}
y_{2\ell-1}=\frac{x_{\ell,1}}{x_{\ell,1}+x_{\ell,2}}\ \ \ \text{and}\ \ \ 
y_{2\ell}=\frac{x_{\ell,2}}{x_{\ell,1}+x_{\ell,2}},
\end{equation}
we get a $(1/n)$-well-supported Nash equilibrium $\yy=(y_1,\ldots,y_{2n})$ of $\AA$.
By (\ref{eq:approximate}) we have
$$
y_{2\ell-1}\approx N^\ell\cdot x_{\ell,1}\ \ \ \text{and}\ \ \ 
y_{2\ell}\approx N^{\ell}\cdot x_{\ell,2}.
$$

For player $P_\ell$ to simulate row $2\ell-1$ and $2\ell$ of the polymatrix game $\AA$,
  we perturb the original payoff function $\pay^*_\ell$ of  
  $P_\ell$ in ${\cal{G}}^*_{n,N}$ in a way such that the following two linear forms of $\yy$:
$$
\AA_{2\ell-1}\cdot \yy=\sum_{j\notin \{2\ell-1,2\ell\}}A_{2\ell-1,j}\cdot y_j\ \ \ \text{and}
\ \ \ \AA_{2\ell}\cdot \yy=\sum_{j\notin\{2\ell-1,2\ell\}} A_{2\ell,j}\cdot y_j
$$  
appear as additive terms in the expected payoffs $u_\ell(s_1,\calX)$ and 
  $u_\ell(s_2,\calX)$ of $P_\ell$ obtained from $s_1,s_2$, respectively.
Let $u^*_\ell(\sigma, {\bxx})$ denote the expected payoff of player $P_\ell$ in 
  the original generalized radix game ${\cal{G}}^*_{n,N}$ for strategies $\sigma \in \{s_1,s_2\}$.
Then more specifically, 
  we would like to perturb carefully~the payoff functions of ${\cal{G}}^*_{n,N}$ such that 
  for every $\ell\in [n]$, the expected payoffs of player $P_\ell$ in 
  an $\epsilon$-well-supported Nash equilibrium $\bxx$ of $\cal{G}_\AA$ satisfy\vspace{0.06cm}
\begin{align} 
u_\ell(s_1,\bxx)&\approx u^*_\ell(s_1,\bxx) + \xi^*\cdot \AA_{2\ell-1}\cdot \yy  \nonumber\\ \label{reduct}
&\approx u^*_\ell(s_1,\bxx)+\xi^* \sum_{j\ne \ell} N^j
\big(A_{2\ell-1,2j-1}\cdot x_{j,1}+A_{2\ell-1,2j}\cdot x_{j,2}\big)\\[0.5ex]
 u_\ell(s_2, {\bxx})&\approx 
u^*_\ell(s_2, {\bxx}) + 
\xi^* \cdot \AA_{2\ell}\cdot \yy \nonumber\\ \label{reduct2}
&\approx u^*_\ell(s_2,\bxx)+\xi^* \sum_{j\ne \ell} N^j
\big(A_{2\ell,2j-1}\cdot x_{j,1}+A_{2\ell,2j}\cdot x_{j,2}\big)
\end{align}
where $\xi^*$ is a parameter small enough to make sure that 
  the resulting game is $\xi$-close to $\calG_{n,N}^*$.

If one can perturb the payoff functions of players $P_\ell$ in $\calG^*_{n,N}$ 
  so that (\ref{reduct}) and (\ref{reduct2}) hold for every $\epsilon$-well-supported 
  Nash equilibrium $\calX$ of $\calG_\AA$, then the vector $\yy$ obtained
  from $\calX$ using (\ref{eq:back}) must be a $(1/n)$-well-supported equilibrium of $\AA$. 
To see this, assume for contradiction that 
\begin{equation}\label{eq:contradiction}
\AA_{2\ell-1}\cdot \yy>\AA_{2\ell}\cdot \yy+1/n
\end{equation}
but $y_{2\ell}>0$. Using (\ref{eq:contradiction}), (\ref{reduct}), and (\ref{reduct2}), we have
  $u_\ell(s_1,\bxx)$ is bigger than $u_\ell(s_2,\bxx)$ by $\xi^*/n$
  (assuming that errors hidden in both (\ref{reduct}) and (\ref{reduct2}) are negligible). 
As long as our choice of $\xi^*$ satisfies $\xi^*/n>\epsilon$ 
  we must have $x_{\ell,2}=0$ and thus, $y_{2\ell}=0$ from (\ref{eq:back}).
  
However, perturbing the generalized radix game so that (\ref{reduct}) and (\ref{reduct2})
  hold is challenging. While\vspace{0.03cm}
\begin{equation}\label{eq:desired}
\sum_{j\ne \ell} N^j\big(A_{2\ell-1,2j-1}\cdot x_{j,1}+A_{2\ell-1,2j}\cdot x_{j,2}\big)\ \ \ 
\text{and}\ \ \ 
\sum_{j\ne \ell} N^j\big(A_{2\ell,2j-1}\cdot x_{j,1}+A_{2\ell,2j}\cdot x_{j,2}\big)
\end{equation}
are merely two linear forms of $(x_{j,1},x_{j,2}:j\ne \ell)$ from $\calX$, they are
  extremely difficult to obtain due to the nature of anonymous games: 
  the expected payoff of player $P_\ell$ is
\begin{equation}\label{eq:expected}
u_\ell(\sigma,\calX)=\sum_{\kk \in K} \pay_\ell(\sigma,\kk) \cdot \Pr_{\cal{X}}[P_\ell,\kk] ,
\end{equation}
a linear form of $\Pr_{\cal{X}}[P_\ell,\kk]$, the probability of $P_\ell$ 
  seeing histogram $\kk$ given $\calX$.
As each $\Pr_{\cal{X}}[P_\ell,\kk]$ is a highly complex and symmetric expression of 
  variables in $\calX$, it is not clear how one can extract from (\ref{eq:expected})
  the desired linear forms of (\ref{eq:desired}).

This is where the fact that $x_{i,1}+x_{i,2}\approx \delta^i$ 
  helps us tremendously. (Recall that this holds as long as the generalized
  radix game $\calG^*_{n,N}$ and  $\calG_\AA$ are $\xi$-close.)
The core of the construction of~$\calG_\AA$ uses the following key technical lemma which
  we refer to as the \emph{estimation lemma}.
It shows that under any mixed 
  strategy profile $\calX=(x_{i,1},x_{i,2},y,z:i\in [n])$ such that $x_{i,1}+x_{i,2}\approx \delta^i$,
  there is indeed a linear form of $\Pr_{\cal{X}}[P_\ell,\kk]$ that gives us a close 
  approximation of $x_{j,1}$ (or $x_{j,2}$), $j\ne \ell$, and its coefficients
  can be computed in polynomial time in $n$.
We delay its proof to Section \ref{sec:estimation}.

\begin{lemma}[Estimation Lemma] \label{lem:estimation-lemma}
Let $N=2^n$ and ${\lambda= 2^{-n^3}}$.
Given $\ell\in [n]$ and $j\ne \ell\in [n]$ one can compute 
  in polynomial time in $n$ vectors $\BB^{[\ell,j]},\CC^{[\ell,j]}$
  of length $|K|$ \emph{(}indexed by $\kk\in K$\emph{)} such that every
  mixed strategy profile $\calX=(x_{i,1},x_{i,2},y,z:i\in [n])$ with
  $x_{i,1}+x_{i,2}=\delta^i\pm \lambda$ for all $i$ satisfies
\begin{equation*}
\sum_{\kk \in K} B^{[\ell,j]}_\kk \cdot \emph{\Pr}_{\calX}[P_\ell,\kk] = x_{j,1} \pm 
  O\hspace{-0.05cm}\left( j^2 \delta^{j+1}\right)\ \ \ \text{and}\ \ \ 
\sum_{\kk \in K} C^{[\ell,j]}_\kk \cdot \emph{\Pr}_{\calX}[P_\ell,\kk] = x_{j,2} \pm 
  O\hspace{-0.05cm}\left( j^2 \delta^{j+1}\right).
\end{equation*} 
Moreover, the absolute value of each entry of $\BB^{[\ell,j]}$ and 
  $\CC^{[\ell,j]}$ is at most $N^{n^2}$. 
\end{lemma}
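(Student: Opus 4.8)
The plan is to extract $x_{j,1}$ from the histogram distribution $\Pr_\calX[P_\ell,\kk]$ seen by $P_\ell$ by isolating, via a carefully chosen linear functional, the ``contribution'' of player $P_j$ to the count $k_{s_1}$. The intuition is that when $\calX$ satisfies the scaling property $x_{i,1}+x_{i,2}=\delta^i\pm\lambda$, the histogram $(k_{s_1},k_{s_2})$ that $P_\ell$ sees is a sum of $n-1$ independent Bernoulli-type contributions of wildly different magnitudes: player $P_j$ contributes $s_1$ with probability $x_{j,1}\approx \delta^j/(\text{stuff})$, which is of order $\delta^j$, while players $P_1,\dots,P_{j-1}$ contribute with much larger probabilities and players $P_{j+1},\dots,P_n$ with much smaller ones. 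So I would look at the event ``exactly one of $P_1,\dots,P_j$ plays $s_1$ or $s_2$, and it is the designated one'' — more precisely, I would build a weighted indicator keyed to the value of $k_{s_1}+k_{s_2}$ restricted in a suitable sense — whose expectation, after the dust settles, equals $x_{j,1}$ plus lower-order terms in $\delta^{j+1}$.

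\medskip

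\noindent\textbf{Key steps, in order.} \emph{Step 1: Separate the players by scale.} Write $P_\ell$'s view as the convolution of the laws of the contributions of the ``big'' players $\{P_1,\dots,P_{j-1}\}$, the ``target'' player $P_j$, the ``small'' players $\{P_{j+1},\dots,P_n\}\setminus\{\ell\}$, and the fixed contributions of $Q,R$ (which play outside $\{s_1,s_2\}$, so they contribute $0$ to $k_{s_1}+k_{s_2}$). \emph{Step 2: Design the functional on $(k_{s_1},k_{s_2})$.} Choose coefficients $B^{[\ell,j]}_\kk$ depending only on $(k_{s_1},k_{s_2})$ (and perhaps $\kk$ restricted to those coordinates), so that $\sum_\kk B^{[\ell,j]}_\kk\Pr_\calX[P_\ell,\kk]$ computes the probability that, among the first $j$ main players, \emph{exactly one} of $P_1,\dots,P_j$ is ``active'' (plays in $\{s_1,s_2\}$) and it plays $s_1$ — but weighted so the dominant term picks out exactly $P_j$'s activity in coordinate $s_1$. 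Concretely I expect $B^{[\ell,j]}_\kk$ to be a polynomial in $N$ of degree roughly $n^2$ in the variable $k_{s_1}$ (or a divided-difference / inclusion–exclusion-style alternating sum over small blocks of $k_{s_1}+k_{s_2}$ values), which is where the bound $N^{n^2}$ on the entries comes from. \emph{Step 3: Expand and cancel.} Expand $\sum_\kk B^{[\ell,j]}_\kk\Pr_\calX[P_\ell,\kk]$ as a polynomial in the $x_{i,1},x_{i,2}$; using $x_{i,1}+x_{i,2}=\delta^i\pm\lambda$ substitute and track magnitudes. The leading term should be exactly $x_{j,1}$; every other monomial either involves a product $x_{i,1}x_{i'} $ with $i,i'\le j$ which is $O(\delta^{i+i'})=O(\delta^{j+1})$ after accounting for combinatorial factors, or involves a ``small'' player $P_i$, $i>j$, contributing at least $\delta^{j+1}$, or is an error from the $\pm\lambda$ slack, which with $\lambda=2^{-n^3}$ and coefficients $\le N^{n^2}=2^{n^3}$ is at most $O(1)$ times $\lambda$ — here I would need to be more careful and ensure $\lambda$ is small enough relative to the coefficient blow-up that the slack error is also absorbed into $O(j^2\delta^{j+1})$. \emph{Step 4:} Do the symmetric construction for $\CC^{[\ell,j]}$ and $x_{j,2}$, and verify the polynomial-time computability of the coefficients (they are explicit combinatorial/polynomial expressions in $n,N,j$, so this is routine). \emph{Step 5:} Observe that the number of ``big'' players is at most $j-1$, so the relevant histogram coordinates $k_{s_1}+k_{s_2}$ range over $O(j)$ values where the functional is nontrivial, giving the $j^2$ factor in the error.

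\medskip

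\noindent\textbf{Main obstacle.} The hard part will be Step 3 — pinning down an explicit functional whose expansion genuinely telescopes down to $x_{j,1}$ with all spurious monomials provably of order $\delta^{j+1}$ or smaller, \emph{while simultaneously} keeping the coefficients bounded by $N^{n^2}$ and keeping the $\pm\lambda$-slack error controlled. These three requirements pull against each other: a cleaner cancellation tends to demand larger coefficients, larger coefficients amplify the $\lambda$-slack, and a smaller $\lambda$ is fine but the lemma commits to $\lambda=2^{-n^3}$, so the coefficient degree must stay around $n^2$. I expect the resolution to be an inclusion–exclusion identity over the ``big'' players combined with a geometric-series / base-$N$ ``digit extraction'' argument that exploits precisely that the $x_{i}$'s live at cleanly separated scales $\delta^i=N^{-i}$, so that reading off $x_{j,1}$ is morally like reading the $j$-th digit of a base-$N$ expansion; making that heuristic into exact algebra with honest error bounds is the technical core, and the excerpt itself flags it as ``the technically most involved part of the paper.''
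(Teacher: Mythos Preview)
Your high-level intuition (scale separation, base-$N$ digit extraction) is right, but the concrete functional you propose in Step~2 does not work, and the fix is the actual technical content of the lemma.

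\medskip

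\textbf{Why ``exactly one active'' fails.} The probability that exactly one of $P_1,\dots,P_j$ plays in $\{s_1,s_2\}$ and that one plays $s_1$ is
\[
\sum_{i\le j} x_{i,1}\prod_{i'\le j,\,i'\ne i}(1-x_{i',1}-x_{i',2})
\;\approx\; \sum_{i\le j} x_{i,1}\prod_{i'\ne i}(1-\delta^{i'}),
\]
whose dominant term is $x_{1,1}$ (order $\delta$), not $x_{j,1}$ (order $\delta^j$). No bounded reweighting of this event will swap the dominance, because the players with smaller index are exponentially more likely to be the unique active one. Your parenthetical ``weighted so the dominant term picks out $P_j$'' is exactly the missing idea, not a detail.

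\medskip

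\textbf{What the paper does instead.} Write $\calL=\{i\le j:i\ne \ell\}$, $m=|\calL|$, and $h(\calL)=\prod_{i\in\calL}\delta^i$. The paper conditions on the opposite extreme: the event $\{k_{s_1}+k_{s_2}=m\}$, which essentially forces \emph{all} of $P_i$, $i\in\calL$, to be active (any other $m$-subset of $\calS$ contributes $O(\delta\, h(\calL))$; this is the scale-separation step). On that event the joint law of $(k_{s_1},k_{s_2})$ is, up to the displayed error, that of $m$ independent Bernoullis with parameters $x_{i,1}/(\delta^i)$. An explicit binomial inversion (the identity $\sum_{a\ge 1} a\,\Pr[k_{s_1}=a,\,k_{s_2}=m-a]=\mathbb E[k_{s_1}\cdot\mathbf 1_{k_{s_1}+k_{s_2}=m}]$, after cleaning up the error terms) then recovers
\[
P_1^m \;:=\; h(\calL)\sum_{i\in\calL} N^i x_{i,1}
\]
with error $O(m^2\delta\,h(\calL))$. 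Now the weights $N^i$ point the right way: the dominant term is $N^j x_{j,1}$. Repeating with $m-1$ in place of $m$ gives $P_1^{m-1}=h(\calL')\sum_{i\in\calL'}N^i x_{i,1}$ (where $\calL'=\calL\setminus\{j\}$), and the two-level subtraction
\[
x_{j,1}=\delta^j\Bigl(\tfrac{P_1^m}{h(\calL)}-\tfrac{P_1^{m-1}}{h(\calL')}\Bigr)
\]
isolates $x_{j,1}$ with error $O(m^2\delta^{j+1})$. The division by $h(\calL)$ is where the coefficients of size $N^{O(j^2)}\le N^{n^2}$ arise.

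\medskip

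\textbf{Two smaller issues in your error sketch.} First, ``$x_{i,1}x_{i',1}$ with $i,i'\le j$ is $O(\delta^{i+i'})=O(\delta^{j+1})$'' is false in general ($i+i'$ can be as small as $3$); in the paper's route such cross terms cancel exactly via the binomial identity, not by magnitude. Second, your worry that $N^{n^2}\cdot\lambda=1$ would swamp the target is misplaced: the large coefficients come from the factor $1/h(\calL)$, and the $\lambda$-errors enter as $O(N^3\lambda)\subset O(\delta\,h(\calL))$ \emph{before} that division, so they scale down with $h(\calL)$ just like the main term and end up $\ll \delta^{j+1}$.
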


With the estimation lemma in hand we can derive linear forms of 
  $\Pr_{\cal{X}}[P_\ell,\kk]$ that are close approximations of the two linear forms
  of $(x_{j,1},x_{j,2}:j\ne \ell)$
  in (\ref{eq:desired}). We then use the coefficients of these
  linear forms of $\Pr_{\cal{X}}[P_\ell,\kk]$ to perturb $\calG_{n,N}^*$ and wrap up the construction of $\calG_\AA$.

\subsection{Construction of Anonymous Game $\calG_\AA$}\label{sec:reduction}

Let $\AA\in [0,1]^{2n\times 2n}$ denote the input polymatrix game.
We need the following parameters:
$$
N=2^n,\ \ \delta={1}/{N}= {2^{-n}},\ \ \lambda={2^{-n^3}},\ \
\xi={2^{-n^4}},\ \ \xi^*={2^{-n^5}}\ \ 
\text{and}\ \ \epsilon={2^{-n^6}}.
$$
We remark that we do not attempt to optimize the parameters here
  but rather set them in different scales to facilitate the analysis later.

\begin{mygame}[Construction of $\calG_\AA$]
We use the polynomial-time algorithm promised in the Estimation Lemma to
  compute $\BB^{[\ell,j]}$ and $\CC^{[\ell,j]}$, for all $\ell\in [n]$ and $j\ne \ell\in [n]$.

Starting with the generalized radix game ${\cal{G}}^*_{n,N}$,
  we modify payoff functions of players $P_1,\ldots,P_n$ as 
  follows \emph{(}payoff functions of $Q$ and $R$ remain unchanged\emph{)}.
Let $\emph{\pay}^*_\ell$ denote the payoff function of $P_\ell$ in $\calG_{n,N}^*$.
Then for each player $P_{\ell}$ and each histogram $\kk \in K$, we set 
\begin{align*}
\emph{\pay}_{{\ell}} (s_1,\kk) &= \emph{\pay}_{{\ell}}^* (s_1,\kk) + 
  \xi^*\sum_{j\ne \ell} N^j\left(A_{2\ell-1,2j-1}\cdot B_\kk^{[\ell,j]}+
  A_{2\ell-1,2j}\cdot C_{\kk}^{[\ell,j]}\right)\\[0.5ex]
\emph{\pay}_{\ell} (s_2,\kk) &= \emph{\pay}^*_{{\ell}} (s_2,\kk) + \xi^*
  \sum_{j\neq \ell} N^j\left(A_{2\ell, 2j - 1}\cdot B_\kk^{[\ell,j]} + 
    A_{2\ell, 2j}\cdot C_\kk^{[\ell,j]}\right), 
\end{align*} 
and keep all other payoffs of $P_\ell$ the same
  \emph{(}i.e., $\emph{\pay}_\ell(\sigma,\kk)=
  \emph{\pay}_\ell^*(\sigma,\kk)$ for all
  $\sigma\notin\{s_1,s_2\}$\emph{)}.
\end{mygame}

A few properties of $\calG_\AA$ then follow directly from its construction.
First, observe that entries of $\AA$ lie in $[0,1]$ and entries of
  $\BB^{[\ell,j]}$ and $\CC^{[\ell,j]}$ have absolute values at most $N^{n^2}=2^{n^3}$.
We have

\begin{property}
Given $\AA\in [0,1]^{2n\times 2n}$, $\cal{G}_\AA$ is an 
  anonymous game $\xi$-close to 
  ${\cal{G}}^*_{n,N}$ where $\xi ={2^{-n^4}}$.
\end{property}

By Lemma \ref{lem:interested}, an $\epsilon$-well-supported Nash
  equilibrium of $\calG_\AA$ is fully described by
  a $(2n+2)$-tuple $\calX=(x_{i,1},x_{i,2},y,z:i\in [n])$,
  where  $P_i$ plays strategies $s_1,s_2$ and $t$ with probabilities
  $x_{i,1},x_{i,2}$ and $1-x_{i,1}-x_{i,2}$, respectively.
We also get the following corollary from Lemma \ref{lem:perturb}.

\begin{corollary}
Every $\epsilon$-well-supported equilibrium $\calX
  =(x_{i,1},x_{i,2},y,z:i\in [n])$ of $\calG_\AA$ satisfies
$$
x_{i,1}+x_{i,2}=\delta^i\pm \lambda,\ \ \ \text{for all $i\in [n]$.}
$$
\end{corollary}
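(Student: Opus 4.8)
The plan is to apply Lemma~\ref{lem:perturb} directly to the game $\calG_\AA$, since the statement of the corollary is essentially an instantiation of that lemma with the specific parameters fixed in this section. First I would invoke the preceding Property, which asserts that $\calG_\AA$ is $\xi$-close to $\calG^*_{n,N}$ with $\xi = 2^{-n^4}$; this is exactly the hypothesis needed to bring Lemma~\ref{lem:perturb} to bear. The only thing that remains is to verify that the pair $(\xi,\epsilon)$ with $\xi = 2^{-n^4}$ and $\epsilon = 2^{-n^6}$ satisfies the smallness condition~(\ref{condition}), namely $\tau = (36\xi + 18\epsilon)/\kappa \le 1/2$, and then to check that the resulting error bound $\tau\delta^i$ promised by Lemma~\ref{lem:perturb} is at most $\lambda = 2^{-n^3}$, which is the bound the corollary claims.

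The key computation is to estimate $\kappa = \prod_{i\in[n]}\delta^i = \delta^{\binom{n+1}{2}} = N^{-\binom{n+1}{2}} = 2^{-n\binom{n+1}{2}}$, which is of order $2^{-\Theta(n^3)}$; more precisely $\kappa = 2^{-n(n+1)/2 \cdot n} = 2^{-n^2(n+1)/2}$. Since $\xi = 2^{-n^4}$ and $\epsilon = 2^{-n^6}$ are both far smaller than $\kappa$ (their exponents $n^4, n^6$ dominate $n^2(n+1)/2 \approx n^3/2$ for large $n$), we get $36\xi + 18\epsilon \le 2^{-n^4+6}$, say, and hence $\tau = (36\xi+18\epsilon)/\kappa \le 2^{-n^4 + 6 + n^2(n+1)/2}$, which is easily $\le 1/2$ for all $n \ge 1$ (one should double-check the tiny cases $n=1,2$ separately, or simply note the exponent is negative and large). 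So condition~(\ref{condition}) holds, and Lemma~\ref{lem:perturb} applies to every $\epsilon$-well-supported Nash equilibrium $\calX$ of $\calG_\AA$, yielding $x_{i,1} + x_{i,2} = \delta^i \pm \tau\delta^i$ for all $i\in[n]$.

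Finally I would convert the relative error $\tau\delta^i$ into the absolute error $\lambda$ claimed in the corollary: since $\delta^i \le \delta \le 1$ and $\tau \le 2^{-n^4 + 6 + n^2(n+1)/2} \le 2^{-n^3}$ for all sufficiently large $n$ (again with the small cases handled directly, using that $\tau$'s exponent is a negative quantity of magnitude $\Theta(n^4)$), we obtain $\tau\delta^i \le \tau \le \lambda = 2^{-n^3}$, so $x_{i,1} + x_{i,2} = \delta^i \pm \lambda$ as desired. I do not expect any genuine obstacle here: the corollary is a bookkeeping consequence of Lemma~\ref{lem:perturb} once the chosen parameter scales are plugged in, and the only mild care needed is to confirm that the chain of exponential separations $\epsilon \ll \xi \ll \kappa$ and $\tau \ll \lambda$ genuinely holds for every $n \ge 1$ rather than merely asymptotically — which is why the parameters were deliberately set at such widely separated scales in the first place.
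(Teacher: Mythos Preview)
Your proposal is correct and matches the paper's approach exactly: the paper states this corollary without a detailed proof, simply noting that it follows from Lemma~\ref{lem:perturb} together with the preceding Property that $\calG_\AA$ is $\xi$-close to $\calG^*_{n,N}$. Your write-up fills in precisely the parameter check the paper leaves implicit, verifying condition~(\ref{condition}) and that $\tau\delta^i \le \lambda$ under the chosen scales $\epsilon = 2^{-n^6} \ll \xi = 2^{-n^4} \ll \kappa = 2^{-n^2(n+1)/2}$ and $\tau \ll \lambda = 2^{-n^3}$ (for $n$ large enough, as the paper assumes throughout).
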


Therefore, the conditions of the estimation lemma are met.
It follows that

\begin{property}\label{simpleproperty}
Given an $\epsilon$-well-supported equilibrium $\calX$ of
  $\calG_\AA$, the expected payoffs of $P_\ell$ satisfy\vspace{0.12cm}
\begin{align*}
u_\ell(s_1,\calX)&=u_\ell^*(s_1,\calX)+\xi^*\sum_{j\ne \ell}
  N^j\big(A_{2\ell-1,2j-1}\cdot x_{j,1} +A_{2\ell-1,2j}\cdot x_{j,2}\big)\pm 
  O(n^3\xi^*\delta)\ \ \ \ \text{and}\\[0.6ex]
u_\ell(s_2,\calX)&=u_\ell^*(s_2,\calX)+\xi^*\sum_{j\ne \ell}
  N^j\big(A_{2\ell,2j-1}\cdot x_{j,1} +A_{2\ell,2j}\cdot x_{j,2}\big)\pm 
  O(n^3\xi^*\delta).
\end{align*}
\end{property}

\subsection{Correctness of the Reduction}\label{sec:correctness}

We are now ready to show that, given an $\epsilon$-well-supported
  Nash equilibrium $\calX$ of $\calG_\AA$,
  the vector $\yy$ derived from $\calX$ using (\ref{eq:back}) 
  is a $(1/n)$-well-supported Nash equilibrium
  of the polymatrix game $\AA$.

\begin{lemma}\label{reduct_1}
Let $\calX=(x_{i,1}, x_{i,2},y,z:i\in [n])$ be an $\epsilon$-well supported 
  Nash equilibrium of ${\cal{G}}_\AA$. 
Then the vector $\yy\in [0,1]^{2n}$ derived from $\calX$
  using \emph{(\ref{eq:back})} is a $(1/n)$-well-supported Nash equilibrium of $\AA$.
\end{lemma}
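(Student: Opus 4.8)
The plan is to transfer the well-supported equilibrium conditions from $\calG_\AA$ to $\AA$ by combining Property~\ref{simpleproperty} with the scaling relation $x_{\ell,1}+x_{\ell,2}=\delta^\ell\pm\lambda$ and the normalization (\ref{eq:back}). First I would fix a player $i\in[n]$ and suppose, for contradiction, that $\yy$ fails the $(1/n)$-well-supported condition at $i$, say $\AA_{2i-1}\cdot\yy>\AA_{2i}\cdot\yy+1/n$ but $y_{2i}>0$ (the symmetric case is identical). From $y_{2i}>0$ and (\ref{eq:back}) we get $x_{i,2}>0$, so it suffices to derive $u_i(s_2,\calX)<u_i(s_1,\calX)-\epsilon$, contradicting that $\calX$ is an $\epsilon$-well-supported equilibrium of $\calG_\AA$ (which would force $x_{i,2}=0$).

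The heart of the argument is estimating $u_i(s_1,\calX)-u_i(s_2,\calX)$. Subtracting the two displays in Property~\ref{simpleproperty}, the $u^*_i$ terms are equal (since $\pay^*_i(s_1,\kk)=\pay^*_i(s_2,\kk)$ in $\calG^*_{n,N}$), so
\begin{align*}
u_i(s_1,\calX)-u_i(s_2,\calX)=\xi^*\sum_{j\ne i}N^j\big((A_{2i-1,2j-1}-A_{2i,2j-1})x_{j,1}+(A_{2i-1,2j}-A_{2i,2j})x_{j,2}\big)\pm O(n^3\xi^*\delta).
\end{align*}
Next I would rewrite $N^j x_{j,1}$ and $N^j x_{j,2}$ in terms of $\yy$: since $x_{j,1}+x_{j,2}=\delta^j\pm\lambda$ and $\delta^j=N^{-j}$, we have $N^j x_{j,1}=y_{2j-1}+N^j x_{j,1}\big(1-\tfrac{1}{N^j(x_{j,1}+x_{j,2})}\big)$, and $|1-\tfrac{1}{N^j(x_{j,1}+x_{j,2})}|=O(N^j\lambda)$ while $x_{j,1}\le x_{j,1}+x_{j,2}\le \delta^j+\lambda$, so $N^j x_{j,1}=y_{2j-1}\pm O(\lambda N^j)$, and likewise $N^j x_{j,2}=y_{2j}\pm O(\lambda N^j)$. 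Substituting, and using $|A_{k,\ell}|\le 1$ plus $N^j\le N^n$, the accumulated replacement error over all $j$ is $O(n\lambda N^n)=O(n\,2^{-n^3}\,2^{n^2})$, and we obtain
\begin{align*}
u_i(s_1,\calX)-u_i(s_2,\calX)=\xi^*\big(\AA_{2i-1}\cdot\yy-\AA_{2i}\cdot\yy\big)\pm O(n^3\xi^*\delta)\pm O(n\lambda N^n\xi^*).
\end{align*}
By the contradiction hypothesis $\AA_{2i-1}\cdot\yy-\AA_{2i}\cdot\yy>1/n$, so $u_i(s_1,\calX)-u_i(s_2,\calX)>\xi^*/n-O(n^3\xi^*\delta)-O(n\lambda N^n\xi^*)$.

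Finally I would check that the chosen parameters make the right-hand side exceed $\epsilon$. With $N=2^n$, $\delta=2^{-n}$, $\lambda=2^{-n^3}$, $\xi^*=2^{-n^5}$, $\epsilon=2^{-n^6}$: the main term $\xi^*/n=2^{-n^5}/n$ dominates both error terms ($n^3\xi^*\delta=n^3 2^{-n^5-n}$ and $n\lambda N^n\xi^*=n\,2^{-n^3+n^2-n^5}$ are both $o(\xi^*/n)$) and also dwarfs $\epsilon=2^{-n^6}$ for large $n$; one should note the construction takes $n$ large enough, or absorb small cases into the reduction. Hence $u_i(s_1,\calX)>u_i(s_2,\calX)+\epsilon$, which by the $\epsilon$-well-supported property of $\calX$ in $\calG_\AA$ forces $x_{i,2}=0$, hence $y_{2i}=0$ by (\ref{eq:back}) — contradiction. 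The symmetric case with roles of $2i-1,2i$ swapped gives $y_{2i-1}=0$. Since also $y_{2i-1}+y_{2i}=1$ by construction (\ref{eq:back}), $\yy$ is a valid $(1/n)$-well-supported Nash equilibrium of $\AA$. The main obstacle is bookkeeping: making sure every approximation error introduced (the $O(j^2\delta^{j+1})$ from the estimation lemma already folded into Property~\ref{simpleproperty}, the $O(n^3\xi^*\delta)$ slack, and the $O(\lambda N^n)$ normalization error) is genuinely negligible against the signal $\xi^*/n$ and against $\epsilon$, which is exactly why the parameters were placed at such widely separated scales.
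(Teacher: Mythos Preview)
Your proof is correct and follows essentially the same approach as the paper's: assume a violation at some player, use Property~\ref{simpleproperty} together with $u_i^*(s_1,\calX)=u_i^*(s_2,\calX)$ to express $u_i(s_1,\calX)-u_i(s_2,\calX)$ in terms of $\AA_{2i-1}\cdot\yy-\AA_{2i}\cdot\yy$ plus negligible errors, and conclude the gap exceeds $\epsilon$. Your normalization error $N^j x_{j,1}=y_{2j-1}\pm O(\lambda N^j)$ is in fact slightly sharper than the paper's $O(N^{2j}\lambda)$ bound (both are valid, since $N^j x_{j,1}\le 1+N^j\lambda\approx 1$), but this makes no difference to the argument.
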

\begin{proof}
Firstly, note that $x_{i,1}+x_{i,2}>0$ so $\yy$ 
  is well defined and satisfies $y_{2i-1}+y_{2i}=1$ for all $i$.

Assume towards a contradiction that  
  $\yy$ derived from $\calX$ using (\ref{eq:back})
  is not a $(1/n)$-well-supported Nash equilibrium of $\AA$, 
  i.e., there is a player $\ell\in [n]$ such that, without loss
  of generality, 
\begin{equation}\label{final11}
\AA_{2\ell -1}\cdot \yy > \AA_{2\ell}\cdot \yy + 1/n
\end{equation}
but $y_{2\ell}>0$, which in turn implies that $x_{\ell,2}>0$. 

Since $x_{j,1}+x_{j,2}=\delta^j\pm \lambda$, we have
$$
y_{2j-1}=\frac{ x_{j,1}}{x_{j,1}+x_{j,2}} = N^jx_{j,1} 
  \pm\hspace{0.04cm} O(N^{2j}\lambda)=N^jx_{j,1}\pm O\hspace{0.04cm}
  (N^{2n}\lambda).
$$
Similarly we also have $y_{2j}=N^jx_{j,2}\pm O\hspace{0.04cm}(N^{2n}\lambda)$.
Combining these with Property \ref{simpleproperty}, we have\vspace{0.1cm}
\begin{align}\nonumber
u_\ell(s_1,\calX)&=u_\ell^*(s_1,\calX)+\xi^*\cdot \AA_{2\ell-1}\cdot \yy\pm 
  \left(O(n^3\xi^*\delta)+O(n\xi^*N^{2n}\lambda)\right)\ \ \ \ \text{and}\\[0.8ex]
u_\ell(s_2,\calX)&=u_\ell^*(s_2,\calX)+\xi^*\cdot \AA_{2\ell}\cdot \yy\pm 
  \left(O(n^3\xi^*\delta)+O(n\xi^*N^{2n}\lambda)\right).\label{final22} \\[-2.6ex]
  \nonumber
\end{align}
By our choices of parameters, $n\xi^*N^{2n}\lambda\ll n^3\xi^*\delta$
  so the former can be absorbed into the latter.

Combining (\ref{final11}) and (\ref{final22})
  (as well as the fact that $u_\ell^*(s_1,\calX)=u_\ell^*(s_2,\calX)$
  because the payoffs of $s_1$ and $s_2$ are exactly the same 
  in the generalized radix game $\calG^*_{n,N}$), we have
$$
u_\ell(s_1,\calX)-u_\ell(s_2,\calX)\ge \xi^*\left(\AA_{2\ell-1}\cdot \yy-
  \AA_{2\ell}\cdot \yy\right)-O\hspace{0.04cm}(n^3\xi^*\delta)\ge 
  \xi^*/n-O\hspace{0.04cm}(n^3\xi^*\delta)>\epsilon,
$$
for sufficiently large $n$, by our choices of parameters $\delta$, $\xi^*$ and $\epsilon$.
It then follows that $x_{\ell,2}=0$, since $\calX$ is assumed 
  to be an $\epsilon$-well-supported Nash equilibrium of $\calG_\AA$,
  contradicting with $y_{2\ell}>0$.
This finishes the proof.
\end{proof}

\subsection{Proof of the Hardness Part of Theorem \ref{main-theorem}}\label{sec:hardness}

From our definitions of $\calG_{n,N}^*$ and $\calG_\AA$,
  it is clear that all payoffs of $\calG_\AA$ are in $[-1,3]$. 
Using standard arguments (invariance of Nash equilibria under shifting
  and scaling), we can easily see that given an anonymous 
  game $\calG = (n,\alpha,\{\textsf{payoff}_p\})$ such that~all 
  payoffs are in the interval $[a,b]$, where $a,b \in \mathbb{R}$ and $a < b$,
a mixed strategy profile $\bxx$ is an $(b-a)\epsilon$-well-supported  
  equilibrium of $\calG$ if and only if $\bxx$ is an $\epsilon$-well-supported 
  equilibrium of $\calG' = (n,\alpha,\{{\pay}'_p\})$, where 
\begin{equation}\label{shifting_scaling}
\pay'_p(\sigma,\kk) = \frac{\pay_p(\sigma,\kk) - a}{b-a}.
\end{equation}
The new game $\calG'$ now has all payoffs from in $[0,1]$.

As a result, we can construct $\calG'_\AA$ from $\calG_\AA$ 
  in polynomial time such that all payoffs of $\calG_\AA'$ lie in $[0,1]$, 
  and Lemma \ref{reduct_1} holds for all $(\epsilon/4)$-well-supported 
  Nash equilibria of $\calG'_\AA$. 
It follows that

\begin{corollary}\label{before_padding}
Fix any $\alpha\ge 7$. The problem of finding a $2^{-(n^6+2)}$-well-supported Nash
  equilibrium of an anonymous game with $\alpha$ actions and $[0,1]$ payoffs is {PPAD-hard}.
\end{corollary}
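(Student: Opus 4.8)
The plan is to turn the reduction built in Sections~\ref{sec:reduction}--\ref{sec:correctness} into the stated PPAD-hardness result using two routine manipulations together with the PPAD-completeness of computing a $(1/n)$-well-supported equilibrium in a polymatrix game (the theorem of~\cite{market} recalled in Section~\ref{sec:pre}). The first manipulation, rescaling payoffs into $[0,1]$, is essentially already carried out in the paragraph preceding the corollary; the second is padding the strategy set so that $\alpha$ may be any value $\ge 7$.

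\emph{The case $\alpha=7$.} As noted just above, $\calG'_\AA$ is built from $\AA$ in polynomial time, has $[0,1]$ payoffs and seven strategies, and Lemma~\ref{reduct_1} applies to its $(2^{-n^6}/4)$-well-supported equilibria. Since $2^{-n^6}/4=2^{-(n^6+2)}$, every $2^{-(n^6+2)}$-well-supported equilibrium $\calX$ of $\calG'_\AA$ yields, via the polynomial-time map~(\ref{eq:back}), a $(1/n)$-well-supported equilibrium of $\AA$, and therefore computing a $2^{-(n^6+2)}$-well-supported equilibrium of a seven-strategy $[0,1]$-payoff anonymous game is PPAD-hard. (Strictly speaking $\calG'_\AA$ has $n+2$ players, so the error is $2^{-((m-2)^6+2)}$ in terms of the anonymous game's own player count $m$; since $2^{-(m^6+2)}\le 2^{-((m-2)^6+2)}$ and, for well-supported equilibria, an equilibrium for a smaller approximation parameter is also one for a larger parameter, the hardness holds with the error written as $2^{-(n^6+2)}$ as well, for all $n$ large enough and trivially for the remaining finitely many values.)

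\emph{The case $\alpha>7$.} I would append $\alpha-7$ dummy pure strategies to every player of $\calG'_\AA$, each with payoff identically $0$; this keeps every payoff in $[0,1]$ and the construction polynomial-time. It then suffices to prove that in any $2^{-(n^6+2)}$-well-supported equilibrium of the padded game no dummy receives positive probability, for then such an equilibrium is supported on the original seven strategies, agrees there with a $2^{-(n^6+2)}$-well-supported equilibrium of $\calG'_\AA$, and the case $\alpha=7$ applies verbatim. To prove this I would exhibit, for each player, one strategy whose expected payoff is at least $1/4$ under \emph{every} mixed profile: for $P_\ell$ the strategy $s_1$, since $\pay^*_\ell(s_1,\kk)\ge\kappa$ for all $\kk$ while the perturbation added in Section~\ref{sec:reduction} has magnitude at most $\xi^*\cdot 2n\cdot N^{n}\cdot N^{n^2}=2^{-\Omega(n^5)}$, far below $\kappa=\prod_{i\in[n]}\delta^i=2^{-\Theta(n^3)}$, so after the shift-and-scale the payoff remains above $1/4$ for every $\kk$; for $Q$ the strategy $q_2$, whose $\calG^*_{n,N}$-payoff $\mathbf{1}[k_{r_1}=1]$ rescales into $\{1/4,1/2\}$; and for $R$ the strategy $r_1$, symmetrically. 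Since $2^{-(n^6+2)}<1/4$, the well-supported condition then forces every strategy with expected payoff $0$ — in particular each dummy, and also $q_1,q_2,r_1,r_2$ for the players $P_\ell$ and every strategy outside $\{q_1,q_2\}$ (resp.\ $\{r_1,r_2\}$) for $Q$ (resp.\ $R$) — to receive probability $0$, as needed.

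\emph{Main obstacle.} There is no deep obstacle, since the hard work was done in Lemmas~\ref{lem:perturb}, \ref{lem:estimation-lemma}, and~\ref{reduct_1}. The one genuinely quantitative point is the padding argument: one must verify that the ``witness'' strategies keep expected payoff bounded away from $0$ after \emph{both} the payoff perturbation of Section~\ref{sec:reduction} and the shift-and-scale into $[0,1]$, which is precisely where the separation among the scales of $\xi^*$, $\lambda$, $N^{n^2}$ and $\kappa$ enters; the rest is routine bookkeeping verifying that the whole construction is a polynomial-time many-one reduction from computing a $(1/n)$-well-supported equilibrium of a polymatrix game.
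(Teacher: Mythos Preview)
Your proposal is correct and follows the same approach as the paper. The paper's own proof is essentially the single sentence ``It follows that'' after the shift-and-scale paragraph, so what you have written is a faithful fleshing-out: the rescaling of $\calG_\AA$ into $\calG'_\AA$ turns the $\epsilon=2^{-n^6}$ bound into $\epsilon/4=2^{-(n^6+2)}$, and the extension to $\alpha>7$ via dummy strategies (together with your player-count remark) is the routine bookkeeping the paper leaves implicit.
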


This can be further strengthened using a standard padding argument.

\begin{lemma}\label{padding}
Fix any $\alpha\in \mathbb{N}$ and $a>b>0$.
There is a polynomial-time reduction from the problem of finding 
  a $(2^{-n^a})$-well-supported equilibrium to that of finding 
  a $(2^{-n^b})$-well-supported equilibrium, in an anonymous game
  with $\alpha$ actions and $[0,1]$ payoffs.
\end{lemma}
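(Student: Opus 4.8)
The plan is to use the standard trick of artificially inflating the number of players so that an error which is exponential in the new player count becomes exponential with a \emph{larger} exponent when re-expressed in terms of the old player count. Concretely, given an anonymous game $\calG$ on $n$ players for which we want to find a $(2^{-n^a})$-well-supported equilibrium, I would build a new anonymous game $\calG'$ on $m = m(n)$ players, where $m$ is a polynomial in $n$ chosen so that $2^{-m^b} \le 2^{-n^a}$, i.e. $m \ge n^{a/b}$. Since $a/b$ is a fixed rational, taking $m = \lceil n^{\lceil a/b\rceil}\rceil$ (a polynomial in $n$) suffices, and $\calG'$ can be written down in time polynomial in the size of $\calG$.

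The construction of $\calG'$ from $\calG$: keep the original $n$ players with their payoff functions essentially unchanged, and add $m - n$ ``dummy'' players. The dummy players should be given payoff functions that (i) make every dummy player indifferent among all $\alpha$ strategies regardless of what anyone else does (e.g.\ a constant payoff, say $1/2$, which lies in $[0,1]$), so that dummies impose no constraint and can play anything; and (ii) leave the payoffs of the original $n$ players unaffected by the dummies' choices. Point (ii) is the only delicate part: in an anonymous game, player $p$'s payoff depends on the histogram $\Psi(\ss_{-p})$ over \emph{all} other players, so I cannot literally ignore the dummies. The fix is to have each original player $p$ read off only the portion of the histogram contributed by the other original players. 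Formally, define $\pay'_p(\sigma, \kk')$ for an $(m-1)$-player histogram $\kk'$ by first ``projecting away'' the dummy contribution. This works cleanly if we instead give the dummies a distinct block of strategies (or, since $\alpha$ is fixed, reuse an existing strategy but arrange payoffs so no original player's payoff ever depends on the count of that strategy among the specific dummy indices) — the cleanest route is to have original players' payoffs be functions only of the counts among strategies that dummies never play, which is possible as long as $\alpha$ is large enough; if $\alpha$ is too small one instead argues that in any equilibrium the dummies play a fixed known profile (forced by a tiny gadget), and subtracts that known contribution. I would pick whichever of these is cleanest given the exact hypothesis $\alpha \in \mathbb{N}$, likely adding one extra ``bookkeeping'' strategy if needed, or simply noting $\alpha$ can be padded upward for free.

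The correctness argument is then immediate: because the dummies are indifferent, any strategy profile of $\calG'$ restricts to a profile of $\calG$, and the well-supported condition for the original players in $\calG'$ is (by construction) identical to the well-supported condition in $\calG$. So a $(2^{-m^b})$-well-supported equilibrium $\bxx'$ of $\calG'$, when restricted to the original $n$ players, is a $(2^{-m^b})$-well-supported equilibrium of $\calG$, and since $2^{-m^b} \le 2^{-n^a}$ it is in particular $(2^{-n^a})$-well-supported. The main obstacle, as flagged above, is the bookkeeping needed to insulate the original players' payoffs from the presence of the dummies within the rigid anonymous-game payoff format; everything else (the polynomiality of $m$, the $[0,1]$ payoff range, the equilibrium transfer) is routine.
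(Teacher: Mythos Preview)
Your high-level padding idea is exactly the paper's approach: blow up the player count to $m \approx n^{a/b}$ so that a $(2^{-m^b})$-well-supported equilibrium of the padded game is automatically $(2^{-n^a})$-well-supported for the original game. The polynomiality, the $[0,1]$ payoff range, and the equilibrium transfer are indeed routine once the padded game is set up correctly.

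There is, however, a genuine gap in your execution. Your default choice---making the dummies \emph{indifferent} (constant payoff $1/2$)---does not work: in a well-supported equilibrium indifferent dummies may play an arbitrary mixed profile, so the histogram seen by an original player contains an unknown dummy contribution that cannot be ``projected away.'' You cannot define $\pay'_p$ so that it reads off only the original players' counts when you have no control over which strategies the dummies occupy. Your fallback of adding a bookkeeping strategy or ``padding $\alpha$ upward'' is also not available, because the lemma fixes $\alpha$.

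The option you list as (c)---force the dummies to a known pure strategy and subtract that contribution---is the one that actually works, and it is precisely what the paper does. Give each dummy payoff $1$ for strategy $1$ and $0$ otherwise; then in any $\epsilon$-well-supported equilibrium with $\epsilon<1$ every dummy plays strategy $1$ with probability one. Now define the original player $p$'s payoff in the padded game by
\[
\pay'_p(\sigma,\kk)=
\begin{cases}
\pay_p\big(\sigma,(k_1-(m-n),k_2,\ldots,k_\alpha)\big) & \text{if } k_1\ge m-n,\\
0 & \text{otherwise.}
\end{cases}
\]
Since the $m-n$ dummies all sit on strategy $1$, the first case always applies, and the shifted histogram is exactly the histogram of the other $n-1$ original players. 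Hence the restriction of any $\epsilon$-well-supported equilibrium of the padded game to the first $n$ players is an $\epsilon$-well-supported equilibrium of $\calG$. Commit to this version and the proof is complete.
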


\def\padG{\textsf{pad}\calG}

\begin{proof}
For convenience, we will refer to the problem of finding a 
  $(2^{-n^a})$-well-supported equilibrium as problem $\textsc{A}$ 
  and the other as problem $\textsc{B}$.

Let $\calG\hspace{-0.01cm}=\hspace{-0.01cm}(n,\alpha,\{\pay_p\})$ denote an
  input anonymous game of problem \textsc{A}.
We define a new game $\padG=(n^{t},\alpha,\{\pay_p'\})$ as follows, 
  where $t=a/b>1$ and thus, $n^t>n$. 
To this end, 
  define a map $\phi: \mathbb{Z}^{\alpha} \rightarrow \mathbb{Z}^{\alpha}$ such that 
  $\phi(k_1,\ldots,k_{\alpha}) = (k_1 - (n^t - n), k_2,\ldots,k_{\alpha})$. 
We then define payoff functions of players $\{1,\ldots,n^t\}$ in $\padG$ as follows:
\begin{flushleft}\begin{itemize}
\item For each $i > n$, the payoff function of player $i$ is given by
$$
\text{{\textsf{payoff}}}'_{i}(\sigma,\kk)=\begin{cases}
    \hspace{0.06cm}1& \text{if $\sigma = 1$}\\[0.4ex]
    \hspace{0.06cm}0 & \text{otherwise}
\end{cases}
$$
So player $i$ always plays strategy $1$ 
  in any $\epsilon$-well-supported equilibrium with $\epsilon<1$.
  
\item The payoff of each player $i\in [n]$ is given by 
$$
\text{{\textsf{payoff}}}'_{i}(\sigma,\kk)= 
\begin{cases}
    \hspace{0.06cm} \text{{\textsf{payoff}}}_{i}(\sigma,\phi(\kk))&  
    \text{if $k_1\ge n^t-n$}\\[0.6ex]
    \hspace{0.06cm} 0 & \text{otherwise}
\end{cases}\vspace{0.04cm}
$$
Note that in any $\epsilon$-well-supported equilibrium with $\epsilon<1$,
  the latter case never occurs.
\end{itemize}
\end{flushleft}

By the definition of $\padG$, it is easy to show that 
  $\calX$ is an $\epsilon$-well-supported equilibrium in $\padG$, for some $\epsilon<1$,
  iff 1) each player $i>n$ plays strategy $1$ with probability $1$ and 2)
  the mixed strategy profile of the first $n$ players in $\calX$ is an $\epsilon$-well-supported
  equilibrium of $\calG$.
As a result, a solution~to $\padG$ as an input of problem \textsc{B}
  must be an $\epsilon$-approximate equilibrium of $\calG$ with
  $\epsilon=2^{-(n^t)^b}=2^{-n^a}.$ 
As $\padG$ can be constructed from $\calG$ in polynomial time, 
  this finishes the proof of the lemma.
\end{proof}

Combining Corollary \ref{before_padding} and Lemma \ref{padding}, we have

\begin{corollary}\label{after_padding}
Fix any $\alpha\ge 7$ and $c>0$. The problem of finding a $(2^{-n^c})$-well-supported
  Nash~equilibrium in an anonymous game with $\alpha$ actions and $[0,1]$ payoffs is PPAD-hard.
\end{corollary}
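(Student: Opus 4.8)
The plan is to read off the corollary from two facts already in hand --- Corollary~\ref{before_padding}, the PPAD-hardness of finding a $2^{-(n^6+2)}$-well-supported equilibrium in an anonymous game with $\alpha\ge 7$ actions and $[0,1]$ payoffs, and the padding reduction of Lemma~\ref{padding}, which trades a large exponent for an arbitrarily small one --- together with a trivial monotonicity observation, and this last is really the only ingredient not already stated.

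The monotonicity observation is: if $\epsilon'(n)\le\epsilon(n)$ for all sufficiently large $n$, then every $\epsilon'(n)$-well-supported equilibrium of a game on $n$ players is automatically an $\epsilon(n)$-well-supported equilibrium, so the identity map on inputs is a polynomial-time reduction from ``find an $\epsilon$-well-supported equilibrium'' to ``find an $\epsilon'$-well-supported equilibrium'' within the same class of games. First I would apply this with $\epsilon(n)=2^{-(n^6+2)}$ and $\epsilon'(n)=2^{-n^7}$, which is legitimate since $n^6+2\le n^7$ for $n\ge 2$ (and PPAD-hardness is asymptotic, so we may assume $n\ge 2$): this upgrades Corollary~\ref{before_padding} to the statement that finding a $2^{-n^7}$-well-supported equilibrium in an anonymous game with $\alpha\ge 7$ actions and $[0,1]$ payoffs is PPAD-hard. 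Then I would split on $c$. For $0<c<7$, Lemma~\ref{padding} with $a=7$ and $b=c$ gives a polynomial-time reduction from this PPAD-hard problem to the problem of finding a $2^{-n^c}$-well-supported equilibrium; one checks that the padded game the lemma outputs still has $\alpha$ pure strategies and payoffs in $[0,1]$, so the target problem is PPAD-hard. For $c\ge 7$ we have $n^c\ge n^7$ and hence $2^{-n^c}\le 2^{-n^7}$, so the same monotonicity observation transfers the $2^{-n^7}$ hardness to the $2^{-n^c}$ problem directly. Combining the two ranges of $c$ yields the corollary.

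I do not expect a genuine obstacle here: all the weight of the hardness direction sits in Corollary~\ref{before_padding} --- which in turn rests on the construction of $\calG_\AA$ and on the estimation lemma (Lemma~\ref{lem:estimation-lemma}) --- and in the self-contained padding argument of Lemma~\ref{padding}. The only points that need a moment's care are orienting the inequalities $n^6+2\le n^7$ and $n^7\le n^c$ so that the monotonicity reductions run in the correct direction, and noting that the padding construction preserves both the number of pure strategies and the normalization of payoffs to $[0,1]$.
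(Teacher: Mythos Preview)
Your proposal is correct and follows the paper's approach: the paper simply says ``Combining Corollary~\ref{before_padding} and Lemma~\ref{padding}'' without further detail, and you have supplied exactly the details one needs to make that combination precise, including the monotonicity step (to convert the $2^{-(n^6+2)}$ exponent into a clean power $2^{-n^7}$ so that Lemma~\ref{padding} applies) and the separate treatment of $c\ge 7$, both of which the paper leaves implicit.
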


To prove the hardness part of Theorem \ref{main-theorem}, we next give
  a polynomial-time algorithm to compute a well-supported equilibrium 
  from an approximate equilibrium.

\begin{lemma}[From Approximate to Well-Supported Nash Equilibria]\label{poly:equiv}
Let $\calG\hspace{-0.01cm} =\hspace{-0.01cm} (n,\alpha,
  \{\textsf{\emph{payoff}}_p\}\hspace{-0.01cm})$ be an anonymous game with 
  payoffs from $[0,1]$. 
Given an $\epsilon^2/(16\alpha n)$-approximate Nash equilibrium $\bxx$ of $\calG$, 
  one can compute in polynomial time an $\epsilon$-well-supported Nash equilibrium $\byy$ of $\calG$. 
\end{lemma}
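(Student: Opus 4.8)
The plan is to convert the $\epsilon^2/(16\alpha n)$-approximate equilibrium $\bxx$ into a well-supported one by a rounding procedure that, for each player, zeroes out every strategy that is substantially suboptimal and redistributes its mass onto a best-response strategy. Set $\epsilon' = \epsilon^2/(16\alpha n)$ for brevity. For each player $p$, let $b_p^* \in [\alpha]$ be a strategy maximizing $u_p(b,\bxx)$, and call a strategy $a$ \emph{bad for $p$} if $u_p(a,\bxx) < u_p(b_p^*,\bxx) - \epsilon/2$. The candidate new profile $\byy$ is obtained by, for each $p$, setting $y_{p,a} = 0$ for every bad $a$ and moving all of that probability to $b_p^*$ (strategies that are not bad keep their old probability $x_{p,a}$). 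The first thing to observe is that $\byy$ is manifestly an $\epsilon/2$-well-supported Nash equilibrium \emph{of the game with payoffs evaluated at the old profile} $\bxx$: a strategy has positive probability under $\byy$ only if it was either not bad for $p$ (so within $\epsilon/2$ of optimal at $\bxx$) or equal to $b_p^*$. The real work is to show this carries over to payoffs evaluated at $\byy$ itself, i.e. that passing from $\bxx$ to $\byy$ perturbs every expected payoff $u_p(b,\cdot)$ by at most $\epsilon/4$, so that the $\epsilon/2$ slack degrades to $\epsilon$.

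The key quantitative step is a Lipschitz-type bound: I would bound $|u_p(b,\bxx) - u_p(b,\byy)|$ in terms of $\sum_{q \ne p} \|\xx_q - \yy_q\|_1$. Since $u_p(b,\cdot)$ is multilinear in the other players' mixed strategies with coefficients $\pay_p(b,\kk) \in [0,1]$ and the distribution over histograms is a product, the standard hybrid argument (swap one player's strategy at a time and use that a probability distribution over histograms has total mass $1$) gives $|u_p(b,\bxx) - u_p(b,\byy)| \le \sum_{q \ne p} \|\xx_q - \yy_q\|_1$. So it suffices to make the total $\ell_1$-movement small. Now for a single player $q$, the mass moved is $\sum_{a \text{ bad for } q} x_{q,a}$, and $\|\xx_q - \yy_q\|_1$ is at most twice that. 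Here is where $\bxx$ being an $\epsilon'$-approximate (not well-supported) equilibrium is used: $u_q(\bxx) \ge u_q(b_q^*,\bxx) - \epsilon'$, while $u_q(\bxx) = \sum_a x_{q,a} u_q(a,\bxx) \le u_q(b_q^*,\bxx) - (\epsilon/2)\sum_{a\text{ bad}} x_{q,a}$ (each bad strategy drags the average down by at least $\epsilon/2$ times its weight). Combining, $(\epsilon/2)\sum_{a \text{ bad for }q} x_{q,a} \le \epsilon'$, i.e. the mass moved at $q$ is at most $2\epsilon'/\epsilon$. Hence $\sum_{q\ne p}\|\xx_q-\yy_q\|_1 \le 2n \cdot 2 \cdot (2\epsilon'/\epsilon) = 8n\epsilon'/\epsilon = \epsilon/(2\alpha) \le \epsilon/4$.

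Putting it together: for any $p$ and any $b$ with $y_{p,b} > 0$, we have $u_p(b,\byy) \ge u_p(b,\bxx) - \epsilon/4 \ge (u_p(b_p^*,\bxx) - \epsilon/2) - \epsilon/4$, while $u_p(b',\byy) \le u_p(b',\bxx) + \epsilon/4 \le u_p(b_p^*,\bxx) + \epsilon/4$ for every $b'$; subtracting gives $u_p(b',\byy) - u_p(b,\byy) \le \epsilon$, so no positively-played strategy is more than $\epsilon$ below optimal at $\byy$, which is exactly the $\epsilon$-well-supported condition. Everything is computable in polynomial time because expected payoffs $u_p(b,\cdot)$ in an anonymous game with constantly many strategies are computable in polynomial time by dynamic programming (as noted earlier in the paper), so identifying $b_p^*$ and the bad strategies and forming $\byy$ is polynomial.

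The main obstacle I expect is getting the Lipschitz bound with the right constant and in the right norm — specifically verifying cleanly that the multilinear payoff form, together with the fact that histogram distributions are probability distributions (mass exactly $1$), yields the clean $\ell_1$ bound $|u_p(b,\bxx)-u_p(b,\byy)| \le \sum_{q\ne p}\|\xx_q-\yy_q\|_1$ via a one-coordinate-at-a-time hybrid, rather than something with an extra factor of $\alpha$ or $n$; the choice of the constant $16\alpha n$ in the hypothesis is exactly what makes the arithmetic close, so the bookkeeping has to be done carefully but presents no conceptual difficulty.
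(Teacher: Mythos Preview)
Your proposal is correct and follows essentially the same approach as the paper: the same rounding (zero out strategies more than $\epsilon/2$ below best and shift their mass to a best response), the same Markov-style bound $\sum_{a\text{ bad}} x_{q,a}\le 2\epsilon'/\epsilon$ from the approximate-equilibrium condition, and the same perturbation argument to carry the $\epsilon/2$ slack at $\bxx$ over to an $\epsilon$ guarantee at $\byy$.

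The only difference is in how the Lipschitz step is executed. The paper bounds $|x_{i,j}-y_{i,j}|\le\zeta:=2\epsilon'/\epsilon$ in $\ell_\infty$ and then expands the products directly to obtain $|u_i(j,\byy)-u_i(j,\bxx)|\le 2(n-1)\alpha\zeta$. Your hybrid argument in $\ell_1$ gives the cleaner bound $|u_p(b,\bxx)-u_p(b,\byy)|\le \sum_{q\ne p}\|\xx_q-\yy_q\|_1\le 2(n-1)\zeta$, saving a factor of $\alpha$ (your written ``$8n\epsilon'/\epsilon$'' has a harmless extra factor of $2$; the true bound is $4n\epsilon'/\epsilon=\epsilon/(4\alpha)$). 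Either way the arithmetic closes with the constant $16\alpha n$ in the hypothesis.
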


\begin{proof}
Let $\bxx=(\xx_i:i\in [n])$ be an $\epsilon'$-approximate Nash equilibrium of $\calG$, with
  $\epsilon'=\epsilon^2/(16\alpha n)$.
For each player $i \in [n]$, we have for any mixed strategy $\xx_i'$, 
\begin{equation}\label{nash_def}
u_i(\xx_i',\bxx_{-i}) \leq u_i(\bxx) +\epsilon',
\end{equation}
where we let $u_i(\xx_i',\bxx_{-i})$ denote the expected payoff of 
  player $i$ when she plays $\xx_i'$ and other players play $\bxx_{-i}$.
Let $\sigma_i$ be a strategy with the highest expected payoff for player $i$ 
  (with respect to $\calX_{-i}$): 
$$u_i(\sigma_i,\bxx)=\max_{k \in[ \alpha]} \hspace{0.06cm}u_i(k,\bxx),$$ and let 
$J_{i} = \{j:u_i(\sigma_i,\bxx) \geq u_i(j,\bxx) + \epsilon/2\}$.
We then define a mixed strategy $\yy_i$ for player $i$ using $\xx_i$, $\sigma_i$ and 
  $J_i$ as follows:
Set $y_{i,j}=0$ for all $j \in J_{i}$, and set 
$$y_{i,\sigma_i}=x_{i,\sigma_i} + \sum_{j \in J_{i}} x_{i,j}.$$ 
All other entries of $\yy_i$ are the same as $\xx_i$. 
As $\yy_i$ 
  increases the expected payoff of player $i$ by at least 
$$({\epsilon}/{2})\cdot \sum_{j \in J_{i}} x_{i,j},$$ 
we have from (\ref{nash_def}) that $\sum_{j \in J_{i}} x_{i,j} \leq 2\epsilon'/\epsilon$.

Repeating this for every player $i\in [n]$, we obtain a new mixed
  strategy profile $\byy$ (clearly $\byy$ can be computed in polynomial time given $\bxx$). 
We finish the proof of the lemma by showing that $\byy$ is 
  indeed an $\epsilon$-well-supported Nash equilibrium of $\calG$.
Below we write $\zeta=2\epsilon'/\epsilon$.
  
First, by the definition of $\byy$, 
  $|x_{i,j} - y_{i,j}| \leq \zeta$ for all $i,j$.
Thus, for any pure strategy profile $\ss_{-i}$,
\begin{align*}
&\prod_{q \neq i} y_{q,s_q} \ge 
\prod_{q \neq i} \max\big\{0, x_{q,s_q} - \zeta\big\} \ge \prod_{q \neq i} x_{q,s_q}
-\zeta \cdot \sum_{q\ne i} \prod_{p\notin \{i,q\}} x_{p,s_p}\ \ \ \ \text{and}\\[0.5ex]
&\prod_{q \neq i} x_{q,s_q} \ge 
\prod_{q \neq i} \max\big\{0, y_{q,s_q} - \zeta \big\} \ge \prod_{q \neq i} y_{q,s_q}
-\zeta \cdot \sum_{q\ne i} \prod_{p\notin \{i,q\}} y_{p,s_p}.
\end{align*}
Since all payoffs are in $[0,1]$, 
  we have for any player $i\in [n]$ and pure strategy $j\in [\alpha]$ that 
\begin{align*}\big|u_i(j,\byy) - u_i(j,\bxx)\big| &\leq \sum_{\ss_{-i}\in [\alpha]^{n-1}}\Bigg|
  \hspace{0.04cm}\prod_{q \neq i} y_{q,s_q}-\prod_{q \neq i} x_{q,s_q}\hspace{0.02cm}\Bigg| \\[1ex]
&\le \zeta \sum_{\ss_{-i}} \hspace{0.04cm}\sum_{q\ne i} \prod_{p\notin \{i,q\}} x_{p,s_p}
+\zeta \sum_{\ss_{-i}} \hspace{0.04cm}\sum_{q\ne i} \prod_{p\notin \{i,q\}} y_{p,s_p}
\\[1ex]&= \zeta \sum_{q\ne i} \hspace{0.04cm}\sum_{\ss_{-i}} \prod_{p\notin \{i,q\}} x_{p,s_p}
+\zeta \sum_{q\ne i} \hspace{0.04cm}\sum_{\ss_{-i}} \prod_{p\notin \{i,q\}} y_{p,s_p}
 \\[0.8ex]
&\le \alpha\hspace{0.02cm}\zeta \sum_{q \neq i} \left(
\sum_{s_r:r\notin \{i,q\}} \hspace{0.04cm}\prod_{p \notin \{i,q\} } x_{p,s_p} \right) + 
\alpha\hspace{0.02cm}\zeta \sum_{q \neq i} \left(\sum_{s_r:r\notin \{i,q\}} 
\hspace{0.04cm}\prod_{p \notin \{i,q\} } y_{p,s_p} \right) \\[0.7ex]
&=2(n-1)\alpha\hspace{0.02cm}\zeta.
\end{align*}
This implies that for any pure strategies $j,k \in [\alpha]$ we have 
$$\big|(u_i(j,\bxx) - u_i(k,\bxx)) - (u_i(j,\byy) - u_i(k,\byy))\big|< \epsilon/2.$$
Therefore, the new mixed strategy profile $\byy=(\yy_i:i\in [n])$ satisfies
$$u_i(j,\byy) < u_i(k,\byy) + \epsilon\ \Rightarrow \ u_i(j,\bxx) < u_i(k,\bxx) + \epsilon/2 
\ \Rightarrow\  y_{i,j} = 0$$ for all $i,j$ and $k$. This finishes the proof of the lemma.
\end{proof}

Fix any $\alpha\ge 7$ and $c>0$.
It then follows from Lemma \ref{poly:equiv} that
  the problem of finding a $(2^{-n^{c/2}})$ well-supported 
  equilibrium in an anonymous game with $\alpha$ actions and $[0,1]$ payoffs is
  polynomial-time reducible to problem $(\alpha,c)$-\anonymous.
As the former problem is PPAD-hard by Corollary \ref{after_padding},
  $(\alpha,c)$-\anonymous\ is PPAD-hard.
The finishes the proof of the hardness part of Theorem \ref{main-theorem}.

\section{Proof of the Estimation Lemma}\label{sec:estimation}

We prove the estimation lemma (Lemma \ref{lem:estimation-lemma}) in this section.

Recall that there are $n$ main players $P_1,\ldots,P_n$,
  and they are only interested in three strategies $\{s_1,s_2,t\}$.  
For convenience we will refer to $s_1$ as strategy $1$, 
  $s_2$ as strategy $2$, and $t$ as strategy $3$ in this section.
Player $P_i$ plays strategy $b\in [3]$ with probability $x_{i,b}$,
  and $\sum_{b} x_{i,b}=1$.
While $x_{i,b}$'s are unknown variables, 
  by the assumption of the lemma we are guaranteed that
\begin{equation}\label{eq:setup}
x_{i,1}+x_{i,2}=\delta^i\pm \lambda,\ \ \ \text{where 
  $\lambda=\delta^{n^2}$.} 
\end{equation}

Throughout this section we will \emph{fix} two distinct integers $r,\ell\in [n]$,
  and the goal will be to derive an approximation of the unknown $x_{r,1}$ for $P_\ell$ 
  using a linear form of the following probabilities:
\begin{equation}\label{linearform}
\Big\{\Pr\big[k_{1} = i, k_{2} = j\big]:i,j \in [0:n-1]\Big\},\ \text{ where $\Pr\big[k_{1} = i, k_{2} = j\big] = \sum_{\substack{\kk \in K \\ k_1 = i, k_2 = j}} \Pr_{\bxx}[P_\ell,\kk]$},
\end{equation}
and $k_{b}$ denotes the random variable that counts 
  players playing $b\in [3]$ other than player 
  $P_\ell$ herself.
We will show that coefficients in 
  the desired linear form can be computed in polynomial time in $n$.  

First we would like to give the reader some intuition for the rest of the section, by showing how one can get a good estimate of $x_{1,1}$ and $x_{2,1}$, assuming $\ell > 2$. We believe this to be useful for more easily understanding the rest of the section, but the reader should feel free to skip it, if desired.

\begin{remark2}[Informal]
As $N=2^n$ is large, we have $x_{i,3} \approx 1$ for each $i$. 
This gives
$$\emph{\Pr}\big[k_{1} = 1, k_{2} = 0\big] \approx x_{1,1} + x_{2,1} + \cdots + x_{n,1}
  = x_{1,1}\pm O(\delta^2)$$ 
as $x_{i,1} \leq \delta^i+\lambda$.
Similarly, $\emph{\Pr}\big[k_{1} = 2, k_{2} = 0\big] 
  \approx x_{1,1}x_{2,1} \pm O(\delta^4)$.
Using $x_{i,1}+x_{i,2}\approx \delta^i$, we have
$$\emph{\Pr}\big[k_{1} = k_{2} = 1\big] \approx x_{1,1}(\delta^2 - x_{2,1}) + x_{2,1}(\delta - x_{1,1}) \pm O(\delta^4) = \delta^2x_{1,1} + \delta x_{2,1} - 2x_{1,1}x_{2,1}
\pm O(\delta^4).$$
Combining all three estimates, we have
$$N\Big(\emph{\Pr}\big[k_{1} =k_{2} = 1\big] + 2\cdot \emph{\Pr}
  \big[k_{1} = 2, k_{2} = 0\big]\Big) - \delta\cdot 
  \emph{\Pr}\big[k_{1} = 1, k_{2} = 0\big]\approx x_{2,1} \pm O(\delta^3).$$
Since $x_{2,1}\le \delta^2+\lambda$, the linear form on the LHS gives 
  us an additive approximation of $x_{2,1}$.
\end{remark2}

We need some notation in order to generalize and formalize this.
Let $\calS=[n]\setminus \{\ell\}$, the set of players 
  observed by player $P_\ell$. Let $k_b$, $b\in [3]$, denote the random variable that counts
  players from $\calS$ that play strategy $b$.  
We write $\calL=\{i\in \calS:i\le r\}$ and $m=|\calL|$, i.e.,
$\calL = [r]$ and $m=r$ if $\ell > r$, 
  and $\calL = [r] \setminus \{\ell\}$ and $m=r-1$ if $\ell < r$.
We start by understanding the following probabilities  
$$\Big\{\Pr\big[k_{1} = m -j, k_{2} = j\big] : j \in [0:m]\Big\}.$$
It will become clear that players from $\calS\hspace{-0.03cm}\setminus\hspace{-0.03cm}\calL$
  have probabilities too small to significantly affect these probabilities
  (so their contribution will just be absorbed into the error term).

For $j\in [0:m]$, let $\Delta_j$ denote the set of partitions
  of $\calS$ into sets of size $m-j,j$ and $n-1-m$:
$$
\Delta_{j} = \Big\{(\calS_1,\calS_2,\calS_3):\text{$\calS_1,\calS_2,\calS_3$
  are pairwise disjoint,}\  
\calS_1 \cup \calS_2 \cup \calS_3 = \calS,\hspace{0.04cm} 
|\calS_1| = m - j,\hspace{0.04cm} |\calS_2| = j\Big\}.$$ 
So, by definition, we have 
$$ 
\Pr\big[k_{1} = m -j, k_{2} = j\big] = \sum_{(\calS_1,\calS_2,\calS_3) 
\in \Delta_{j}} \left(\prod_{i \in \calS_1} x_{i,1} \prod_{i \in \calS_2} 
x_{i,2} \prod_{i \in \calS_3} x_{i,3}\right). 
$$
By (\ref{eq:setup}) we can
  write $x_{i,1}+x_{i,2}=\delta^i+\lambda_i$ for some 
  $\lambda_i$ with $|\lambda_i|\le \lambda$.
We can substitute to get
\begin{equation}\label{eq:11}
\Pr\big[k_{1} = m -j, k_{2} = j\big] = \sum_{(\calS_1,\calS_2,\calS_3) 
\in \Delta_{j}}\left( \prod_{i \in \calS_1} x_{i,1} \prod_{i \in \calS_2}
(\delta^i + \lambda_i - x_{i,1}) \prod_{i \in \calS_3} 
(1 - \delta^i - \lambda_i)\right).\vspace{0.06cm}
\end{equation}

Next, we split $\Delta_j$ into two sets $\Delta_{j}^*$ and $\Delta_j'$: 
  $(\calS_1,\calS_2,\calS_3)\in \Delta_j$ is in $\Delta_j^*$ if $\calS_1\cup\calS_2
  =\calL$; otherwise, it is in $\Delta_j'$.
This splits the sum in (\ref{eq:11}) into two sums accordingly, one over
  $\Delta_j^*$ and one over $\Delta_j'$.
We show in the following lemma that the contribution from the 
  second sum is negligible.

\begin{lemma}\label{error1}
Given the parameters in \emph{(\ref{eq:setup})}, we have
$$ 
\sum_{(\calS_1,\calS_2,\calS_3) \in \Delta_{j}'}\left( 
\prod_{i \in \calS_1} x_{i,1} \prod_{i \in \calS_2} (\delta^i+\lambda_i- x_{i,1}) 
\prod_{i \in \calS_3} (1 - \delta^i-\lambda_i)\right) = O\bigg(\delta\prod_{i\in \calL}
  \delta^i\bigg).
$$
\end{lemma}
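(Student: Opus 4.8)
The plan is to bound the $\Delta_j'$ sum term-by-term, using the fact that every partition $(\calS_1,\calS_2,\calS_3)\in\Delta_j'$ fails to have $\calS_1\cup\calS_2=\calL$, and therefore must "waste'' at least one index: since $|\calS_1\cup\calS_2|=m=|\calL|$, if $\calS_1\cup\calS_2\ne\calL$ then $\calS_1\cup\calS_2$ contains at least one index $i_0\in\calS\setminus\calL$ (equivalently $i_0>r$) and omits at least one index of $\calL$. The key point is that each factor contributed by an index $i$ placed in $\calS_1$ or $\calS_2$ is bounded in absolute value by roughly $\delta^i$: for $\calS_1$ we have $x_{i,1}\le x_{i,1}+x_{i,2}=\delta^i+\lambda_i\le 2\delta^i$, and for $\calS_2$ we have $|\delta^i+\lambda_i-x_{i,1}|\le\delta^i+\lambda+x_{i,1}\le 3\delta^i$ (using $\lambda=\delta^{n^2}\le\delta^i$); meanwhile each $\calS_3$-factor $|1-\delta^i-\lambda_i|\le 1$. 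So up to an absolute constant factor $C^m=C^r\le C^n$, the absolute value of the term indexed by $(\calS_1,\calS_2,\calS_3)$ is at most $\prod_{i\in\calS_1\cup\calS_2}\delta^i$.

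\textbf{Key steps.} First I would record the per-factor bounds above, so that $\bigl|\text{term}\bigr|\le C^n\prod_{i\in\calS_1\cup\calS_2}\delta^i$ for a universal constant $C$. Second, I would observe that for a partition in $\Delta_j'$ the set $T:=\calS_1\cup\calS_2$ is an $m$-element subset of $\calS$ different from $\calL$, so $\prod_{i\in T}\delta^i=\delta^{\sum_{i\in T}i}$ and $\sum_{i\in T}i\ge \sum_{i\in\calL}i+1$, since replacing one element of $\calL$ by some larger index of $\calS\setminus\calL$ strictly increases the sum of indices by at least $1$; hence $\prod_{i\in T}\delta^i\le\delta\cdot\prod_{i\in\calL}\delta^i$. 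Third, I would bound the number of terms: the number of partitions in $\Delta_j'$ is at most the number of ways to pick an ordered triple, which is crudely at most $3^{n-1}$ (assign each of the $\le n-1$ indices to one of three blocks). Combining, the whole $\Delta_j'$ sum is at most $3^{n-1}\cdot C^n\cdot\delta\prod_{i\in\calL}\delta^i$. Since $n$ is the parameter and $\delta=1/2^n$, the factor $(3C)^n$ is absorbed: indeed $\prod_{i\in\calL}\delta^i=\delta^{\Theta(r^2)}$ and one extra power of $\delta=2^{-n}$ dominates $(3C)^n$ once we are willing to weaken the gain slightly — but actually we do not even need that, because the statement only claims $O(\delta\prod_{i\in\calL}\delta^i)$ with the $O(\cdot)$ allowed to hide factors depending on $n$ in the same way the paper's other error terms do. So I would state the bound as $O\bigl(\delta\prod_{i\in\calL}\delta^i\bigr)$, with the implicit constant being $3^{n-1}C^n$.

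\textbf{Main obstacle.} The only subtlety is the index-sum argument: I need to be careful that an $m$-subset $T\subseteq\calS$ with $T\ne\calL=[r]\cap\calS$ really does satisfy $\sum_{i\in T}i\ge 1+\sum_{i\in\calL}i$. This is because $\calL$ is exactly the $m$ smallest elements of $\calS$ (it is $[r]$ or $[r]\setminus\{\ell\}$ depending on whether $\ell>r$ or $\ell<r$), so $\calL$ uniquely minimizes $\sum_{i\in T}i$ over all $m$-subsets $T$ of $\calS$, and any other choice has strictly larger integer index-sum, hence larger by at least one. Once this is in hand the rest is routine. I expect the write-up to be short: set up the factor bounds, invoke the minimality of $\calL$ to get the extra $\delta$, multiply by the trivial $3^{n-1}$ bound on $|\Delta_j'|$, and conclude. (If one wants the implicit constant to be genuinely $n$-independent, one notes $(3C)^n\delta=(3C)^n2^{-n}=(3C/2)^n$ is not bounded, so the clean statement really is the one with an $n$-dependent $O(\cdot)$, consistent with how $\lambda=\delta^{n^2}$ is chosen much smaller than everything else so these crude bounds never hurt.)
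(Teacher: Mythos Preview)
Your per-term bounds and the index-sum observation that any $m$-subset $T\subseteq\calS$ with $T\ne\calL$ has $\prod_{i\in T}\delta^i\le\delta\cdot h(\calL)$ are both correct, and the write-up would indeed be short. The problem is the constant: your argument yields a bound of the form $3^{\,n-1}C^{\,n}\cdot\delta\,h(\calL)$, and you explicitly accept an $n$-dependent implicit constant. That is a genuine gap. The paper's $O(\cdot)$ here is an \emph{absolute} constant, and this matters downstream: the error $O(\delta\, h(\calL))$ is divided by $h(\calL)$ and multiplied by $\delta^r$ at the end of Section~\ref{sec:estimation} to give the final estimation error $O(r^2\delta^{r+1})$, which in Section~\ref{sec:correctness} must be beaten by $\xi^*/n$. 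With your constant, the final error becomes $(3C)^n r^2\delta^{r+1}$; since $\delta=2^{-n}$, the ratio $(3C)^n\delta=(3C/2)^n\to\infty$, so the ``extra'' $\delta$ you gained is wiped out and the estimation lemma no longer approximates $x_{r,1}$ to within anything useful. (Your parenthetical already computes $(3C/2)^n$ and notices it is unbounded---that is exactly the failure, not a cosmetic issue.)

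The paper avoids this with two ideas you are missing. First, instead of bounding each partition separately and then counting $3^{\,n-1}$ of them, it groups partitions by $\calT=\calS_1\cup\calS_2$ and uses nonnegativity plus the binomial identity
\[
\sum_{\substack{\calS_1\subseteq\calT\\ |\calS_1|=m-j}}\prod_{i\in\calS_1}x_{i,1}\prod_{i\in\calT\setminus\calS_1}(\delta^i+\lambda_i-x_{i,1})
\ \le\ \prod_{i\in\calT}\big(x_{i,1}+(\delta^i+\lambda_i-x_{i,1})\big)=(1+o(1))\,h(\calT),
\]
which collapses the $\binom{m}{j}$ partitions with a given $\calT$ into a single term with constant $1+o(1)$ rather than $2^m$. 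Second, when summing $h(\calT)$ over $\calT\ne\calL$, it does not use your uniform bound $h(\calT)\le\delta\,h(\calL)$; it observes that at most \emph{one} $\calT$ attains $h(\calT)=\delta\,h(\calL)$ (the unique swap of the largest index of $\calL$ for the smallest index of $\calS\setminus\calL$), while every other $\calT$ satisfies $h(\calT)\le\delta^2 h(\calL)$. Since there are at most $2^{n-1}=N/2$ such sets and $N\delta=1$, one gets $\sum_{\calT\ne\calL}h(\calT)\le\delta\,h(\calL)+(N/2)\delta^2 h(\calL)=\tfrac{3}{2}\,\delta\,h(\calL)$, an absolute constant. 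Both tricks are needed; if you incorporate them, your outline becomes the paper's proof.
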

\begin{proof}
Since all terms in the sum are nonnegative, it suffices to show that
\begin{equation}\label{eq:22}
\sum_{(\calS_1,\calS_2,\calS_3) \in \Delta_{j}'}\left( 
\prod_{i \in \calS_1} x_{i,1} \prod_{i \in \calS_2} (\delta^i+\lambda_i- x_{i,1})\right) = O\bigg(\delta\prod_{i\in \calL}
  \delta^i\bigg).
\end{equation}

Fix a set $\calT \subseteq \calS$ such that $|\calT| = m$ but $\calT \neq \calL$.  
We have 
$$ 
\prod_{i\in \calT} (\delta^i+\lambda_i)=
\prod_{i\in \calT} \big(x_{i,1} + (\delta^i+\lambda_i - x_{i,1})\big) 
= \sum_ {\calS_1 \subseteq \calT}\left( \prod_{i \in \calS_1} x_{i,1} 
\prod_{i \in \calT \setminus \calS_1} (\delta^i+\lambda_i - x_{i,1})\right). $$
Since every term on the RHS is nonnegative, we have
$$
\sum_{\substack{\calS_1 \subseteq \calT\\ |\calS_1| = m - j}}\left( 
\prod_{i \in \calS_1} x_{i,1} \prod_{i \in \calT \setminus \calS_1} 
(\delta^i+\lambda_i - x_{i,1}) \right)\leq \prod_{i \in \calT} (\delta^i+\lambda_i)
=(1+o(1))\cdot \prod_{i\in \calT} \delta^i,
$$
given that $\lambda_i=\delta^{n^2}$ in (\ref{eq:setup}).
Let $h(\calT)=\prod_{i\in \calT} \delta^i$.
To prove (\ref{eq:22}), it now suffices to show that
$$
\sum_{\substack{\calT\subseteq \calS\\ |\calT|=m,\hspace{0.03cm} \calT\ne
  \calL}} h(\calT)=
O\big(\delta\cdot h(\calL)\big)=
O\bigg(\delta\prod_{i\in \calL}\delta^i\bigg).
$$

For this purpose, notice that $h(\calT)\le \delta\cdot h(\calL)$ for any $\calT$ such
  that $\calT\subseteq \calS$, $|\calT|=m$, but $\calT\ne \calL$.
It is also easy to see that there is at most one 
  $\calT$ such that $h(\calT)=\delta\cdot h(\calL)$.
Because every other $\calT$ has $h(\calT)\le \delta^2\cdot h(\calL)$
  and the total number of $\calT$'s is at most $2^{n-1}=N/2$, we have
$$
\sum_{\calT} h(\calT)\le \delta\cdot h(\calL)+(N/2)\cdot \delta^2\cdot h(\calL)
=O(\delta\cdot h(\calL)),
$$  
as $\delta=1/N$. This finishes the proof of the lemma.
\end{proof}

Combining (\ref{eq:11}) and Lemma \ref{error1}, we have
$$
{\Pr}\big[k_{1} = m -j, k_{2} = j\big] =\hspace{-0.1cm}
\sum_{\substack{\calS_1 \subseteq \calL\\ |\calS_1| = m-j}} \hspace{-0.1cm}
\left(\prod_{i \in \calS_1} x_{i,1} \prod_{i \in \calL \setminus \calS_1} (\delta^i+\lambda_i - x_{i,1}) \prod_{i \notin \calL}\hspace{0.08cm}(1 - \delta^i-\lambda_i)\right)\hspace{-0.04cm}\pm O(\delta\cdot h(\calL)).
$$
The next lemma further simplifies this estimate by
  absorbing all the $\lambda_i$'s into the error term.
  
\begin{lemma} \label{lem:error2}
Given the parameters in \emph{(\ref{eq:setup})}, we have
$$ 
\emph{\Pr}\big[k_{1} = m -j, k_{2} = j\big] = 
\sum_{\substack{\calS_1\in \calL\\ |\calS_1|=m-j}}\left( 
\prod_{i \in \calS_1} x_{i,1} \prod_{i \in \calL\setminus \calS_1} (\delta^i- x_{i,1}) 
\prod_{i \notin \calL} (1 - \delta^i)\right) 
\pm O(\delta\cdot h(\calL)).
$$
\end{lemma}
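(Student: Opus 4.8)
The plan is to start from the estimate obtained by combining (\ref{eq:11}) and Lemma~\ref{error1}, namely
\[
\Pr\big[k_1=m-j,\,k_2=j\big]=\sum_{\substack{\calS_1\subseteq\calL\\ |\calS_1|=m-j}}\left(\prod_{i\in\calS_1}x_{i,1}\prod_{i\in\calL\setminus\calS_1}(\delta^i+\lambda_i-x_{i,1})\prod_{i\notin\calL}(1-\delta^i-\lambda_i)\right)\pm O\big(\delta\cdot h(\calL)\big),
\]
and to show that replacing, inside each term, every factor $\delta^i+\lambda_i-x_{i,1}$ by $\delta^i-x_{i,1}$ and every factor $1-\delta^i-\lambda_i$ by $1-\delta^i$ changes the whole sum by at most $O(\delta\cdot h(\calL))$, which can then be folded into the existing error term. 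Thus the lemma reduces to a product-perturbation estimate carried out term by term over the subsets $\calS_1$.

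First I would record the elementary bounds on the factors. By (\ref{eq:setup}) we have $x_{i,1}+x_{i,2}=\delta^i+\lambda_i$ with $|\lambda_i|\le\lambda=\delta^{n^2}$, so $\delta^i+\lambda_i-x_{i,1}=x_{i,2}\in[0,\,\delta^i+\lambda]$ and $\delta^i-x_{i,1}=x_{i,2}-\lambda_i$ differs from it by exactly $|\lambda_i|\le\lambda$; moreover $x_{i,1}\le\delta^i+\lambda\le 2\delta^i$, and for $i\notin\calL$ both $1-\delta^i-\lambda_i$ and $1-\delta^i$ lie in $[1/2,1]$ and differ by $\le\lambda$. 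Then, fixing $\calS_1$ and noting that the two products share the common factor $\prod_{i\in\calS_1}x_{i,1}$, I would apply the standard telescoping inequality $|\prod a_i-\prod b_i|\le\sum_i|a_i-b_i|\prod_{p<i}|a_p|\prod_{p>i}|b_p|$ to the remaining factors. There are at most $n$ perturbed factors, each contributing $\le\lambda$, and the product of all the other factors appearing in any single term of the telescoping sum is bounded by $2^{O(n)}\delta^{-n}h(\calL)$ — the $2^{O(n)}$ from the $2\delta^i$ bounds, the $\delta^{-n}$ from the possible loss of one small factor $\delta^i\ge\delta^n$ when the perturbed index lies in $\calL$. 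Hence each $\calS_1$-term is perturbed by at most $2^{O(n)}\delta^{-n}\lambda\,h(\calL)$.

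Finally I would sum over the at most $\binom{m}{m-j}\le 2^n$ choices of $\calS_1$, giving a total perturbation of at most $2^{O(n)}\delta^{-n}\lambda\,h(\calL)$. With $\delta=2^{-n}$ and $\lambda=2^{-n^3}$ this is $2^{O(n)+n^2-n^3}\cdot O(h(\calL))$, which is far below $\delta\,h(\calL)=2^{-n}h(\calL)$ for all sufficiently large $n$, so it is absorbed into the $O(\delta\cdot h(\calL))$ term, proving the claim. The main obstacle is purely the bookkeeping of this error accumulation: one must check that the $2^{O(n)}$ blowup from summing over the subsets $\calS_1$, together with the $\delta^{-O(n)}=2^{O(n^2)}$ blowup from dividing out individual tiny factors $\delta^i$, are both swamped by the deliberately generous choice $\lambda=\delta^{n^2}=2^{-n^3}$. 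There is no conceptual difficulty beyond this, since every factor is manifestly bounded and the two products being compared have identical combinatorial structure.
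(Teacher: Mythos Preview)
Your argument is correct. You start from the same intermediate expression (coming from (\ref{eq:11}) and Lemma~\ref{error1}) and then bound the effect of deleting all the $\lambda_i$'s; your telescoping estimate, the factor bounds $x_{i,1},x_{i,2}\le 2\delta^i$, and the final count are all valid, and the total perturbation $2^{O(n)}\delta^{-n}\lambda\,h(\calL)=2^{O(n)+n^2-n^3}h(\calL)$ is indeed dominated by $\delta\,h(\calL)$.

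The paper's own proof follows the same overall plan but handles the perturbation more simply. Rather than telescoping, it fully expands each product
\[
\prod_{i\in\calS_1}x_{i,1}\prod_{i\in\calL\setminus\calS_1}(\delta^i+\lambda_i-x_{i,1})\prod_{i\notin\calL}(1-\delta^i-\lambda_i)
\]
into at most $3^{j}\cdot 3^{n-1-m}\le 3^{n-1}<N^2$ monomials and observes that every monomial containing at least one $\lambda_i$ has absolute value at most $\lambda$, because \emph{every} individual factor ($x_{i,1}$, $\delta^i$, $1$, $\lambda_i$) lies in $[-1,1]$. With at most $N$ choices of $\calS_1$, the total error is at most $N^3\lambda$, which is $\ll\delta\,h(\calL)$. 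This sidesteps the $\delta^{-n}$ blowup you incur when you divide out a single small factor $\delta^{i_0}$, and so needs less careful bookkeeping; your approach compensates by leaning on the very generous choice $\lambda=\delta^{n^2}$. Either way the conclusion is the same.
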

\begin{proof}
First the number of $\calS_1$'s is at most $2^{n-1}<N$. Further, 
  fixing an $\calS_1$ and multiplying out 
$$
\prod_{i \in \calS_1} x_{i,1} \prod_{i \in \calL\setminus\calS_1} 
(\delta^i + \lambda_i - x_{i,1})\prod_{i \notin\calL} (1 - \delta^i - \lambda_i) $$ will yield $3^j\cdot 3^{n-1-m}
  \le 3^{n-1}< N^2$ many terms. 
The absolute value of each term with at least
  one $\lambda_i$ must be less than or equal to $\lambda$ because all factors 
  are less than or equal to $1$. There are at most $N^2$ many such terms, for each 
  $\calS_1$, and there are at most $N$ different $\calS_1$'s. 
Using $N^3\lambda\ll \delta h(\calL)$ by (\ref{eq:setup}), we can absorb all terms with at least one 
    $\lambda_i$ into the error term $O\hspace{0.02cm}(\delta\cdot h(\calL))$.
\end{proof}

Using Lemma \ref{lem:error2} and the fact that 
  $\prod_{i\notin \calL}(1-\delta^i)>1/2$ as $\delta=1/2^n$, we have
$$\left(\prod_{i \notin \calL} (1 - \delta^i)\right)^{-1} 
\Pr\big[k_{1} = m - j, k_{2} = j\big] = 
\sum_{\substack{\calS_1 \subseteq \calL\\ |\calS_1| = m-j}} \left( 
\prod_{i \in \calS_1} x_{i,1} \prod_{i \in \calL \setminus \calS_1} 
(\delta^i - x_{i,1})\right)\pm O(\delta\cdot h(\calL)). $$ 
To understand the RHS better, we define a polynomial $P_d^m$ for each $d\in [0:m]$ to be
$$
P_d^{m} = \sum_{\calT \subseteq \calL,\hspace{0.04cm} |\calT| = d} 
\left(\prod_{i \in \calT} x_{i,1} \prod_{i \in \calL \setminus \calT} \delta^i\right),
$$
and prove the following lemma that establishes a connection between them.

\begin{lemma}  \label{lem:lin_eq}
Given $P_d^m$ defined above, we have
\begin{equation}\label{compare}
\sum_{\substack{\calS_1 \subseteq \calL\\ 
|\calS_1| = m-j}}\left( \prod_{i \in \calS_1} x_{i,1} \prod_{i \in \calL \setminus \calS_1} (\delta^i - x_{i,1})\right)  =  \sum_{i = 0}^j 
\hspace{0.05cm}(-1)^i \cdot {m - j + i \choose m - j} \cdot P_{m - j + i}^{m} 
\end{equation}
\end{lemma}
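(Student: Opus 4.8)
The plan is to prove the identity (\ref{compare}) by expanding the products on the left-hand side into monomials in the variables $x_{i,1}$ and the constants $\delta^i$, and then regrouping those monomials by the set of indices $i$ for which the factor $x_{i,1}$ (rather than $\delta^i$) appears.

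First I would expand each binomial factor $\delta^i-x_{i,1}$. Fixing $\calS_1\subseteq\calL$ with $|\calS_1|=m-j$ (so $|\calL\setminus\calS_1|=j$), we have
$$\prod_{i\in\calL\setminus\calS_1}(\delta^i-x_{i,1})=\sum_{\mathcal{U}\subseteq\calL\setminus\calS_1}(-1)^{|\mathcal{U}|}\prod_{i\in\mathcal{U}}x_{i,1}\prod_{i\in(\calL\setminus\calS_1)\setminus\mathcal{U}}\delta^i .$$
Substituting this into the left-hand side of (\ref{compare}) and writing $\calT=\calS_1\cup\mathcal{U}$ (a disjoint union), every term becomes $(-1)^{|\mathcal{U}|}\prod_{i\in\calT}x_{i,1}\prod_{i\in\calL\setminus\calT}\delta^i$, and the sum runs over all pairs $(\calS_1,\mathcal{U})$ with $|\calS_1|=m-j$ and $\mathcal{U}\subseteq\calL\setminus\calS_1$.

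Next I would swap the order of summation, collecting terms by $\calT$. For a fixed $\calT\subseteq\calL$ with $|\calT|=d$, the pairs $(\calS_1,\mathcal{U})$ with $\calS_1\sqcup\mathcal{U}=\calT$ and $|\calS_1|=m-j$ correspond bijectively to the $\binom{d}{m-j}$ size-$(m-j)$ subsets $\calS_1$ of $\calT$; for each of these $|\mathcal{U}|=d-(m-j)$, and since $\mathcal{U}\subseteq\calL\setminus\calS_1$ has at most $j$ elements, the set $\calT$ can only have size $d$ with $m-j\le d\le m$. Hence the coefficient of $\prod_{i\in\calT}x_{i,1}\prod_{i\in\calL\setminus\calT}\delta^i$ depends only on $d=|\calT|$ and equals $(-1)^{d-(m-j)}\binom{d}{m-j}$. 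Summing over all $\calT$ of a given size $d$ reassembles $P_d^m$, so the left-hand side equals $\sum_{d=m-j}^{m}(-1)^{d-(m-j)}\binom{d}{m-j}P_d^m$, and reindexing by $i=d-(m-j)\in[0:j]$ gives exactly the right-hand side of (\ref{compare}).

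The argument is purely combinatorial bookkeeping; the only point that requires care is the multiplicity count — verifying that each monomial indexed by $\calT$ arises exactly $\binom{|\calT|}{m-j}$ times, with the uniform sign $(-1)^{|\calT|-(m-j)}$ — together with pinning down the index range ($\mathcal{U}$ lives in a set of size $j$, forcing $|\calT|\le m$). I do not expect any genuine obstacle here: this lemma is a warm-up identity whose purpose is to reexpress the probabilities in the basis $\{P_d^m\}$, on which the delicate cancellation estimates later in Section \ref{sec:estimation} are then carried out.
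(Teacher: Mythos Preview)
Your proof is correct and follows essentially the same approach as the paper's: both expand the binomial factors on the left-hand side, group the resulting monomials by the set $\calT$ of indices at which $x_{i,1}$ appears, and verify that the coefficient of $\prod_{i\in\calT}x_{i,1}\prod_{i\in\calL\setminus\calT}\delta^i$ is $(-1)^{|\calT|-(m-j)}\binom{|\calT|}{m-j}$, which matches the right-hand side after reindexing. The only cosmetic difference is that the paper computes the coefficient on each side separately and compares, whereas you carry out the expansion-and-regrouping in one pass via the auxiliary set $\mathcal{U}$.
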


\begin{proof}
Note that every monomial that appears on the two sides of (\ref{compare})
  has the form $\prod_{i\in \calT} x_{i,1}$ for some
  $\calT \subseteq \calL$ with $|\calT| =d\ge  m - j$.
Fix such a $\calT$.
The coefficient of $\prod_{i\in \calT} x_{i,1}$ on RHS of (\ref{compare}) is
$$
(-1)^{d-m+j}\cdot {d\choose m-j}\cdot \prod_{i\in \calL\setminus \calT}\delta^i.
$$
On the other hand,  
for an $\calS_1\subseteq \calL$ with $|\calS_1|=m-j$, we have 
$$ \prod_{i \in \calS_1} x_{i,1} \prod_{i \in\calL \setminus \calS_1} 
(\delta^i - x_{i,1}) = \sum_{\calS' \subseteq \calL \setminus \calS_1}\left( \prod_{i \in \calS_1} x_{i,1} \prod_{i \in \calS'} (-x_{i,1}) \prod_{i \in \calL \setminus \{\calS_1 \cup \calS'\}} \delta^i \right).$$ 
Hence, $\prod_{i \in \calT} x_{i,1} $ occurs exactly once in this sum 
  if and only if $\calS_1 \subseteq \calT$, and will take the form $$ \prod_{i \in \calS_1} x_{i,1} \prod_{i \in \calT \setminus \calS_1} (-x_{i,1})\prod_{i \in 
  \calL \setminus \calT} \delta^i = (-1)^{d-m+j}\prod_{i \in \calT} x_{i,1} \prod_{i \in \calL\setminus \calT}\delta^i.$$
Further, there are ${d \choose m - j}$ many $\calS_1$ such that 
  $\calS_1 \subseteq \calT$ and $|\calS_1|=m-j$.
The lemma is proven.
\end{proof}

Combining Lemma \ref{lem:error2} and \ref{lem:lin_eq},
  we immediately get the following corollary:

\begin{corollary}
For any $j\in [0:m]$, we have \label{coro:lin_eq}
$$\left(\prod_{i \notin \calL} (1 - \delta^i)\right)^{-1} 
\emph{\Pr}\big[k_{1} = m - j, k_{2} = j\big] = 
\sum_{i = 0}^j \hspace{0.05cm}(-1)^i \cdot {m - j + i \choose m - j} \cdot
P_{m- j + i}^{m} \pm O\big(\delta\cdot h(\calL)\big).$$
\end{corollary}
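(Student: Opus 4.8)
The plan is to derive the corollary as a pure bookkeeping consequence of the two preceding lemmas, so the argument is short. First I would start from Lemma~\ref{lem:error2}, which states
\[
\Pr\big[k_{1} = m -j,\, k_{2} = j\big] = \Big(\prod_{i \notin \calL} (1 - \delta^i)\Big)\sum_{\substack{\calS_1\subseteq \calL\\ |\calS_1|=m-j}}\Big( \prod_{i \in \calS_1} x_{i,1} \prod_{i \in \calL\setminus \calS_1} (\delta^i- x_{i,1})\Big) \pm O(\delta\cdot h(\calL)),
\]
and divide both sides by the common factor $\prod_{i \notin \calL}(1-\delta^i)$ (here $i$ ranges over $\calS\setminus\calL$). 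The only point needing a word of care is that this division must not amplify the error term: since the smallest index outside $\calL$ is at least $r+1\ge 2$ and $\delta = 2^{-n}\le 1/2$, we have $\prod_{i\notin\calL}(1-\delta^i)\ge \prod_{i\ge 2}(1-2^{-i}) > 1/2$ (and it is at most $1$), so dividing by it changes the $O(\delta\cdot h(\calL))$ error only by a constant factor. This reproduces exactly the identity stated just before the definition of the polynomials $P^m_d$, namely
\[
\Big(\prod_{i \notin \calL} (1 - \delta^i)\Big)^{-1} \Pr\big[k_{1} = m - j,\, k_{2} = j\big] = \sum_{\substack{\calS_1 \subseteq \calL\\ |\calS_1| = m-j}} \Big( \prod_{i \in \calS_1} x_{i,1} \prod_{i \in \calL \setminus \calS_1} (\delta^i - x_{i,1})\Big)\pm O(\delta\cdot h(\calL)).
\]

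Second, I would substitute the exact identity of Lemma~\ref{lem:lin_eq}, which rewrites the remaining sum, with no additional error incurred, as the alternating combination $\sum_{i=0}^{j}(-1)^i\binom{m-j+i}{m-j}P^m_{m-j+i}$ of the symmetric polynomials $P^m_d=\sum_{\calT\subseteq\calL,\,|\calT|=d}\big(\prod_{i\in\calT}x_{i,1}\prod_{i\in\calL\setminus\calT}\delta^i\big)$. Plugging this into the previous display yields precisely the claimed formula, with the same error term $O(\delta\cdot h(\calL))$.

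There is no substantive obstacle here: all the mathematical content already resides in Lemmas~\ref{lem:error2} and~\ref{lem:lin_eq}, and the corollary is merely their composition. The one genuine (and entirely routine) step is the lower bound $\prod_{i\notin\calL}(1-\delta^i) > 1/2$ invoked to pass from the first display to the second without degrading the error bound; this follows from $\delta\le 2^{-n}\le 1/2$ together with the fact that every index outside $\calL$ is at least $2$.
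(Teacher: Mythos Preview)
Your proposal is correct and follows essentially the same approach as the paper: the paper states that the corollary follows immediately by combining Lemma~\ref{lem:error2} and Lemma~\ref{lem:lin_eq}, having already noted (just before defining $P_d^m$) that $\prod_{i\notin\calL}(1-\delta^i)>1/2$ so that division preserves the $O(\delta\cdot h(\calL))$ error. Your justification of the $>1/2$ bound is slightly more explicit than the paper's one-line remark, but the argument is the same.
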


Taking a step back, we have derived a set of linear equations 
  that hold with high precision over 
$\Pr[k_{1} = m, k_{2} = 0],\ldots,\Pr[k_{1} = 0, k_{2} = m]$ and $P_m^{m},\ldots,P_0^{m}$. This then allows us to attain a close approximation for $P_1^{m}$, using a linear form of the $m$ probabilities.
Note that 
\begin{equation}\label{final2}
P_1^{m} = \sum_{i \in \calL}\hspace{0.03cm} x_{i,1} 
\prod_{j \in \calL \setminus \{i\}} \delta^j
=h(\calL)\cdot \sum_{i\in \calL} N^i\cdot x_{i,1}
\end{equation}
is a linear form of the $x_{i,1}$'s, $i\in \calL$, including 
  $x_{r,1}$ (recall that $r$ is the largest integer in $\calL$). 
So from here, it will be straightforward to get 
  an approximation of $x_{r,1}$. 

The next lemma gives us a linear form to approximate $P_1^m$.

\begin{lemma}\label{lem:inverse}
The $m$ probabilities and $P_1^m$ satisfy
 $$ \left(\prod_{i \notin \calL} (1-\delta^i)\right)^{-1}
 \hspace{0.1cm} \sum_{j=1}^{m} j\cdot 
  {\emph{\Pr}}\big[k_{1} = j, k_{2} = m-j\big] = P_1^{m} 
  \pm O\big(m^2\delta\cdot h(\calL)\big).$$
\end{lemma}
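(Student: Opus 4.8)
The plan is to start from Corollary~\ref{coro:lin_eq}, which expresses each probability $\Pr[k_1 = m-j, k_2 = j]$ (after the common normalization by $\prod_{i\notin\calL}(1-\delta^i)$) as an explicit linear combination $\sum_{i=0}^{j}(-1)^i\binom{m-j+i}{m-j}P^m_{m-j+i}$ of the polynomials $P^m_d$, up to error $O(\delta\cdot h(\calL))$. Reindexing so that the probability in the lemma statement is $\Pr[k_1 = j, k_2 = m-j]$ (i.e. substituting $j \leftarrow m-j$), the normalized probability equals $\sum_{i=0}^{m-j}(-1)^i\binom{j+i}{j}P^m_{j+i} \pm O(\delta\cdot h(\calL))$. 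The goal is then to find scalars so that the weighted sum $\sum_{j=1}^m j\cdot(\text{that expression})$ collapses to exactly $P^m_1$ with the $P^m_d$-coefficients for all $d\neq 1$ vanishing.

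The key step is a combinatorial identity: I would show that for the specific weights $j$ (running $j=1,\dots,m$), the total coefficient of $P^m_d$ in $\sum_{j=1}^m j\sum_{i=0}^{m-j}(-1)^i\binom{j+i}{j}P^m_{j+i}$ is $1$ when $d=1$ and $0$ when $d\in\{0,2,3,\dots,m\}$. Concretely, the coefficient of $P^m_d$ is $\sum_{j=1}^{d} j\cdot(-1)^{d-j}\binom{d}{j}$ (the term with $i = d-j$, valid when $j\le d$ and $j\ge 1$). So the identity to prove is $\sum_{j=1}^{d} (-1)^{d-j} j \binom{d}{j} = [d=1]$. This is a standard finite-difference fact: $\sum_{j=0}^d (-1)^{d-j}\binom{d}{j} j$ is the $d$-th finite difference of the polynomial $j\mapsto j$ evaluated appropriately, which is $0$ for $d\ge 2$, equals $1$ for $d=1$, and the $j=0$ term contributes nothing anyway. (One can also see it via $j\binom{d}{j} = d\binom{d-1}{j-1}$ and the binomial theorem.) I would state and prove this as a one-line sublemma.

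Given the identity, the linear form $\sum_{j=1}^m j\cdot\Pr[k_1=j,k_2=m-j]$, divided by $\prod_{i\notin\calL}(1-\delta^i)$, equals $P^m_1$ plus the accumulated error. The remaining task is to bound that error: there are $m$ terms in the outer sum, each carrying an error $O(\delta\cdot h(\calL))$ from Corollary~\ref{coro:lin_eq}, and each is scaled by the integer weight $j\le m$, so the total error is $O(m^2\,\delta\cdot h(\calL))$, as claimed. The main obstacle here is purely the bookkeeping of which $P^m_d$-coefficients survive and getting the finite-difference identity stated cleanly; there is no analytic difficulty, since the normalization factor $\prod_{i\notin\calL}(1-\delta^i) \in (1/2,1)$ is already controlled (as used just before the lemma) and merely passes through. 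Once $P^m_1$ is approximated, equation~(\ref{final2}) shows $P^m_1 = h(\calL)\sum_{i\in\calL}N^i x_{i,1}$, which the subsequent argument will use to isolate $x_{r,1}$; but that is beyond the scope of this lemma.
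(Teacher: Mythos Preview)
Your proposal is correct and follows essentially the same route as the paper: reindex Corollary~\ref{coro:lin_eq} via $j\mapsto m-j$, collect the coefficient of each $P^m_d$ in the weighted sum $\sum_{j=1}^m j\,(\cdots)$, and verify the identity $\sum_{j=1}^{d}(-1)^{d-j}j\binom{d}{j}=[d=1]$ (the paper uses exactly your suggested $j\binom{d}{j}=d\binom{d-1}{j-1}$ trick), then sum the $m$ error terms each of size $O(\delta\cdot h(\calL))$ with weight at most $m$. The only cosmetic difference is that the paper does not explicitly mention the $d=0$ case or the finite-difference interpretation, but these are immaterial.
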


\begin{proof}
By Corollary \ref{coro:lin_eq} (and replacing $j$ init by $m-j$), we see that it suffices to show that
\begin{equation}\label{eq:uu} 
P_1^{m} = \sum_{j=1}^{m} j\cdot \left(\sum_{i=0}^{m-j} 
  (-1)^i\cdot  {j + i \choose j}\cdot P_{j+i}^{m}\right). 
\end{equation}
Consider $P_d^{m}$ for some $d \in [m]$. $P_d^{m}$ appears
  in the $j$th term on the RHS of (\ref{eq:uu}) if  
  and only if $d \geq j$, and when this is the case, 
  the coefficient of $P_d^m$ is 
$${ j}\cdot (-1)^{d - j} \cdot {d \choose  j}.\vspace{-0.15cm}$$
So the RHS of (\ref{eq:uu}) is
$$ \sum_{d=1}^m P_d^{m}\cdot \left(
\sum_{j=1}^d (-1)^{d-j}\cdot j\cdot { d \choose j}\right). 
$$
For $d = 1$, the coefficient of $P_1^m$ is clearly $1$. 
For $d>1$, using $j {d \choose j } = d {d - 1 \choose j - 1}$ we have  
$$ 
\sum_{j=1}^d (-1)^{d-j}\cdot { j }\cdot { d \choose j} = 
d\cdot \sum_{j=1}^d (-1)^{d-j}\cdot {d - 1 \choose j - 1} = 
d\cdot \sum_{j=0}^{d-1} (-1)^{d-j- 1} { d - 1 \choose j} = 0. \vspace{0.06cm}$$
This finishes the proof of the lemma.
\end{proof} 

Lemma \ref{lem:inverse} gives us a linear form to approximate $P_1^m$.
Denote this linear form by $Y_m$.
Then for the special case when $\calL=\{r\}$ (so $r$ is the only integer in $\calL$), 
  we are done since
  $P_1^m$ is exactly $x_{r,1}$, and we have attained a linear form
  that approximates $x_{r,1}$ with error $O(m^2\delta\cdot h(\calL))$.

Otherwise suppose $|\calL|>1$.
We use $r'$ to denote the largest integer in $\calL$ other than $r$
  and write $\calL'=\{i\in \calS:i\le r'\}$ ($|\calL'|=m-1$).
Repeating the same line of proof so far over $\calL'$ and
  $m-1$, we obtain a linear form of 
  $\Pr [k_1=m-1-j,k_2=j]$, $j\in [0:m-1]$, denoted by $Y_{m-1}$,
  to approximate
\begin{equation}\label{final1}
P_1^{m-1}=\sum_{i\in \calL'} x_{i,1}\prod_{j\in \calL'\setminus \{i\}}
  \delta^j=h(\calL')\cdot \sum_{i\in \calL'} N^i\cdot x_{i,1}
\end{equation}
with error $O(m^2\delta\cdot h(\calL'))$.
By the definition of $P_1^m$ and $P_1^{m-1}$ in (\ref{final2}) and
  (\ref{final1}), we have
$$
x_{r,1}=\delta^r\left(\frac{P_1^m}{h(\calL)}-\frac{P_1^{m-1}}{h(\calL')}\right).
$$
As a result, we have obtained a linear form
\begin{equation}\label{fffff}
\delta^r\left(\frac{Y_m}{h(\calL)}-\frac{Y_{m-1}}{h(\calL')}\right)
=x_{r,1}\pm O\hspace{0.03cm}(m^2\delta^{r+1})
\end{equation}
over $\Pr[k_{1} = m - j, k_{2} = j]$, $j \in [0:m]$
and $\Pr[k_{,1} = m - 1 - j, k_{2} = j]$, $j \in [0:m-1]$.

Finally, it follows easily from our derivation of $Y_m$ and $Y_{m-1}$
  that coefficients of this linear form can be computed
  in polynomial time in $n$, and every coefficient has
  absolute value at most $N^{m^2}$. 

\def\size{\textsc{size}} \def\aa{\mathbf{a}}

\section{Membership in PPAD}

In this section we show that $(\alpha,c)$-\anonymous\ is in PPAD for
  any constants $\alpha\in \mathbb{N}$ and $c>0$, i.e.
  the problem of finding an $\epsilon$-approximate equilibrium
  in an anonymous game $\calG = (n,\alpha,\{\pay_p\})$
  with payoffs from $[0,1]$ is in PPAD, where $\epsilon=1/2^{n^c}$.
Below we use $\size(\calG)$ to denote the input size
  of an anonymous game $\calG$, i.e., length of the binary representation
  of $\calG$.
We write $\size(a)$ to denote the length of the binary representation
  of a rational number $a$, and let $\size(\aa)=\sum_{i} \size(a_i)$ 
  for a rational vector $\aa$ (e.g., a rational mixed strategy profile).
  
Fix constants $\alpha\in \mathbb{N}$ and $c>0$.
We show the membership of $(\alpha,c)$-\anonymous\ by
  reducing~it to a ``weak-approximation'' fixed point problem \cite{FIXP}
  (see \cite{FIXP} for the difference between \emph{weak} 
  and \emph{strong} approximations).
Given $\calG=(n,\alpha,\{\pay_p\})$, we define
  a map $F:\Delta\rightarrow \Delta$ (this is the map commonly used 
  to prove the existence of Nash equilibria, e.g., see \cite{NASH51}), where
$$
\Delta=\Big\{(\xx_i:i\in [n]): \xx_i\in \mathbb{R}_+^\alpha\ 
  \text{is a mixed strategy of player $i\in [n]$}\Big\}
$$
is the set of all mixed strategy profiles.
For each $i \in [n]$ and $j \in [\alpha]$, the $(i,j)$th component of $F$ 
\begin{equation}\label{defF}
F_{i,j} (\bxx) = \frac{x_{i,j} + \max{(0, u_i(j,\bxx) - u_i(\bxx))}}{1 + \sum_{k \in [\alpha]}\max{(0, u_i(k,\bxx) - u_i(\bxx))}},
\end{equation}
where $\calX=(\xx_i:i\in [n])\in \Delta$ and $\xx_i=(x_{i,1},\ldots,x_{i,\alpha})$
  for each $i\in [n]$.

Observe that $F$ is continuous and maps $\Delta$ to itself. 
We also have 

\begin{property}\label{poly-comp}
The map $F$ defined above is \emph{polynomial-time computable}: 
Given a rational $\bxx\in \Delta$, $F(\bxx)$ is rational and
  can be computed in polynomial time in $\size(\calG)$ and $\size(\bxx)$.
\end{property}

\begin{proof}
This follows from the fact that there is a polynomial-time 
  dynamic programming algorithm (see \cite{dask14}) that computes $u_i(j,\bxx)$,
  given $\calG$ and $\calX$.
\end{proof}

We say $\bxx\in \Delta$ is an \emph{$\epsilon$-approximate fixed point} of $F$ if 
  $\|F(\bxx)-\bxx\|_{\infty}\le \epsilon$.
We prove Lemma \ref{lem:fixedpoint} in Section \ref{sec:fixedpoint},
  showing that approximate fixed points
  of $F$ are approximate Nash equilibria of $\calG$.
  
\begin{lemma}\label{lem:fixedpoint}
Given $\bxx \in \Delta$ and $0 \leq \epsilon \le 1$, if $\|F(\bxx) - \bxx\|_\infty \leq \epsilon$, then
  we have $u_i(j,\bxx) \leq u_i(\bxx) + \epsilon'$ for all players
  $i \in [n]$ and pure strategies $j \in [\alpha]$, where $\epsilon' = \alpha^2\epsilon^{1/3}$.
\end{lemma}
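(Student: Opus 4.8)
The plan is to show that if $\bxx$ is a $\epsilon$-approximate fixed point of the standard Nash map $F$, then no player can gain more than $\epsilon' = \alpha^2\epsilon^{1/3}$ by deviating. I would follow the classical argument that $F$'s fixed points are exactly Nash equilibria, but track the error quantitatively. Fix a player $i$ and write $g_k = \max(0, u_i(k,\bxx) - u_i(\bxx))$ for $k\in[\alpha]$, and let $G = \sum_{k\in[\alpha]} g_k$ denote the common denominator in (\ref{defF}). The fixed-point condition $\|F(\bxx)-\bxx\|_\infty\le\epsilon$ says that for every $i,j$,
$$
\left|\frac{x_{i,j} + g_j}{1+G} - x_{i,j}\right| = \frac{|g_j - x_{i,j} G|}{1+G}\le \epsilon,
$$
so $|g_j - x_{i,j}G|\le \epsilon(1+G)$ for all $j$.

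\textbf{Bounding $G$.} The key first step is to show $G$ cannot be large. Summing the inequality $g_j \le x_{i,j}G + \epsilon(1+G)$ over $j\in[\alpha]$ and using $\sum_j x_{i,j}=1$ gives $G \le G + \alpha\epsilon(1+G)$, which is vacuous, so instead I multiply the defining relation by $u_i(j,\bxx)-u_i(\bxx)$ and sum, exploiting the identity $\sum_j x_{i,j}(u_i(j,\bxx)-u_i(\bxx)) = u_i(\bxx)-u_i(\bxx)=0$. Concretely, $\sum_j g_j (u_i(j,\bxx)-u_i(\bxx)) = \sum_j g_j^2$ since $g_j$ is the positive part. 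On the other hand $\sum_j x_{i,j}(u_i(j,\bxx)-u_i(\bxx)) = 0$, so $\sum_j (g_j - x_{i,j}G)(u_i(j,\bxx)-u_i(\bxx)) = \sum_j g_j^2$. Since payoffs lie in $[0,1]$ we have $|u_i(j,\bxx)-u_i(\bxx)|\le 1$, hence $\sum_j g_j^2 \le \sum_j |g_j - x_{i,j}G| \le \alpha\epsilon(1+G)$. Combined with $G = \sum_j g_j \le \sqrt{\alpha}\,(\sum_j g_j^2)^{1/2} \le \sqrt{\alpha}\sqrt{\alpha\epsilon(1+G)} = \alpha\sqrt{\epsilon(1+G)}$, squaring yields $G^2 \le \alpha^2\epsilon(1+G)$, from which $G \le 2\alpha^2\epsilon$ whenever $\alpha^2\epsilon\le 1$ (and otherwise the conclusion $\epsilon'=\alpha^2\epsilon^{1/3}\ge 1$ is trivial since all payoffs are in $[0,1]$). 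Actually, to land exactly the stated $\epsilon^{1/3}$ rate, I expect the clean route is: $\sum_j g_j^2 \le \alpha\epsilon(1+G) \le 2\alpha\epsilon$, so each $g_j \le \sqrt{2\alpha\epsilon}$, and then separately bound $G$ — I'll tune the constants so the final bound is $\alpha^2\epsilon^{1/3}$.

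\textbf{From small $G$ to approximate equilibrium.} Once $G$ is small, say $G\le C$, the inequality $|g_j - x_{i,j}G|\le\epsilon(1+G)$ gives $g_j \le x_{i,j}C + \epsilon(1+C)$, so $u_i(j,\bxx) - u_i(\bxx) \le g_j \le C + \epsilon(1+C)$, which is the desired bound $\epsilon'$ after substituting the bound on $C$ and simplifying. The $\epsilon^{1/3}$ (rather than $\epsilon$) appears because the self-consistent bound on $G$ degrades the rate: from $G^2 \lesssim \alpha^2\epsilon G$ in the regime $G\ge 1$ one only gets $G\lesssim \alpha^2\epsilon$, but the genuinely lossy step is bounding individual $g_j$ via $\sum g_j^2$, which is where a square root enters, and then one more application to control $G$ itself contributes the cube-root overall. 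I would organize the write-up to make each lossy inequality explicit and then verify at the end that the accumulated loss is at most $\alpha^2\epsilon^{1/3}$.

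\textbf{Main obstacle.} The main difficulty is purely bookkeeping: getting the constants and exponents to line up so the final error is exactly $\alpha^2\epsilon^{1/3}$ and not merely $O(\text{poly}(\alpha)\,\epsilon^{1/3})$ or $\alpha^2\sqrt{\epsilon}$. One must be careful to split into the two regimes ($G$ small versus $G$ possibly of order $1$), to use $\|x_i\|_1=1$ and $\|x_i\|_2\le 1$ at the right places, and to invoke $u_i(j,\bxx),u_i(\bxx)\in[0,1]$ to make $\epsilon'\ge 1$ trivially true when the arithmetic would otherwise fail. No deep idea is needed beyond the classical fixed-point-equals-equilibrium argument made quantitative.
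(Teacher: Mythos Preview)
Your core argument is correct but takes a genuinely different route from the paper. The paper argues by contradiction: assuming some $u_i(\ell,\bxx)>u_i(\bxx)+\epsilon'$, it splits on whether $x_{i,\ell}\le\alpha\epsilon^{1/3}$ or $x_{i,\ell}>\alpha\epsilon^{1/3}$; in the first case it shows directly that $F_{i,\ell}(\bxx)-x_{i,\ell}>\epsilon$, and in the second it uses $\sum_j x_{i,j}(u_i(j,\bxx)-u_i(\bxx))=0$ to locate a strategy $j$ with $u_i(j,\bxx)\le u_i(\bxx)$ and $x_{i,j}$ not too small, then shows $|F_{i,j}(\bxx)-x_{i,j}|>\epsilon$. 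Your approach instead bounds $G=\sum_k g_k$ directly via the identity $\sum_j g_j^2=\sum_j(g_j-x_{i,j}G)(u_i(j,\bxx)-u_i(\bxx))$ together with Cauchy--Schwarz, obtaining $G^2\le\alpha^2\epsilon(1+G)$ and hence $G=O(\alpha\sqrt{\epsilon})$; since each $g_j\le G$, this finishes immediately. Your route is cleaner and in fact yields the stronger rate $O(\alpha\sqrt{\epsilon})$ rather than $\alpha^2\epsilon^{1/3}$ --- your own write-up flags this at one point (``each $g_j\le\sqrt{2\alpha\epsilon}$'') but then gets tangled trying to force an $\epsilon^{1/3}$; you should simply note that $\alpha\sqrt{\epsilon}\le\alpha^2\epsilon^{1/3}$ in the nontrivial regime $\alpha^2\epsilon^{1/3}\le 1$ (otherwise the conclusion is vacuous since payoffs lie in $[0,1]$), so your bound implies the stated one. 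The paper's case split is more elementary but loses a factor in the exponent; your $\ell^2$/Cauchy--Schwarz argument is tighter and avoids the case analysis entirely.
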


So to find an $\epsilon$-approximate Nash equilibrium $\calX$ of $\calG$,
  it suffices to find an $(\epsilon^3/\alpha^6)$-approximate fixed point of $F$.
Moreover, we show in Section \ref{sec:Lipschitz} that $F$ is \emph{polynomially Lipschitz continuous}:

\begin{lemma}\label{poly-cont}
For all $\bxx , \byy\in \Delta$, we have
$$ \|F(\bxx) - F(\byy)\|_\infty \leq 10\hspace{0.03cm}n 
  \hspace{0.03cm}\alpha^{n+2}\cdot \| \bxx - \byy \|_\infty. $$
\end{lemma}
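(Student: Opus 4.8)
The plan is to bound the Lipschitz constant of $F$ by tracking how a perturbation of $\bxx$ propagates through the two sources of nonlinearity in the definition (\ref{defF}): the expected payoffs $u_i(j,\bxx)$ and $u_i(\bxx)$, and then the quotient. First I would establish a Lipschitz bound on the expected payoffs themselves. Since $u_i(j,\bxx)=\sum_{\ss_{-i}} \pay_i(s_j,\Psi(\ss_{-i}))\prod_{q\ne i} x_{q,s_q}$ with all payoffs in $[0,1]$, the standard telescoping argument (used already in the proof of Lemma \ref{poly:equiv}) gives
$$
\big|u_i(j,\bxx)-u_i(j,\byy)\big| \le \sum_{\ss_{-i}\in[\alpha]^{n-1}} \Big|\prod_{q\ne i} x_{q,s_q}-\prod_{q\ne i} y_{q,s_q}\Big| \le (n-1)\hspace{0.03cm}\alpha^{n-1}\cdot \|\bxx-\byy\|_\infty,
$$
by writing the difference of products as a telescoping sum of $n-1$ terms, each a single-coordinate difference bounded by $\|\bxx-\byy\|_\infty$ times a product of at most $n-2$ probabilities (bounded by $1$), and then summing over the $\alpha^{n-1}$ profiles $\ss_{-i}$. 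Consequently $u_i(\bxx)=\sum_b x_{i,b}\hspace{0.02cm}u_i(b,\bxx)$ is Lipschitz with a comparable constant: perturbing the mixing weights $x_{i,b}$ contributes at most $\alpha\|\bxx-\byy\|_\infty$ (payoffs in $[0,1]$) and perturbing the $u_i(b,\byy)$ inside contributes at most $(n-1)\alpha^{n-1}\|\bxx-\byy\|_\infty$, so overall each of $u_i(j,\cdot)$ and $u_i(\cdot)$ is $L$-Lipschitz in $\|\cdot\|_\infty$ with $L = O(n\hspace{0.02cm}\alpha^{n})$.

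Next I would feed this into the quotient in (\ref{defF}). Write $g_{i,j}(\bxx)=\max(0,u_i(j,\bxx)-u_i(\bxx))$, the numerator increment, and $h_i(\bxx)=1+\sum_k g_{i,k}(\bxx)$, the denominator. Since $\max(0,\cdot)$ is $1$-Lipschitz and $u_i(j,\cdot)-u_i(\cdot)$ is $2L$-Lipschitz, each $g_{i,j}$ is $2L$-Lipschitz, hence $h_i$ is $2\alpha L$-Lipschitz; moreover $h_i\ge 1$ everywhere, and the numerator $x_{i,j}+g_{i,j}(\bxx)$ lies in $[0, 1+\alpha]$ (since $g_{i,j}\le u_i(j,\bxx)\le 1$). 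Now $F_{i,j}=(x_{i,j}+g_{i,j})/h_i$ and the standard estimate $|a/b - a'/b'| \le |a-a'|/|b| + |a'|\,|b-b'|/(|b||b'|)$ with $b,b'\ge 1$ gives
$$
\big|F_{i,j}(\bxx)-F_{i,j}(\byy)\big| \le \big|x_{i,j}-y_{i,j}\big| + \big|g_{i,j}(\bxx)-g_{i,j}(\byy)\big| + (1+\alpha)\cdot\big|h_i(\bxx)-h_i(\byy)\big|,
$$
which is at most $\big(1 + 2L + (1+\alpha)\cdot 2\alpha L\big)\|\bxx-\byy\|_\infty = O(\alpha^2 L)\|\bxx-\byy\|_\infty = O(n\hspace{0.02cm}\alpha^{n+2})\|\bxx-\byy\|_\infty$. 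Taking the max over $i,j$ gives the claimed bound $\|F(\bxx)-F(\byy)\|_\infty \le 10\hspace{0.02cm}n\hspace{0.02cm}\alpha^{n+2}\|\bxx-\byy\|_\infty$, after checking the constant absorbed in $O(\cdot)$ is at most $10$ for the given normalization (payoffs in $[0,1]$); I would just be slightly careful with the small additive constants so the final coefficient comes out $\le 10$.

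The only mildly delicate point—and the step I'd expect to need the most care—is the telescoping bound on $u_i(j,\cdot)$: one must be sure that after fixing an index $q$ at which to swap $x_{q,s_q}$ for $y_{q,s_q}$, the remaining product $\prod_{p\notin\{i,q\}}(\cdots)$ is over genuine probabilities in $[0,1]$ so it is bounded by $1$, and that summing the single-coordinate differences over all $\ss_{-i}\in[\alpha]^{n-1}$ yields exactly the factor $(n-1)\alpha^{n-1}$ rather than something larger. Everything else—the Lipschitz bound on $\max(0,\cdot)$, the quotient estimate, and the propagation through $g_{i,j}$ and $h_i$—is routine. One could also streamline the argument by invoking the same computation already carried out in the proof of Lemma \ref{poly:equiv} (where $|u_i(j,\byy)-u_i(j,\bxx)|\le 2(n-1)\alpha\zeta$ was derived for coordinate-wise perturbations of size $\zeta$), adapting it to $\|\cdot\|_\infty$-perturbations, and then only the quotient step remains.
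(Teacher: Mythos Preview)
Your proposal is correct and follows essentially the same route as the paper: bound $|u_i(j,\bxx)-u_i(j,\byy)|$ and $|u_i(\bxx)-u_i(\byy)|$ via the telescoping product inequality (the paper's ``trick'' $|a_1\cdots a_n-b_1\cdots b_n|\le\sum|a_i-b_i|$), then propagate through the quotient in (\ref{defF}) using that the denominator is at least $1$. The only cosmetic differences are that the paper first passes to $\|\cdot\|_1$ and expands the quotient via a common denominator into four terms, whereas you work directly in $\|\cdot\|_\infty$ and use the cleaner $|a/b-a'/b'|$ estimate; both yield the same constant up to the stated slack.
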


Combining Property \ref{poly-comp} and Lemma \ref{poly-cont},
  it follows from Proposition 2.2 (Part 2) of \cite{FIXP} that
  given $\calG$ and $\epsilon$ (in binary), the problem of finding an 
  $\epsilon$-approximate fixed point $\bxx$ of $F$ is in PPAD. 
The PPAD membership of $(\alpha,c)$-\anonymous\ 
  then follows from Lemma \ref{lem:fixedpoint}.

\subsection{Proof of Lemma \ref{lem:fixedpoint}}\label{sec:fixedpoint}

For convenience, we write $\max_{i,k}(\bxx) = \max{(0, u_i(k,\bxx) - u_i(\bxx))}$
  for $i\in [n]$ and $k\in [\alpha]$.

In the pursuit of a contradiction, assume that there exist a player $i\in [n]$ 
  and an action $\ell\in [\alpha]$ such that $u_i(\ell,\bxx)> u_i(\bxx) + \epsilon'$. This, along with the fact that $\max_{i,k}(\bxx) \in [0,1]$, implies that,
\begin{eqnarray}\label{sum:ineq}
&\epsilon' < \sum_{k \in [\alpha]} \max_{i,k}(\bxx) \leq \alpha-1. &
\end{eqnarray}
  
We will show that cases $x_{i,\ell} \leq \alpha\hspace{0.02cm}\epsilon^{1/3}$ 
  and $x_{i,\ell} >\alpha\hspace{0.02cm}\epsilon^{1/3}$ both result 
  in the existence of a strategy $j \in [\alpha]$ such that 
  $|F_{i,j}(\bxx) - x_{i,j}| > \epsilon$, contradicting our 
  initial assumption.\vspace{0.07cm}
\begin{flushleft}\begin{enumerate} 
\item[] {Case 1}: $x_{i,\ell} \leq \alpha\hspace{0.02cm}\epsilon^{1/3}$. Apply (\ref{defF}),
  (\ref{sum:ineq}) and $\epsilon'=\alpha^2\epsilon^{1/3}$ to get
$$
F_{i,\ell}(\bxx) > \frac{x_{i,\ell} + \epsilon'}{\alpha}\ \Rightarrow\ 
F_{i,\ell}(\bxx) - x_{i,\ell} > \frac{\epsilon' - (\alpha - 1) x_{i,\ell}}{\alpha} \geq \frac{\epsilon' - (\alpha - 1) \alpha\hspace{0.02cm}\epsilon^{1/3}}{\alpha} = \epsilon^{1/3}\ge \epsilon. 
$$ 

\item[] {Case 2}: $x_{i,\ell} > \alpha\hspace{0.02cm}\epsilon^{1/3}$.
Let $J = \{j\in [\alpha]:u_i(j,\bxx) \leq u_i(\bxx)\}$. We must have 
\begin{eqnarray*}
&\sum_{j \in J} x_{i,j}\big(u_i(\bxx) - u_i(j,\bxx)\big) \geq 
  x_{i,\ell}\big(u_i(\ell,\bxx) - u_i(\bxx)\big),&
\end{eqnarray*}
where $u_i(\bxx) - u_i(j,\bxx) \leq 1 - \epsilon'$, $u_i(\ell,\bxx) - u_i(\bxx) \geq \epsilon'$, and $x_{i,\ell} > \alpha\hspace{0.02cm}\epsilon^{1/3}$.
Therefore, 
\begin{eqnarray*}
&\sum_{j \in J} x_{i,j}&\hspace{-0.22cm} \geq  
\frac{\alpha\hspace{0.02cm}\epsilon'\epsilon^{1/3}}{1 - \epsilon'},
\end{eqnarray*} 
which implies that there exists some strategy $j \in J$ such that $x_{i,j} \geq 
  \epsilon'\epsilon^{1/3}/(1 - \epsilon')$. Apply (\ref{defF}) and 
  (\ref{sum:ineq}) to get  $F_{i,j}(\bxx) < x_{i,j}/(1 + \epsilon')$, which
  implies that $$\big|F_{i,j}(\bxx) - x_{i,j}\big| > \frac{\epsilon' x_{i,j}}{1 + \epsilon'} \geq \frac{(\epsilon')^2\epsilon^{1/3}}{ (1 - \epsilon')(1 + \epsilon')} \geq \alpha^4 
  \epsilon\geq \epsilon.$$
\end{enumerate}\end{flushleft}

This finishes the proof of Lemma \ref{lem:fixedpoint}.

\subsection{Proof of Lemma \ref{poly-cont}}\label{sec:Lipschitz}

As $\bxx - \byy$ is of length $n\hspace{0.02cm}\alpha$, 
  we have $\|\bxx-\byy\|_1\le n\hspace{0.02cm}\alpha\cdot \|\bxx-\byy\|_\infty$.
Thus, it suffices to show that  
$$\| F(\bxx) - F(\byy)\|_\infty \leq 16\hspace{0.02cm}\alpha^{n+1}\| \bxx - \byy\|_1.$$

Fix $i\in [n]$ and $j\in [\alpha]$. We have\vspace{0.04cm}
$$ 
\big|\hspace{0.02cm} F_{i,j}(\bxx) - F_{i,j}(\byy) \hspace{0.02cm}\big| = 
\left|\hspace{0.06cm} \frac{x_{i,j} + \max_{i,j}(\bxx)}{1 + \sum_{k \in [\alpha]}\max_{i,k}(\bxx)} - \frac{y_{i,j} + \max_{i,j}(\byy)}{1 + \sum_{k \in [\alpha]}\max_{i,k}(\byy)} \hspace{0.06cm}\right|.
\vspace{0.04cm} 
$$ 
Multiplying the terms in the RHS to get a common denominator, which is clearly $ \geq 1$, we get
\begin{align}\label{main-bound}
\big|\hspace{0.02cm} F_{i,j}(\bxx) - F_{i,j}(\byy) \hspace{0.02cm}\big| &\leq \big|x_{i,j} - y_{i,j}\big| + \Bigg|\hspace{0.04cm} x_{i,j} \sum_{k \in [\alpha]}\max_{i,k}(\byy) - y_{i,j} \sum_{k \in [\alpha]}\max_{i,k}(\bxx)\hspace{0.04cm} \Bigg| \\[0.5ex] &
\ \ \ + \Big| \max_{i,j}(\bxx) - \max_{i,j}(\byy)\hspace{0.02cm} \Big| + \Bigg|\hspace{0.01cm} \max_{i,j}(\bxx) \sum_{k \in [\alpha]}\max_{i,k}(\byy) - \max_{i,j}(\byy) \sum_{k \in [\alpha]}\max_{i,k}(\bxx) \hspace{0.04cm}\Bigg|.
\nonumber
\end{align}

To bound $| F_{i,j}(\bxx) - F_{i,j}(\byy)|$, we shall use the following 
  simple trick several times in the rest of the proof.
If $a_1,a_2,b_1,b_2 \in [0,1]$, then we have 
$$|\hspace{0.02cm}a_1a_2 - b_1b_2\hspace{0.02cm}| = |\hspace{0.02cm}(a_1 - b_1)a_2 + b_1(a_2 - b_2)\hspace{0.02cm}| \leq |\hspace{0.02cm}a_1 - b_1\hspace{0.02cm}| + |\hspace{0.02cm}a_2 - b_2\hspace{0.02cm}|, $$ which easily extends to $$ |\hspace{0.02cm}a_1 \cdots a_n - b_1 \cdots b_n\hspace{0.02cm}| \leq |\hspace{0.02cm}a_1 - b_1\hspace{0.02cm}| + \cdots + |\hspace{0.02cm}a_n - b_n\hspace{0.02cm}|, $$ when all the $a_i$'s and $b_i$'s are in $[0,1]$.

Now we come back to (\ref{main-bound}).
By the definition of $\max_{i,j}(\bxx)$, we have
\begin{align*}  
\Big| \max_{i,j}(\bxx) - \max_{i,j}(\byy)\hspace{0.02cm} \Big| &\leq \big|(u_i(j,\bxx) - u_i(\bxx)) - (u_i(j,\byy) - u_i(\byy))\big|\\ &\leq \big|u_i(j,\bxx) - u_i(j,\byy)\big| + \big|u_i(\bxx) - u_i(\byy)\big|.
\end{align*}
As $\bxx,\byy\in \Delta$ we have $x_{i,j},y_{i,j} \in [0,1]$. 
Since all payoffs of $\calG$ are in $[0,1]$, we have $u_i(j,\bxx),u_i(j,\byy)$, $u_i(\bxx),u_i(\byy) \in [0,1]$ for all $i,j$, which in turn implies that $\max_{i,j}(\bxx),\max_{i,j}(\byy) \in [0,1]$. 

Using these properties above, along with the trick, we can conclude 
\begin{align*}
\Bigg|\hspace{0.04cm} x_{i,j} \sum_{k \in [\alpha]}\max_{i,k}(\byy) &- y_{i,j} \sum_{k \in [\alpha]}\max_{i,k}(\bxx)\hspace{0.04cm} \Bigg| \leq \sum_{k \in [\alpha]}\Big|\hspace{0.03cm}x_{i,j}\cdot \max_{i,k}(\byy) - y_{i,j}\cdot \max_{i,k}(\bxx)\hspace{0.03cm}\Big| \\[0.8ex]
&\leq \sum_{k\in [\alpha]}\Big(|x_{i,j} - y_{i,j}| + |u_i(k,\bxx) - u_i(k,\byy)| +|u_i(\bxx) - u_i(\byy)|\Big).
\end{align*}
Similarly, we also have 
\begin{align*}
&\Bigg|\max_{i,j}(\bxx) \sum_{k \in [\alpha]}\max_{i,k}(\byy) - \max_{i,j}(\byy) \sum_{k \in [\alpha]}\max_{i,k}(\bxx)\hspace{0.04cm} \Bigg|\\[0.8ex] &\ \ \ \ \ \ \ \ \ \ \ \ \ \ \leq \sum_{k\in [\alpha]}\Big(|u_i(j,\bxx) - u_i(j,\byy)| + 2|u_i(\bxx) - u_i(\byy)| + |u_i(k,\bxx) - u_i(k,\byy)| \Big).
\end{align*}

Plugging all these back into (\ref{main-bound}), we have
\begin{align*}
\big| F_{i,j}(\bxx) - F_{i,j}(\byy) \big|&\leq (1 + \alpha)\cdot |\hspace{0.02cm}x_{i,j}-y_{i,j}\hspace{0.02cm}| + 
(1+3\alpha)\cdot |\hspace{0.02cm}u_i(\bxx) - u_i(\byy)\hspace{0.02cm}|\\&\ \ \ \ \ \ + (1+\alpha)\cdot|\hspace{0.02cm}u_i(j,\bxx) - u_i(j,\byy)\hspace{0.02cm}|+ 
2\cdot \sum_{k \in [\alpha]}|\hspace{0.02cm}u_i(k,\bxx) - u_i(k,\byy)\hspace{0.02cm}|.
\end{align*}

Finally, we bound $|u_i(k,\bxx) - u_i(k,\byy)|$ in terms of $\|\bxx - \byy\|_1$. Let $S$ be the set of pure strategy profiles. Then, by applying the trick and the fact that all payoffs are in $[0,1]$, it follows that \vspace{0.08cm}
\begin{align*}
|\hspace{0.02cm}u_i(k,\bxx) - u_i(k,\byy)\hspace{0.02cm}| &\leq \sum_{\ss \in S_{-i}} 
\Bigg|\prod_{q \neq i} x_{q,s_q} - \prod_{q \neq i} y_{q,s_q}\Bigg| 
\leq \sum_{\ss \in S_{-i}} \sum_{q \neq i} |\hspace{0.02cm}x_{q,s_q} - y_{q,s_q}\hspace{0.02cm}| \leq \alpha^{n-1}\| \bxx - \byy\|_1 
\\[1ex] |\hspace{0.02cm} u_i(\bxx) - u_i(\byy)\hspace{0.02cm} | &\leq \sum_{\ss \in S} \Bigg|\prod_{q\in [n]} x_{q,s_q} - \prod_{q \in [n]} y_{q,s_q}\Bigg| \leq \sum_{\ss \in S} \sum_{q \in [n]} |\hspace{0.02cm}x_{q,s_q} - y_{q,s_q}\hspace{0.02cm}| \leq \alpha^n \| \bxx - \byy \|_1.\\[-2.7ex]
\end{align*}

Applying these inequalities, along with $|\hspace{0.02cm}x_{i,j} - y_{i,j}\hspace{0.02cm}| 
  \leq \|\bxx - \byy\|_1$, we get 
$$ \big|F_{i,j}(\bxx) - F_{i,j}(\byy)\big| \leq 10\hspace{0.02cm}
  \alpha^{n+1}\cdot \| \bxx - \byy \|_1.$$
This finishes the proof Lemma \ref{poly-cont}.

\section{Open Problems}

Can the number of strategies be further reduced from seven in our PPAD-hardness result? Specifically, could we construct an anonymous game similar to the radix game $\calG_{n,N}$, particularly its set of approximate Nash equilibria after perturbation, but without the four special (auxiliary) pure~stra\-tegies $\{q_1,q_2,r_1,r_2\}$? While we believe this to be possible, constructing such a game can be highly non-trivial and would require specifying different payoffs for many of the possible outcomes seen by each player. Accordingly, proving a result similar to Lemma \ref{lem:perturb} after duplicating the first strategy would be even more difficult.

However, even the construction of such a game would only reduce the number of 
  strategies~used in the hardness proof down to three
  (due to the strategy duplication in the generalized radix game later), 
  leading to the next open question: 
Is there an FPTAS for two-strategy anonymous games? As~was posited by Daskalakis and Papadimitriou, it remains unclear whether a rational two-strategy anonymous game always has a rational Nash equilibrium. Additionally, in their sequence of paper's proving a PTAS for a bounded number of strategies, Daskalakis and Papadimitriou found that the form of the $\Pr_\bxx[p,\kk]$ is significantly simpler for two-strategy anonymous games. Correspondingly, we found that constructing useful gadgets for reductions with just two strategies to be very difficult, suggesting that an FPTAS for two-strategy anonymous games is certainly a possibility.

Moreover, could there be an FPTAS for anonymous games with any bounded number of~pure strategies? There is no clear way to strengthen our current construction to obtain a PPAD-hardness result for ${1}/{\text{poly}(n)}$-approximate Nash equilibrium. In order for the estimation lemma to hold, we need $x_{i,1} + x_{i,2} \approx \delta^i$ for all $i$. So even if we set $N = 2$, ensuring that $x_{i,1} + x_{i,2} = \delta^i \pm O(1/\text{poly}(n))$ would still not be sufficient for the estimation lemma to hold. Accordingly, in order to modify our construction to get such a hardness result, we would need to construct an anonymous game, which contains $n$ players with the same properties as the main players in the generalized radix game, but with the additional property that $O(1/\text{poly}(n))$ shifts in the payoffs would only cause $O(1/2^{\text{poly}(n)})$ shifts in $x_{i,1} + x_{i,2}$, which seems incredibly unlikely. 

\begin{flushleft}
\bibliography{citations,Reference}
\bibliographystyle{alpha}
\end{flushleft}
\end{document}